\documentclass[11pt,reqno]{amsart}
\textheight 23truecm \textwidth 17truecm 
\setlength{\topmargin}{-1.0cm} 
\setlength{\oddsidemargin}{-0.5 cm}
\setlength{\evensidemargin}{-0.5cm} 
\pagestyle{plain}
\allowdisplaybreaks[4]
\usepackage{graphicx} 
\usepackage{epsfig}
\usepackage{amssymb}
\usepackage{amsmath}
\usepackage{cite}
\usepackage{braket}
\makeatletter

\newcommand{\Rmnum}[1]{\expandafter\@slowromancap\romannumeral #1@}
\makeatother
\newtheorem{theorem}{Theorem}

\newtheorem{proposition}[theorem]{Proposition}
\newtheorem{corollary}[theorem]{Corollary}

\newtheorem{lemma}[theorem]{Lemma}
\newtheorem{remark}[theorem]{Remark}
\newtheorem{example}{Example}
\newcommand{\be}{\begin{equation}}
\newcommand{\ee}{\end{equation}}
\newcommand{\al}{\alpha}

\newcommand{\La}{\Lambda}
\newcommand{\la}{\lambda}

\newcommand{\bft}{\mathbf{t}}
\newcommand{\bfx}{\mathbf{x}}
\newcommand{\bfy}{\mathbf{y}}
\newcommand{\e}{\mathrm{e}}
\newcommand{\Res}{\mathrm{Res}}

\newcommand{\Z}{\mathbb{Z}}
\newcommand{\C}{\mathbb{C}}
\newcommand{\F}{\mathcal{F}}

\newcommand{\p}{\partial}

\newcommand{\LL}{\mathcal{L}}
\newcommand{\fL}{\mathfrak{L}}

\begin{document}

\title{One reduction of the modified Toda hierarchy}
\author{Jinbiao Wang$^1$, Wenchuang Guan$^1$, Mengyao Chen$^1$, Jipeng Cheng$^{1,2 *}$}
\dedicatory {$^{1}$School of Mathematics, China University of Mining and Technology, Xuzhou, Jiangsu 221116,  China\\
$^{2}$ Jiangsu Center for Applied Mathematics (CUMT), \ Xuzhou, Jiangsu 221116, China}
\thanks{*Corresponding author. Email: chengjp@cumt.edu.cn,~ chengjipeng1983@163.com.}

\begin{abstract}
    The modified Toda (mToda) hierarchy is a two-component generalization of the 1-st modified KP (mKP) hierarchy, which connects the Toda hierarchy via Miura links and has two tau functions. Based on the fact that the mToda and 1-st mKP hierarchies share the same fermionic form, we firstly construct the reduction of the mToda hierarchy $L_1(n)^M=L_2(n)^N+\sum_{l\in\Z}\sum_{i=1}^{m}q_{i,n}\La^lr_{i,n+1}\Delta$ and $(L_1(n)^M+L_2(n)^N)(1)=0$, called the generalized bigraded modified Toda hierarchy, which can be viewed as a new two-component generalization of the constrained mKP hierarchy $\fL^k=(\fL^k)_{\geq 1}+\sum_{i=1}^m \mathfrak{q}_i\p^{-1}\mathfrak{r}_i\p$. Next the relation with the Toda reduction $\LL_1(n)^M=\LL_2(n)^{N}+\sum_{l\in \mathbb Z}\sum_{i=1}^{m}\tilde{q}_{i,n}\Lambda^l\tilde{r}_{i,n}$ is discussed. Finally we give equivalent formulations of the Toda and mToda reductions in terms of tau functions.\\
    \textbf{Keywords}: the modified Toda hierarchy; the constrained modified KP hierarchy; tau functions; Lax equations.\\
    {\bf MSC 2020}: 35Q53, 37K10, 35Q51 \\
    {\bf PACS}: 02.30.Ik
\end{abstract}

\maketitle

\section{Introduction}

The KP theory \cite{DJKM1983,Mulase1994,vanMoerbeke1994,JimboMiwa1983,JM2000} plays an important role in mathematical and theoretical physics, such as differential equations, Lie algebras, enumerative geometry and quantum gravity (see, e.g., \cite{JimboMiwa1983,JM2000,ABDKS20252,AA2015',Harnad2021,Kac2003,Kac2023}). In particular, multicomponent generalizations and reductions are crucial in the applications of the KP theory \cite{Kac1998,Kac2003,Kac2023,TakebeZabrodin2025}. Here we would like to discuss the reduction of the modified Toda (mToda) hierarchy\cite{RuiCheng2024}, which is the two-component generalization of the 1-st mKP hierarchy and connects with the Toda hierarchy \cite{Takasaki2018} (i.e., the two-component generalization of the KP hierarchy) by Miura links\cite{RuiCheng2024}.

The mToda hierarchy \cite{RuiCheng2024} is given by the Lax operators below
\[
    L_1(n,\bft,\La)=u(n,\bft)\La+\sum_{j\geq0}u_j(n,\bft)\La^{-j},\qquad L_2(n,\bft,\La)=\bar{u}(n,\bft)\La^{-1}+\sum_{j\geq0}\bar{u}_j(n,\bft)\La^j,
\]
satisfying the Lax equations 
\begin{equation*}
    \p_{t_k^{(1)}}L_a=[(L_1^k)_{\Delta,\geq1},L_a],\qquad \p_{t_k^{(2)}}L_a=[(L_2^k)_{\Delta^*,\geq1},L_a],\quad a=1,2,
\end{equation*} 
where $\bft=(\bft^{(1)},\bft^{(2)})$ and $\bft^{(a)}=(t^{(a)}_1,t^{(a)}_2,\cdots)$. Here $\La$ is the shift operator defined by $\La(f(n))=f(n+1)$, $\Delta=\La-1$ and $\Delta^*=\La^{-1}-1$ (see Section \ref{Sect:BieqtoLax} for more details). The mToda hierarchy contains the mToda lattice equation \cite{Hirota2004,Geng2003,RuiCheng2024}
\[
    \p_x\p_y\varphi(n)+\big(\e^{\varphi(n+1)-\varphi(n)}-\e^{\varphi(n)-\varphi(n-1)}\big)\p_x\varphi(n)=0,
\]
which is the reason why it is called the mToda hierarchy. Compared with the Toda hierarchy, there exist two tau functions $\tau_0(n,\bft)$ and $\tau_1(n,\bft)$ for the mToda hierarchy \cite{RuiCheng2024}. In the mToda hierarchy, it is more convenient to discuss the reductions involving adjoint operation $*$ defined by $(\sum_i a_i\La^i)^*=\sum_i\La^{-i}a_i$. For example, the B-Toda and C-Toda hierarchies are defined as the sub-hierarchies of the mToda hierarchy by requiring the constraints on the mToda Lax operators $L_1$ and $L_2$ such that
\begin{itemize}
    \item B-Toda\cite{Krichever2023,Guan2025}: $L_1^*\cdot (\La-\La^{-1})=(\La-\La^{-1})\cdot L_2$;
    \item C-Toda\cite{Krichever2022}: $L_1^*=L_2$.
\end{itemize}
Another important reduction of the mToda hierarchy is the bigraded mToda hierarchy \cite{Yang2024,Kac2023},
\begin{equation}
    L_1^M=L_2^N,\qquad (L_1^M+L_2^N)(1)=0,\qquad M,N\geq1,\label{eq:constmBTH}
\end{equation}
which corresponds to the loop algebra $sl_{M+N}[\la,\la^{-1}]$. In this paper, In this paper, we further generalize the bigraded mToda hierarchy by proposing a new reduction.

Note that the  mToda and mKP hierarchies share the same fermionic form \cite{Kac2018,RuiCheng2024,Yang2024}
\begin{equation}\label{eq:fermimKP}
    S_0(\tau_0\otimes\tau_1)=\tau_1\otimes\tau_0, \quad \tau_0\in\F_l,~\tau_1\in\F_{l+1},
\end{equation}
where $S_k=\sum_{j\in\Z+1/2}\psi^+_{j}\otimes\psi^-_{-j+k}$ with the charged free fermions $\psi^\pm_{j}~(j\in\Z+1/2)$ satisfying $\psi^\la_i\psi^\mu_j+\psi^\mu_j\psi^\la_i=\delta_{\la,-\mu}\delta_{i+j,0}$ for $\la,\mu=\pm$, and $\F_l$ is the subspace of Fock space $\F$ with the charge $l$ (please see Section \ref{Sect:BieqtoLax} for more details). On the other hand, the  $(k,m)$-constrained mKP (cmKP) hierarchy \cite{Cheng2018,WuCheng2022,Oevel1998} is given by the following constraint on the mKP Lax operator $\fL=\p+\sum_{j\geq0}v_j\p^{-j}$,
\begin{equation}\label{eq:constKP}
    \fL^k=(\fL^k)_{\geq 1}+\sum_{j=1}^m \mathfrak{q}_j\p^{-1}\mathfrak{r}_j\p,
\end{equation}
where  $v_j, \mathfrak{q}_j, \mathfrak{r}_j$ are functions of $\bfx=(x_1,x_2,\cdots)$, $\p=\p_x$ and $(\sum_ia_i\p^i)_{\geq1}=\sum_{i\geq1}a_i\p^i$. Here the cmKP hierarchy can be viewed as the generalization of the loop algebra $sl_k[\la,\la^{-1}]$-reduction of the mKP hierarchy \cite{Yang2024}, i.e., $\fL^k=(\fL^k)_{\geq 1}$. In this paper the mKP hierarchy always means the 1-st mKP hierarchy \cite{Cheng2018,Oevel1993}. The  cmKP hierarchy contains many famous integrable differential equations \cite{Kundu1995}, such as the  derivative nonlinear Schr{\"o}dinger equation, Gerdjikov-Ivanov equation and Chen-Lee-Liu equation. It will be proved in Section \ref{Sect:BieqtoLax} that the fermionic form of the cmKP hierarchy \eqref{eq:constKP} is given by
\[
    S_k(\tau_0\otimes\tau_1)=\sum_{1\leq i\leq m}\rho_i\otimes\sigma_i,\qquad\tau_0,\sigma_i\in\F_l,\quad \tau_1,\rho_i\in\F_{l+1}.
\]
Thus by using two-component boson-fermion correspondence\cite{Kac2003,Kac2023}, we can obtain the reduction of the mToda hierarchy, which is the two-component generalization of the cmKP hierarchy, called the generalized bigraded mToda (GBMT) hierarchy in what follows.

In this paper, there are the following main results:

{\bf Main result 1.}
The reduction of the mToda hierarchy is given by 
\begin{equation}\label{eq:consmToda1}
    L_1(n)^M\cdot\iota_{\La^{-1}}\Delta^{-1}=L_2(n)^N\cdot\iota_{\La}\Delta^{-1}+\sum_{l\in\Z}\sum_{i=1}^{m}q_{i,n}\La^lr_{i,n+1},\qquad M,N\geq1,
\end{equation}
where $\iota_{\La^{-1}}\Delta^{-1}=\sum_{j\geq1}\La^j$, $\iota_{\La}\Delta^{-1}=-\sum_{j\geq0}\La^j$. Further, it can be found that \eqref{eq:consmToda1} is equivalent to 
\begin{equation}
    L_1(n)^M=L_2(n)^N+\sum_{l\in\Z}\sum_{i=1}^{m}q_{i,n}\La^lr_{i,n+1}\Delta,\qquad (L_1(n)^M+L_2(n)^N)(1)=0.\label{eq:consmToda}
\end{equation}
Thus
\begin{equation}\label{eq:consmToda2}
    \begin{aligned}
        &L_1(n)^M=(L_1(n)^M)_{\Delta,\geq1}+(L_2(n)^N)_{\Delta^*,\geq1}+\sum_{i=1}^{m}q_{i,n}\cdot\iota_{\La^{-1}}\Delta^{-1}\cdot r_{i,n+1}\cdot\Delta,\\
        &L_2(n)^N=(L_1(n)^M)_{\Delta,\geq1}+(L_2(n)^N)_{\Delta^*,\geq1}+\sum_{i=1}^{m}q_{i,n}\cdot\iota_{\La}\Delta^{-1}\cdot r_{i,n+1}\cdot\Delta,
    \end{aligned}
\end{equation}
(please see Corollary \ref{coro:consL} and \ref{coro:conL'} in Section \ref{Sect:Lax} for details). Notice that this mToda reduction can be viewed as the generalization of the bigraded mToda hierarchy \eqref{eq:constmBTH}. Thus, we call \eqref{eq:consmToda1} or \eqref{eq:consmToda} the GBMT hierarchy.

{\bf Main result 2.}
The GBMT hierarchy is related by Miura links $\LL_aT_b=T_bL_a~(a,b=1,2)$ with the generalized bigraded Toda (GBT) hierarchy \cite{LiuYue2024} defined by the following constraint on the Toda Lax operators $\LL_1=\Lambda +\sum_{i\geq0}a_i\La^{-i}$ and $\LL_2=b\Lambda^{-1}+\sum_{i\geq0}b_i\La^{i}$ satisfying
\begin{equation}
    \LL_1(n)^M=\LL_2(n)^{N}+\sum_{j\in \mathbb Z}\sum_{i=1}^{m}\tilde{q}_{i,n}\Lambda^j\tilde{r}_{i,n},\qquad M,N\geq1,\label{eq:consToda}
\end{equation}
(see Section \ref{Sect:Miura} for the detailed definitions of $T_b$), which coincides with the fact that there are  Miura links \cite{Shaw1997} between the cmKP hierarchy and the constrained KP hierarchy\cite{Cheng&Zhang1994,ChengY1992} $(\tilde{\fL}^k)_{<0}=\sum_{i=1}^m \tilde{\mathfrak{q}}_i\p^{-1}\tilde{\mathfrak{r}}_i$, since the  GBT hierarchy is analogous to the constrained KP hierarchy (please see \cite{LiuYue2024} for details). 

{\bf Main result 3.}
By using the spectral representations of eigenfunctions of the  Toda and mToda hierarchies, we find equivalent formulations of the  GBMT and GBT hierarchies in terms of tau functions, that is, 
\begin{itemize}
    \item The GBT hierarchy: $$D_{M,N}\Delta(\log\tau_n(\bft))=\sum_{i=1}^{m}\tilde{q}_{i,n}(\bft)\tilde{r}_{i,n}(\bft).$$
    \item The GBMT hierarchy: $$D_{M,N}\Delta(\log\tau_{0,n}(\bft))=-\sum_{i=1}^{m}q_{i,n}(\bft)\Delta(r_{i,n}(\bft)),\qquad D_{M,N}\Delta(\log\tau_{1,n}(\bft))=\sum_{i=1}^{m}\Delta(q_{i,n}(\bft))r_{i,n+1}(\bft),$$
\end{itemize}
where $D_{M,N}=\p_{t_M^{(1)}}+\p_{t_N^{(2)}}$, $\tau_n(\bft)$ is the Toda tau function and $(\tau_{0,n}(\bft),\tau_{1,n}(\bft))$ is the  mToda tau pair.

This paper is organized as follows. In Section \ref{Sect:BieqtoLax}, the bilinear equations of the GBMT hierarchy are obtained by the two-component boson-fermion correspondence from the fermionic cmKP hierarchy. Next in Section \ref{Sect:Lax}, we derive the Lax formulations of the GBMT hierarchy from the bilinear equations. While in Section \ref{sect:Lax->Bieq}, we consider the inverse direction, i.e., from the Lax formulations to the bilinear equations. Then in Section \ref{Sect:Miura}, we show the Miura links between the GBMT hierarchy and the GBT hierarchy. Further in Section \ref{sect:ConsInTau}, we consider the equivalent formulations in tau functions for the constraints of the GBMT and GBT hierarchies, respectively. Finally, some conclusions and discussions are given in Section \ref{sect:Conclusions}.

\section{The GBMT hierarchy from bosonization of the fermionic cmKP hierarchy}\label{Sect:BieqtoLax}

In this section, we firstly review some equivalent formulations of the cmKP hierarchy, especially the bilinear equations. Then we convert the bilinear equations of the cmKP hierarchy into the fermionic form. Finally by using the two-component boson-fermion correspondence, we get the bilinear equations of the GBMT hierarchy.

\subsection{Equivalent formulations of the cmKP hierarchy} In this subsection, let us recall three equivalent formulations of the cmKP hierarchy. One can refer to \cite{ChenCheng2019,Cheng2018,WuCheng2022} for more details.

Firstly, the $(k,m)$-cmKP hierarchy is defined by the following system \cite{Cheng2018,WuCheng2022},
\begin{align}
        &\fL^k=(\fL^k)_{\geq 1}+\sum_{j=1}^m \mathfrak{q}_j\p^{-1}\mathfrak{r}_j\p,\label{eq:cmKP1}\\
        &\p_{x_p}\fL=[(\fL^p)_{\geq1},\fL],\label{eq:cmKP2}\\
        &\p_{x_p}\mathfrak{q}_j=(\fL^p)_{\geq1}(\mathfrak{q}_j),\quad \p_{x_p}\mathfrak{r}_j=-(\p(\fL^p)_{\geq1}\p^{-1})^*(\mathfrak{r}_j),\label{eq:cmKP3}
\end{align}
where the Lax operator $\fL=\p+\sum_{j\geq0}v_j(\bfx)\p^{-j}$ with $\bfx=(x_1=x,x_2,\cdots)$, $(\sum_i a_i\p^i)^*=\sum_i(-\p)^ia_i$ and $(\sum_ia_i\p^i)_{\geq1}=\sum_{i\geq1}a_i\p^i$.

Next if we introduce the mKP wave function $w(\bfx,z)=(e^{\al(\bfx)}+\sum_{i\geq1}w_i(\bfx)z^{-i})e^{\xi(\bfx,z)}$ and the mKP adjoint wave function $w^*(\bfx,z)=(e^{\beta(\bfx)}+\sum_{i\geq0}w_i^*(\bfx)z^{-i})z^{-1}e^{-\xi(\bfx,z)}$ satisfying
\begin{alignat*}{2}
    &\fL \big(w(\bfx,z)\big)=zw(\bfx,z),\quad &&(\p\fL\p^{-1})^*\big(w^*(\bfx,z)\big)=zw^*(\bfx,z),\\
    &\p_{x_p}w(\bfx,z)=(\fL^p)_{\geq1}\big(w(\bfx,z)\big),\qquad &&\p_{x_p}w^*(\bfx,z)=-(\p(\fL^p)_{\geq1}\p^{-1})^*\big(w^*(\bfx,z)\big),
\end{alignat*} 
then the cmKP hierarchy \eqref{eq:cmKP1}--\eqref{eq:cmKP3} is equivalent to the following bilinear equations  in terms of the wave functions \cite{ChenCheng2019,WuCheng2022},
\begin{align*}
    &\Res_z z^{k}w(\bfx,z)w^*(\bfx',z)=\sum_{1\leq i\leq m}\mathfrak{q}_i(\bfx)\mathfrak{r}_i(\bfx'),\\
    &\Res_z \Omega\big(w(\bfx,z),\mathfrak{r}_i(\bfx)_x\big) w^*(\bfx',z)=\mathfrak{r}_i(\bfx)-\mathfrak{r}_i(\bfx'),\\
    &\Res_z \hat{\Omega}\big(\mathfrak{q}_i(\bfx)_x,w^*(\bfx,z)\big)w(\bfx',z)=\mathfrak{q}_i(\bfx)-\mathfrak{q}_i(\bfx'),
\end{align*}
where $\Res_z\sum_i a_iz^i=a_{-1}$, $\xi(\bfx,z)=\sum_{j\geq1}x_jz^j$.
Here $\Omega(\mathfrak{q},\mathfrak{r}_x)$ and $\hat{\Omega}(\mathfrak{q}_x,\mathfrak{r})$ denote the mKP squared eigenfunction potentials \cite{WuCheng2022,Cheng2018,Oevel1998} defined by 
\begin{align*}
    &\p_x\Omega(\mathfrak{q},\mathfrak{r}_x)=\mathfrak{q}\mathfrak{r}_x,\quad \p_{x_p}\Omega(\mathfrak{q},\mathfrak{r}_x)=\Res_\p(\p^{-1}\mathfrak{r}_x(\fL^p)_{\geq1}\mathfrak{q}\p^{-1}),\\
    &\p_x\hat{\Omega}(\mathfrak{q}_x,\mathfrak{r})=\mathfrak{q}_x\mathfrak{r},\quad \p_{x_p}\hat{\Omega}(\mathfrak{q}_x,\mathfrak{r})=\Res_\p(\p^{-1}\mathfrak{r}(\fL^p)_{\geq1}\mathfrak{q}_x\p^{-1}),
\end{align*}
where $\mathfrak{q}$ is the mKP eigenfunction and $\mathfrak{r}$ is the mKP adjoint eigenfunction satisfying $\p_{x_p}\mathfrak{q}=(\fL^p)_{\geq1}(\mathfrak{q})$ and $\p_{x_p}\mathfrak{r}=-(\p(\fL^p)_{\geq1}\p^{-1})^*(\mathfrak{r})$.

Finally, note that there exist two tau functions $\tau_0(\bfx)$ and $\tau_1(\bfx)$ such that \cite{Cheng2018}
\begin{align*}
    w(\bfx,z)=\frac{\tau_0(\bfx-[z^{-1}])}{\tau_1(\bfx)}e^{\xi(\bfx,z)},\qquad w^*(\bfx,z)=\frac{\tau_1(\bfx+[z^{-1}])}{z\tau_0(\bfx)}e^{-\xi(\bfx,z)},
\end{align*}
where $[z^{-1}]=(z^{-1},z^{-2}/2,\cdots)$. Then by using \cite{WuCheng2022,Cheng2018}
\begin{align*}
    &\Omega\big(w(\bfx,z),\mathfrak{r}_i(\bfx)_x\big)=w(\bfx,z)\big(\mathfrak{r}_i(\bfx)-\mathfrak{r}_i(\bfx-[z^{-1}])\big),\\
    &\hat{\Omega}\big(\mathfrak{q}_i(\bfx)_x,w^*(\bfx,z)\big)=w^*(\bfx,z)\big(\mathfrak{q}_i(\bfx)-\mathfrak{q}_i(\bfx+[z^{-1}])\big),
\end{align*} 
we can obtain the bilinear equations in terms of tau functions \cite{WuCheng2022},
\begin{align}
    &\Res_{z} z^{k-1}\tau_0(\bfx-[z^{-1}])\tau_1(\bfx'+[z^{-1}])e^{\xi(\bfx-\bfx',z)}=\sum_{1\leq i\leq m}\rho_i(\bfx)\sigma_i(\bfx'),\label{eq:consmKPBiEq1}\\
    &\Res_z z^{-1}\tau_0(\bfx-[z^{-1}])\rho_i(\bfx'+[z^{-1}])e^{\xi(\bfx-\bfx',z)}=\tau_0(\bfx')\rho_i(\bfx),\\
    &\Res_z z^{-1}\sigma_i(\bfx-[z^{-1}])\tau_{1}(\bfx'+[z^{-1}])e^{\xi(\bfx-\bfx',z)}=\sigma_i(\bfx')\tau_1(\bfx),\\
    &\Res_z z^{-1}\tau_{0}(\bfx-[z^{-1}])\tau_{1}(\bfx'+[z^{-1}])e^{\xi(\bfx-\bfx',z)}=\tau_{0}(\bfx')\tau_1(\bfx),\label{eq:consmKPBiEq3}
\end{align}
where $\rho_i(\bfx)=q_i(\bfx)\tau_1(\bfx)$ and $\sigma_i(\bfx)=r_i(\bfx)\tau_0(\bfx)$.

\subsection{Bosonization for the fermionic cmKP hierarchy: the GBMT hierarchy}
In this subsection, we will convert the $(k,m)$-cmKP hierarchy into the fermionic form, and then by using two-component boson-fermion correspondence to this fermionic form, we will get the bilinear equations of the GBMT hierarchy.

For this, let us introduce the charged free fermions $\psi_j^\pm~(j\in\Z+1/2)$ (see, e.g., \cite{JimboMiwa1983,JM2000,Kac2023}) satisfying
\[
    \psi^\la_i\psi^\mu_j+\psi^\mu_j\psi^\la_i=\delta_{\la,-\mu}\delta_{i+j,0},\quad \la,\mu\in\{+,-\},
\]
then the Fock space $\F$ is the vector space spanned by $\psi^+_{i_1}\cdots\psi^+_{i_r}\psi^-_{j_1}\cdots\psi^-_{j_s}\ket{0}$ for $i_1<\cdots<i_r<0$ and $j_1<\cdots<j_s<0$, where the vacuum $\ket{0}$ satisfies
$
    \psi_n^\pm\ket{0}=0 ~ (n> 0).
$
If we define the charge of $\psi_j^\pm$ to be $\pm1$, then $\psi^+_{i_1}\cdots\psi^+_{i_r}\psi^-_{j_1}\cdots\psi^-_{j_s}\ket{0}$ has the charge $r-s$. Thus if denote $\F_l$ to be the subspace of $\F$ having the charge $l$, then $\F=\bigoplus_{l\in\Z}\F_l$.

There are many different bosonizations for $\F$ (see, e.g., \cite{JimboMiwa1983,JM2000,Kac2023}). Here in this paper we are more interested in the following two bosonizations of $\F$.
\begin{itemize}
    \item $\hat{\mathfrak{b}}:\F\xrightarrow{\sim}\C[q^{\pm1};\bfx=(x_1,x_2\cdots)]$, which is uniquely defined by $\hat{\mathfrak{b}}(\ket{0})=1$ and 
    \[
        \hat{\mathfrak{b}}\psi^\pm(z)\hat{\mathfrak{b}}^{-1}=q^{\pm1}z^{\pm q\p_q}e^{\pm\xi(\bfx,z)}e^{\mp\xi(\tilde{\p},z^{-1})},
    \]
    where $q$ commutes with all $x_i$, $\psi^\pm(z)=\sum_{j\in\Z+1/2}\psi_j^\pm z^{-j-\frac12}$ and $\tilde{\p}=(\p_{x_1},\p_{x_2}/2,\cdots)$.
    \item $\mathfrak{b}:\F\xrightarrow{\sim}\C[Q_1^{\pm1},Q_2^{\pm1};\bft=(\bft^{(1)},\bft^{(2)})\mid\bft^{(a)}=(t^{(a)}_1,t^{(a)}_2\cdots),a=1,2]$, which is uniquely determined by $\mathfrak{b}(\ket{0})=1$ and 
    \begin{equation}\label{bosoForToda}
        \mathfrak{b}\psi^{\pm(a)}(z)\mathfrak{b}^{-1}=Q_a^{\pm 1} z^{\pm Q_a\partial_{Q_a}}e^{\pm \xi(\bft^{(a)},z)}e^{\mp\xi(\tilde{\partial}_a,z^{-1})},
    \end{equation}
    where $Q_a$ satisfies $Q_1Q_2=-Q_2Q_1$ and commutes with all $t_n^{(i)}$, $\psi^{\pm(a)}(z)=\sum_{j\in\Z+1/2}\psi_j^{\pm(a)} z^{-j-\frac12}$ and two-component fermions $\psi_j^{\pm(a)}$ are defined by
    \begin{alignat*}{2}
        &\psi_{Mi+p+1/2}^{+(1)}=\psi_{(M+N)i+p+1/2}^{+},\quad &&\psi_{Mi-p-1/2}^{-(1)}=\psi_{(M+N)i-p-1/2}^{-}, \quad 0\leq p\leq M-1, \\
        &\psi_{Ni+q+1/2}^{+(2)}=\psi_{(M+N)i+M+q+1/2}^{+},\quad &&\psi_{Ni-q-1/2}^{-(2)}=\psi_{(M+N)i-M-q-1/2}^{-},\quad 0\leq q\leq N-1.
    \end{alignat*}
    Here $k=M+N$, and $\mathfrak{b}$ is called the two-component boson-fermion correspondence.
\end{itemize}

By applying the inverse of $\hat{\mathfrak{b}}$ to the cmKP hierarchy \eqref{eq:consmKPBiEq1}--\eqref{eq:consmKPBiEq3}, we can get 
\begin{align}
        &S_k(\tau_0\otimes\tau_1)=\sum_{1\leq i\leq m}\rho_i\otimes\sigma_i,\label{eq:fermicmKP1}\\
        &S_0(\tau_0\otimes\rho_i)=\rho_i\otimes\tau_0,\\
        &S_0(\sigma_i\otimes\tau_1)=\tau_1\otimes\sigma_i,\qquad \sigma_i,\tau_0\in\F_l,\\
        &S_0(\tau_0\otimes\tau_1)=\tau_1\otimes\tau_0,\qquad \rho_i,\tau_1\in\F_{l+1},\label{eq:fermicmKP4}
\end{align}
where $S_k=\sum_{j\in\Z+1/2}\psi^+_{j}\otimes\psi^-_{-j+k}$. This is the fermionic form of the cmKP hierarchy. Notice that the mKP and mToda hierarchies have the same fermionic form \eqref{eq:fermimKP}. Thus we believe that this fermionic form can also correspond to a reduction of the mToda hierarchy, which is analogous to the cmKP hierarchy. 

Let $S_l^{(a)}=\sum_{j\in\Z+1/2}\psi^{+(a)}_j\otimes\psi^{-(a)}_{-j+l}$, then 
\[
    S_0=S_0^{(1)}+S_0^{(2)},\qquad S_k=S_M^{(1)}+S_N^{(2)}.
\]
If we define 
\be\label{Todatau}
    \mathfrak{b}(\tau)=\sum_{n\in\Z}(-1)^{\frac{n(n-1)}{2}}Q_1^{n+l}Q_2^{-n}\tau_{n}(\mathbf{t}),
\ee
then applying the bosonization $\mathfrak{b}$ to the above fermionic equations \eqref{eq:fermicmKP1}--\eqref{eq:fermicmKP4}, we can obtain
\begin{align}
    &\begin{aligned}\label{cmToda1}
        &\oint_{C_\infty}\frac{dz}{2\pi\textnormal{\bf i}z}z^{M+n-n'}\tau_{0,n}(\bft-[z^{-1}]_1)\tau_{1,n'}(\bft'+[z^{-1}]_1)e^{\xi(\bft^{(1)}-\bft'^{(1)},z)}\\
        &\quad +\oint_{C_0}\frac{dz}{2\pi\textnormal{\bf i}z}z^{N+n-n'+1}\tau_{0,n+1}(\bft-[z]_2)\tau_{1,n'-1}(\bft'+[z]_2)e^{\xi(\bft^{(2)}-\bft'^{(2)},z^{-1})}=\sum_{i=1}^{m}\rho_{i,n}(\bft)\sigma_{i,n'}(\bft'),
    \end{aligned}\\
    &\begin{aligned}\label{cmToda2}
        &\oint_{C_\infty}\frac{dz}{2\pi\textnormal{\bf i}z}z^{n-n'}\tau_{0,n}(\bft-[z^{-1}]_1)\rho_{i,n'}(\bft'+[z^{-1}]_1)e^{\xi(\bft^{(1)}-\bft'^{(1)},z)}\\
        &\quad +\oint_{C_0}\frac{dz}{2\pi\textnormal{\bf i}z}z^{n-n'+1}\tau_{0,n+1}(\bft-[z]_2)\rho_{i,n'-1}(\bft'+[z]_2)e^{\xi(\bft^{(2)}-\bft'^{(2)},z^{-1})}=\rho_{i,n}(\bft)\tau_{0,n'}(\bft'),
    \end{aligned}\\
    &\begin{aligned}\label{cmToda3}
        &\oint_{C_\infty}\frac{dz}{2\pi\textnormal{\bf i}z}z^{n-n'}\sigma_{i,n}(\bft-[z^{-1}]_1)\tau_{1,n'}(\bft'+[z^{-1}]_1)e^{\xi(\bft^{(1)}-\bft'^{(1)},z)}\\
        &\quad +\oint_{C_0}\frac{dz}{2\pi\textnormal{\bf i}z}z^{n-n'+1}\sigma_{i,n+1}(\bft-[z]_2)\tau_{1,n'-1}(\bft'+[z]_2)e^{\xi(\bft^{(2)}-\bft'^{(2)},z^{-1})}=\tau_{1,n}(\bft)\sigma_{i,n'}(\bft'),
    \end{aligned}\\
    &\begin{aligned}\label{mToda}
        &\oint_{C_\infty}\frac{dz}{2\pi\textnormal{\bf i}z}z^{n-n'}\tau_{0,n}(\bft-[z^{-1}]_1)\tau_{1,n'}(\bft'+[z^{-1}]_1)e^{\xi(\bft^{(1)}-\bft'^{(1)},z)}\\
        &\quad +\oint_{C_0}\frac{dz}{2\pi\textnormal{\bf i}z}z^{n-n'+1}\tau_{0,n+1}(\bft-[z]_2)\tau_{1,n'-1}(\bft'+[z]_2)e^{\xi(\bft^{(2)}-\bft'^{(2)},z^{-1})}=\tau_{1,n}(\bft)\tau_{0,n'}(\bft'),
    \end{aligned}
\end{align}
where $\textnormal{\bf i}=\sqrt{-1}$ and $\bft-[z^{-1}]_a=(\bft^{(1)},\bft^{(2)})|_{\bft^{(a)}\mapsto\bft^{(a)}-[z^{-1}]}$. Here $C_\infty$ and $C_0$ denote the circle around $z=\infty$ and $z=0$ with anticlockwise, respectively. Equation \eqref{cmToda1} is just the two-component generalization of the cmKP hierarchy, that is, a new reduction of the mToda hierarchy, which is a generalization of the bigraded mToda hierarchy. While \eqref{cmToda2}--\eqref{mToda} imply that $(\tau_{0,n},\tau_{1,n}),(\tau_{0,n},\rho_{i,n})$ and $(\sigma_{i,n},\tau_{1,n})$ are the mToda tau pairs \cite{RuiCheng2024}. Here we call \eqref{cmToda1}--\eqref{mToda} to be the {\bf generalized bigraded mToda (GBMT)} hierarchy.

\begin{example}
    The bilinear equations \eqref{cmToda1}--\eqref{cmToda3} for $M=N=1$ imply
    \begin{equation}
        2\sum_{i=1}^{m}\sigma_{i,n}\rho_{i,n+1}=\big(D^{(1)}_2+(D^{(1)}_1)^2\big)\tau_{1,n}\cdot\tau_{0,n+1},
    \end{equation}
    and
    \begin{align}
        &\tau_{0,n}\rho_{i,n+1}=D^{(1)}_1\rho_{i,n}\cdot\tau_{0,n+1},\quad \sigma_{i,n}\tau_{1,n+1}=D^{(1)}_1\tau_{1,n}\cdot\sigma_{i,n+1},\quad \tau_{0,n}\tau_{1,n+1}=D^{(1)}_1\tau_{1,n}\cdot\tau_{0,n+1},\nonumber
    \end{align}
    where the Hirota derivative \cite{Hirota2004} is defined by
    $
        P(D_\bft)f(\bft)\cdot g(\bft)=P(\p_\bfy)f(\bft+\bfy) g(\bft-\bfy)|_{\bfy=0},
    $
    and $P(D_\bft)$ is a polynomial in $D_\bft=(D^{(1)},D^{(2)})=(D_1^{(1)},D_2^{(1)},\cdots, D_1^{(2)},D_2^{(2)},\cdots)$.
\end{example}

\section{Lax formulation of the GBMT hierarchy}\label{Sect:Lax}

In this section, we derive the following Lax equations of the GBMT hierarchy from the corresponding bilinear equations \eqref{cmToda1}--\eqref{mToda}. 

For this, let us firstly recall some properties of the shift operator $\La$ (see \cite[Section 2]{RuiCheng2024} for more details). The shift operator $\La$ is defined by $\La(f(n,\bft))=f(n+1,\bft)$ for a function $f(n,\bft)\in\mathcal{A}=\{f(n,\bft)\mid  f(n,\bft)\in\C((\bft)), \forall n\in\Z\}$. Here we are interested in the operators in $\mathcal{A}[[\La^{-1},\La]]$, $\mathcal{A}[[\La^{-1}]][\La]$ and $\mathcal{A}[[\La]][\La^{-1}]$. The inverse of the difference operator $\Delta=\La-1$ is denoted as 
\[
    \iota_{\La}\Delta^{-1}=-\sum_{i\geq0}\La^i\in\mathcal{A}[[\La]][\La^{-1}],\qquad \iota_{\La^{-1}}\Delta^{-1}=\sum_{i\geq1}\La^{-i}\in\mathcal{A}[[\La^{-1}]][\La]
\]
For $A=\sum_{j}a_j\La^j\in\mathcal{A}[[\La^{-1}]][\La]$ or $\mathcal{A}[[\La]][\La^{-1}]$, we write $A(1)=\sum_{j}a_j$ if the sum is meaningful. For $F_a=\sum f_{a,j}\La^j\in\mathcal{A}[[\La^{-1},\La]]~(a=1,2)$, its conjugation is $F_a^*=\sum \La^{-j} f_{a,j}$. Thus $\Delta^*=\La^{-1}-1$. The composite of $F_1$ and $F_2$ is denoted as $F_1\cdot F_2$ or $F_1 F_2$, and the action of $F_1$ on $F_2$ is denoted as $F_1(F_2)=\sum_{j,l}f_{1,j}\La^j(f_{2,l})\La^l$. For a subset $P\in\{\geq k, >k, \leq k, <k\}$ of $\Z$, consisting of $n\in\Z$ satisfying $P$,  $(F_a)_{P}=\sum_{j\in P} f_{a,j}\La^j$. Similarly, $(F_a)_{[k]}=f_{a,k}\La^k$. Further $\mathcal{A}[[\La]]=\mathcal{A}[[\Delta]],~\mathcal{A}[[\La^{-1}]]=\mathcal{A}[[\Delta^*]]$. For $A\in\mathcal{A}[[\Delta]]$ or $\mathcal{A}[[\Delta^*]]$, the truncation $A_{\Delta,P}$ or $A_{\Delta^*,P}$ means the same as above, except that $\La$ is replaced by $\Delta$ or $\Delta^*$, respectively. 

Some useful lemmas are given as follows.
\begin{lemma}[\hspace{-0.1pt}{\cite{Adle1999}}]\label{LemForD}
    For $A,B\in\mathcal{A}[[\La^{-1}]][\La]$ or $\mathcal{A}[[\La]][\La^{-1}]$,
    \begin{align*}
        \sum_{l\in\Z}\oint_{C_\infty}\frac{dz}{2\pi\textnormal{\bf i} z}A(n,\La)( z^{\pm n})\cdot B(n+l,\La)(z^{\mp n\mp l})\La^l=A(n,\La)\cdot B^*(n,\La).
    \end{align*}
\end{lemma}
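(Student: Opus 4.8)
The statement to be proved is Lemma \ref{LemForD} (the residue-pairing identity attributed to \cite{Adle1999}), which converts a bilinear residue sum over products of difference operators acting on the powers $z^{\pm n}$ into the composition $A(n,\La)\cdot B^*(n,\La)$. The plan is to reduce everything to the single basis case and then reassemble by bilinearity. First I would write $A(n,\La)=\sum_{j}a_j(n)\La^j$ and $B(n,\La)=\sum_{k}b_k(n)\La^k$, so that $A(n,\La)(z^{\pm n})=\sum_j a_j(n)z^{\pm(n+j)}$ and similarly $B(n+l,\La)(z^{\mp n\mp l})=\sum_k b_k(n+l)z^{\mp(n+l+k)}$; the admissibility of the operator classes $\mathcal{A}[[\La^{-1}]][\La]$ or $\mathcal{A}[[\La]][\La^{-1}]$ guarantees that for each fixed power of $z$ only finitely many terms contribute, so the formal manipulations are legitimate.

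Next I would compute the residue. Taking the $+$ sign in the exponent (the $-$ case is identical after relabelling), the integrand is
\[
    \frac{1}{z}\sum_{j,k}a_j(n)\,b_k(n+l)\,z^{(n+j)-(n+l+k)}=\sum_{j,k}a_j(n)\,b_k(n+l)\,z^{j-k-l-1},
\]
so $\oint_{C_\infty}\frac{dz}{2\pi\textnormal{\bf i}z}(\cdots)$ picks out $j-k-l=0$, i.e. $j=k+l$. Hence the left-hand side becomes $\sum_{l\in\Z}\sum_{k}a_{k+l}(n)\,b_k(n+l)\,\La^l$. Now I would reindex: for a fixed output shift $l$, the coefficient is $\sum_k a_{k+l}(n)b_k(n+l)$. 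On the other side, $B^*(n,\La)=\sum_k \La^{-k}b_k(n)=\sum_k b_k(n-k)\La^{-k}$, so the composition $A(n,\La)\cdot B^*(n,\La)=\sum_{j,k}a_j(n)\La^j b_k(n-k)\La^{-k}=\sum_{j,k}a_j(n)b_k(n+j-k)\La^{j-k}$. Setting $l=j-k$ and summing over $k$ (with $j=k+l$) reproduces exactly $\sum_k a_{k+l}(n)b_k(n+l)$, matching the left-hand side term by term in $\La^l$. For the $-$ sign one gets the integrand power $z^{k-j+l-1}$, the residue forces $j=k+l$ again, and the same bookkeeping applies.

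The only genuine subtlety — and the step I would treat most carefully rather than the routine index-chasing — is the interchange of the sum over $l$ with the residue and the double sum over $j,k$, together with the claim that $\sum_k a_{k+l}(n)b_k(n+l)$ is a well-defined element of $\mathcal{A}$ for each $l$ (so that the right-hand side indeed lies in the appropriate operator class). This is where the one-sided nature of the coefficient supports matters: if $A,B\in\mathcal{A}[[\La^{-1}]][\La]$ then $a_j=0$ for $j$ large and $b_k=0$ for $k$ large, so for fixed $l$ the condition $j=k+l$ with both indices bounded above leaves only finitely many $k$; the case $\mathcal{A}[[\La]][\La^{-1}]$ is symmetric. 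I would state this finiteness explicitly as the justification for all rearrangements, and then the identity follows. Since the computation is short, I would present it essentially in full rather than merely sketching it.
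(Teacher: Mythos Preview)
Your proof is correct: the direct coefficient computation (expand $A=\sum_j a_j\La^j$, $B=\sum_k b_k\La^k$, extract the $z^0$ term to force $j=k+l$, and match against the normal-ordered expansion of $A\cdot B^*$) is exactly the standard argument, and your finiteness remark correctly justifies the rearrangements in both operator classes. The paper does not supply its own proof of this lemma---it is quoted from \cite{Adle1999}---so there is nothing further to compare; your write-up would serve as a self-contained proof where the paper has none.
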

\begin{lemma}[\hspace{-0.1pt}{\cite{RuiCheng2024}}]\label{LemDelta}
    For $A\in\mathcal{A}[[\La^{-1}]][\La]$ and  $B\in\mathcal{A}[[\La]][\La^{-1}]$,
    \begin{align*}
        &\big(A \cdot \iota_{\Lambda^{-1}} \Delta^{*-1}\big)_{\geq 1} \cdot \Delta^* = A_{\geq 1} - A_{\geq 1}(1)= A_{\Delta, \geq 1}, \\
        &\big(A \cdot \iota_{\Lambda^{-1}} \Delta^{*-1}\big)_{\leq 0} \cdot \Delta^* = A_{<0} + A_{\geq 0}(1) = A_{\Delta, \leq 0}, \\
        &\big(B \cdot \iota_{\Lambda} \Delta^{*-1}\big)_{\geq 1} \cdot \Delta^* = B_{>0} + B_{<0}(1)= B_{\Delta^*, \leq 0}, \\
        &\big(B \cdot \iota_{\Lambda} \Delta^{*-1}\big)_{\leq 0} \cdot \Delta^* = B_{<0} - B_{<0}(1) = B_{\Delta^*, \geq 1}.
    \end{align*}
\end{lemma}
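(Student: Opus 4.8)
The plan is to prove all four identities by a single two-step scheme, so I describe it for the first one and indicate how the others follow verbatim. The first step is to observe that $\iota_{\La^{-1}}\Delta^{*-1}$ and $\iota_{\La}\Delta^{*-1}$ are right (hence, being constant-coefficient operators, two-sided) inverses of $\Delta^*=\La^{-1}-1$ inside $\mathcal{A}[[\La^{-1}]][\La]$ and $\mathcal{A}[[\La]][\La^{-1}]$ respectively: writing out the relevant geometric series, $(\La^{-1}-1)^{-1}\cdot(\La^{-1}-1)$ telescopes to $1$ in each completion. The second step is to split the operator on the left-hand side into its $(\geq 1)$- and $(\leq 0)$-parts and to track where right multiplication by $\Delta^*$ moves each power of $\La$.

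For the first identity, put $X=A\cdot\iota_{\La^{-1}}\Delta^{*-1}$. Since $A$ has only finitely many positive powers of $\La$, this composition is well defined and still lies in $\mathcal{A}[[\La^{-1}]][\La]$, and by the first step $X\cdot\Delta^*=A$. Writing $X=X_{\geq 1}+X_{\leq 0}$ gives $A=X_{\geq 1}\Delta^*+X_{\leq 0}\Delta^*$. The elementary relation $\La^{j}\Delta^*=\La^{j-1}-\La^{j}$ shows that $X_{\geq 1}\Delta^*$ contains only powers $\geq 0$ while $X_{\leq 0}\Delta^*$ contains only powers $\leq 0$; hence, comparing the strictly positive part of $A$, one gets $X_{\geq 1}\Delta^*=A_{\geq 1}+c_{+}$ where $c_{+}$ is the $\La^{0}$-coefficient of $X_{\geq 1}\Delta^*$, i.e. the $\La^{1}$-coefficient of $X$, and comparing the strictly negative part, $X_{\leq 0}\Delta^*=A_{<0}+c_{-}$ where $c_{-}$ is minus the $\La^{0}$-coefficient of $X$. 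A one-line expansion of the product defining $X$ evaluates these two coefficients as finite sums of the coefficients of $A$, namely $c_{+}=-A_{\geq 1}(1)$ and $c_{-}=A_{\geq 0}(1)$, which yields $X_{\geq 1}\Delta^*=A_{\geq 1}-A_{\geq 1}(1)$ and $X_{\leq 0}\Delta^*=A_{<0}+A_{\geq 0}(1)$. For the last two identities one runs the same argument with $A$ replaced by $B\in\mathcal{A}[[\La]][\La^{-1}]$ and $\iota_{\La^{-1}}\Delta^{*-1}$ by $\iota_{\La}\Delta^{*-1}$, the role of ``finitely many positive powers'' now being played by ``finitely many negative powers'', and the two boundary constants again read off from the $\La^{1}$- and $\La^{0}$-coefficients of $B\cdot\iota_{\La}\Delta^{*-1}$.

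It remains to match the expressions $A_{\geq 1}-A_{\geq 1}(1)$, $A_{<0}+A_{\geq 0}(1)$ and their $B$-counterparts with the $\Delta$- and $\Delta^*$-truncations. Rewriting the polynomial part of $A$ via $\La=1+\Delta$ shows that the $\Delta$-independent term of $A_{\geq 1}$ is exactly $A_{\geq 1}(1)$, so $A_{\geq 1}-A_{\geq 1}(1)$ is precisely the $\Delta^{\geq 1}$-component of $A$, that is $A_{\Delta,\geq 1}$, and then $A_{\Delta,\leq 0}:=A-A_{\Delta,\geq 1}=A_{<0}+A_{\geq 0}(1)$; the $\Delta^*$-statements are obtained identically using $\La^{-1}=1+\Delta^*$.

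I expect the only delicate point to be bookkeeping rather than anything conceptual: one must check at each step that the infinite compositions actually land in the stated completed operator rings, and that the evaluation symbols $A_{\geq 1}(1)$, $A_{\geq 0}(1)$, $B_{<0}(1)$, $\dots$ are genuine finite sums — which holds precisely because $A$ (resp. $B$) has finitely many positive (resp. negative) powers of $\La$, so that the truncated operators appearing in those symbols are honest Laurent polynomials.
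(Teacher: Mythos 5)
Your two-step scheme (right-multiply by $\iota\Delta^{*-1}$, use that it is an inverse of $\Delta^*$ in the appropriate completion, then split into $(\geq 1)$- and $(\leq 0)$-parts and chase the single boundary coefficient) is the natural direct verification, and for the first two identities it is complete and correct: the constants do come out as $c_+=-A_{\geq1}(1)$ and $c_-=A_{\geq0}(1)$. Note that the paper itself gives no proof of this lemma — it is quoted from the mToda paper \cite{RuiCheng2024} — so there is nothing to compare against except the statement itself; your argument is the standard one. Your final bookkeeping step equating $A_{\geq1}-A_{\geq1}(1)$ with $A_{\Delta,\geq1}$ tacitly uses that the $\La^{\leq0}$ part of $A$ expands (with negative powers of $\Delta$ understood via $\iota_{\La^{-1}}$) using only nonpositive powers of $\Delta$; your triangularity remark covers this, but it deserves one explicit sentence.

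The one place where you wave your hands is exactly where something nontrivial happens. For the $B$-case you say the two boundary constants are "again read off" from the $\La^1$- and $\La^0$-coefficients of $Y=B\cdot\iota_{\La}\Delta^{*-1}=B\cdot\sum_{i\geq1}\La^i$, but you never evaluate them. Doing so gives $y_1=\sum_{k\leq0}b_k=B_{\leq0}(1)$ and $y_0=B_{<0}(1)$, hence $\big(Y\big)_{\geq1}\Delta^*=B_{>0}+B_{\leq0}(1)=B_{\geq0}+B_{<0}(1)$ and $\big(Y\big)_{\leq0}\Delta^*=B_{<0}-B_{<0}(1)$. The fourth identity is thus exactly as printed, but the middle expression of the third identity as stated, $B_{>0}+B_{<0}(1)$, is off by the $\La^0$-term of $B$: it should read $B_{\geq0}+B_{<0}(1)$ (equivalently $B_{>0}+B_{\leq0}(1)$), which is what does equal both the left-hand side and $B_{\Delta^*,\leq0}$. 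A quick sanity check with $B=b_0$ gives left-hand side $=b_0=B_{\Delta^*,\leq0}$ while the printed middle term is $0$; also the middle expressions of the third and fourth lines must sum to $B$, which fails as printed. So carry out the evaluation explicitly and record the corrected middle term — as written, your claim that the same computation reproduces the stated formulas is not quite accurate for that one expression, and the discrepancy is a typo in the statement rather than an obstruction to your method.
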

\begin{lemma}[\hspace{-0.1pt}{\cite{LiuYue2024}}]\label{Asum}
    For $A\in\mathcal{A}[\La^{-1},\La]$ and $f,g\in\mathcal{A}$,
    \[
        A\cdot \sum_{l\in\Z}f\La^lg=\sum_{l\in\Z}A(f)\La^lg,\qquad  \sum_{l\in\Z}f\La^lg\cdot A^*=\sum_{l\in\Z}f\La^l A(g).
    \]
\end{lemma}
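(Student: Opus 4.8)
The plan is to reduce both identities to two elementary conjugation rules for the shift operator acting on a multiplication operator, namely $\La^j f=\La^j(f)\,\La^j$ and $\La^j g\,\La^{-j}=\La^j(g)$ for $f,g\in\mathcal{A}$ and $j\in\Z$, combined with a bijective re-indexing of the sum over $l\in\Z$. The first point I would make is that, since $A=\sum_j a_j\La^j$ is a \emph{finite} Laurent sum in $\mathcal{A}[\La^{-1},\La]$, the bi-infinite object $\sum_{l\in\Z}f\La^l g$ is well defined (rewriting $f\La^l g=\big(f\cdot\La^l(g)\big)\La^l$ shows its $\La^l$-coefficient is the function $f\cdot\La^l(g)$), and left multiplication by $A$ or right multiplication by $A^*$ is computed coefficientwise with only finitely many $j$ contributing to each power of $\La$. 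Hence every interchange and re-indexing of the double sums below is legitimate, even though the collection of such bi-infinite sums carries no associative product and one may only multiply it by Laurent polynomials.

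For the first identity, I would expand, using $\La^j f=\La^j(f)\La^j$ and $\La^j\La^l=\La^{j+l}$,
\[
    A\cdot\sum_{l\in\Z}f\La^l g=\sum_{j}\sum_{l\in\Z}a_j\La^j f\La^l g=\sum_{j}\sum_{l\in\Z}a_j\La^j(f)\La^{j+l}g.
\]
Substituting $k=j+l$ (a bijection of $\Z$ for each fixed $j$) and summing over $j$ first yields $\sum_{k\in\Z}\big(\sum_j a_j\La^j(f)\big)\La^k g=\sum_{k\in\Z}A(f)\La^k g$, since $\sum_j a_j\La^j(f)$ is by definition the action $A(f)$.

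For the second identity, writing $A^*=\sum_j\La^{-j}a_j$ and using $\La^l g\La^{-j}=\La^{l-j}\big(\La^j g\La^{-j}\big)=\La^{l-j}\La^j(g)$, I would compute
\[
    \sum_{l\in\Z}f\La^l g\cdot A^*=\sum_{j}\sum_{l\in\Z}f\La^l g\La^{-j}a_j=\sum_{j}\sum_{l\in\Z}f\La^{l-j}\La^j(g)a_j.
\]
Setting $k=l-j$, using commutativity of the pointwise product $\La^j(g)a_j=a_j\La^j(g)$, and summing over $j$ first then gives $\sum_{k\in\Z}f\La^k\big(\sum_j a_j\La^j(g)\big)=\sum_{k\in\Z}f\La^k A(g)$, as desired.

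There is no genuine obstacle here; the only point needing a little care is the rearrangement and re-indexing of the double sums, which is harmless precisely because $A$ is a Laurent polynomial, so for each power $\La^k$ the inner sum over $j$ is finite and no convergence hypothesis on $\mathcal{A}$ is required. (Were $A$ allowed to be a genuine bi-infinite series one would additionally have to assume that $A(f),A(g)\in\C((\bft))$ make sense, but that situation does not arise in the statement.)
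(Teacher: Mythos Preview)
Your proof is correct. The paper does not supply its own proof of this lemma but merely cites \cite{LiuYue2024}; your direct coefficientwise computation with the shift--conjugation rule $\La^j f=\La^j(f)\La^j$ and a reindexing $k=j+l$ (resp.\ $k=l-j$) is exactly the natural verification, and your care in noting that finiteness of $A$ makes the rearrangements harmless is appropriate.
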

\begin{lemma}[\hspace{-0.1pt}{\cite[Lemma 5]{Yang2024}}]\label{A(1)}
    For $A\in\mathcal{A}[[\La^{-1}]][\La]$ and $B\in\mathcal{A}[[\La]][\La^{-1}]$, then
    $
        A\cdot \iota_{\La^{-1}}\Delta^{-1}-B\cdot \iota_{\La}\Delta^{-1}=0
    $
    is equivalent to 
    $
        A=B,~ A(1)=0.
    $
\end{lemma}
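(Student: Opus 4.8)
The plan is to reduce the equivalence to two elementary facts about the formal inverses of $\Delta$. The first is that $\iota_{\La^{-1}}\Delta^{-1}\cdot\Delta=\iota_{\La}\Delta^{-1}\cdot\Delta=1$, which is immediate from the definitions, since $\Delta$ and both of these inverses have constant coefficients and hence commute. The second is the computation
\[
    \iota_{\La^{-1}}\Delta^{-1}-\iota_{\La}\Delta^{-1}=\sum_{l\in\Z}\La^l,
\]
together with the observation that this operator acts on any Laurent polynomial $A=\sum_j a_j\La^j$ by $A\cdot\sum_{l\in\Z}\La^l=A(1)\cdot\sum_{l\in\Z}\La^l$.

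I would set $C:=A\cdot\iota_{\La^{-1}}\Delta^{-1}-B\cdot\iota_{\La}\Delta^{-1}$, a well-defined element of $\mathcal{A}[[\La^{-1},\La]]$ on which right-multiplication by the Laurent polynomial $\Delta$ makes sense. For the forward direction, assuming $C=0$, I would first right-multiply by $\Delta$ and use the first fact, together with associativity inside the rings $\mathcal{A}[[\La^{-1}]][\La]$ and $\mathcal{A}[[\La]][\La^{-1}]$, to obtain $0=C\cdot\Delta=A-B$, hence $A=B$. Since $A$ has only finitely many positive powers of $\La$ while $B$ has only finitely many negative ones, the common operator $A=B$ is a genuine Laurent polynomial, so $A(1)$ is a finite and meaningful sum. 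Substituting $B=A$ back into $C$ and invoking the second fact gives $C=A\cdot\sum_{l\in\Z}\La^l=A(1)\cdot\sum_{l\in\Z}\La^l$; reading off the coefficient of $\La^0$ then forces $A(1)=0$. For the converse, if $A=B$ and $A(1)=0$, the same two facts give directly $C=A\cdot\sum_{l\in\Z}\La^l=A(1)\cdot\sum_{l\in\Z}\La^l=0$.

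The only point requiring care is that the product $A\cdot\iota_{\La}\Delta^{-1}$ is \emph{not} defined for a general $A\in\mathcal{A}[[\La^{-1}]][\La]$, since computing a single coefficient of it would require an infinite sum in $\mathcal{A}$; one therefore cannot split $C$ as $A\cdot\big(\iota_{\La^{-1}}\Delta^{-1}-\iota_{\La}\Delta^{-1}\big)$ until $A=B$, and hence the Laurent-polynomiality of $A$, has been established. Sequencing the proof so that $A=B$ is extracted first---by the harmless right-multiplication by $\Delta$---is precisely what keeps every manipulation legitimate; everything else is routine coefficient bookkeeping.
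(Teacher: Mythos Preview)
Your proof is correct. The paper itself does not prove this lemma; it is quoted from \cite[Lemma~5]{Yang2024}, and the only commentary the paper adds is the remark that $A=B$ forces $A$ to be a finite sum in $\La$, so that $A(1)$ is meaningful---precisely the observation you make after extracting $A=B$. Your argument (right-multiply by $\Delta$ to get $A=B$, then use $\iota_{\La^{-1}}\Delta^{-1}-\iota_{\La}\Delta^{-1}=\sum_{l\in\Z}\La^l$ on the now-Laurent-polynomial $A$) is the natural one, and your care in not splitting $C$ as $A\cdot(\iota_{\La^{-1}}\Delta^{-1}-\iota_{\La}\Delta^{-1})$ until $A$ is known to be a Laurent polynomial is exactly the point that needs watching.
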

\begin{remark}
    Note that the equation $A=B$ in Lemma \ref{A(1)} implies that $A$ is a finite sum of $\La$, so $A(1)$ makes sense.
\end{remark}
Next notice that \eqref{mToda} is the mToda hierarchy \cite{RuiCheng2024}, hence we can introduce the wave functions $\Psi_a(n,\bft,z)$ and the adjoint wave functions $\Psi_a^*(n,\bft,z)$ as follows,
\begin{align}
    &\Psi_1(n,\bft,z)=\frac{\tau_{0,n}(\bft-[z^{-1}]_1)}{\tau_{1,n}(\bft)}z^ne^{\xi(\bft^{(1)},z)}=\psi_1(n,\bft,z)z^ne^{\xi(\bft^{(1)},z)},\label{Psi1}\\
    &\Psi_1^*(n,\bft,z)=\frac{\tau_{1,n}(\bft+[z^{-1}]_1)}{\tau_{0,n}(\bft)}z^{-n}e^{-\xi(\bft^{(1)},z)}=\psi_1^*(n,\bft,z)z^{-n}e^{-\xi(\bft^{(1)},z)},\label{Psi1*}\\
    &\Psi_2(n,\bft,z)=\frac{\tau_{0,n+1}(\bft-[z]_2)}{\tau_{1,n}(\bft)}z^ne^{\xi(\bft^{(2)},z^{-1})}=\psi_2(n,\bft,z)z^ne^{\xi(\bft^{(2)},z^{-1})}, \label{Psi2}\\
    &\Psi_2^*(n,\bft,z)=\frac{\tau_{1,n-1}(\bft+[z]_2)}{\tau_{0,n}(\bft)}z^{1-n}e^{-\xi(\bft^{(2)},z^{-1})}=\psi_2^*(n,\bft,z)z^{-n}e^{-\xi(\bft^{(2)},z^{-1})},\label{Psi2*}
\end{align}
where $\psi_i$ and $\psi^*_i$ have the following expansions of $z$,
\begin{alignat*}{2}
    &\psi_1(n,\bft,z)=\sum_{k\geq0}\psi_{1,k}(n,\bft)z^{-k},\qquad &&\psi_1^*(n,\bft,z)=\sum_{k\geq0}\psi_{1,k}^*(n,\bft)z^{-k},\\
    &\psi_2(n,\bft,z)=\sum_{k\geq0}\psi_{2,k}(n,\bft)z^k,\qquad &&\psi_2^*(n,\bft,z)=\sum_{k\geq0}\psi_{2,k}^*(n,\bft)z^{k+1}.
\end{alignat*}
Here $\psi_{a,k}(n,\bft)$ and $\psi_{a,k}^*(n,\bft)$ can be expressed by $\tau_{0,n}(\bft)$ and $\tau_{1,n}(\bft)$. In particular, $\psi_{1,0}=\tau_{0,n}/\tau_{1,n},$ $\psi_{1,0}^*=\tau_{1,n}/\tau_{0,n}$ and $\psi_{2,0}=\tau_{0,n+1}/\tau_{1,n},~\psi_{2,0}^*=\tau_{1,n-1}/\tau_{0,n}.$
\begin{lemma}[\hspace{-0.1pt}{\cite{LiuYue2024}}]
    Assume that $(f_n,g_n)$ satisfies the mToda hierarchy, i.e.
    \begin{equation}\label{eq:fg}
        \begin{aligned}
            &\oint_{C_\infty}\frac{dz}{2\pi\textnormal{\bf i}z}z^{n-n'}f_n(\bft-[z^{-1}]_1)g_{n'}(\bft'+[z^{-1}]_1)e^{\xi(\bft^{(1)}-\bft'^{(1)},z)}\\
            &\quad +\oint_{C_0}\frac{dz}{2\pi\textnormal{\bf i}z}z^{n-n'+1}f_{n+1}(\bft-[z]_2)g_{n'-1}(\bft'+[z]_2)e^{\xi(\bft^{(2)}-\bft'^{(2)},z^{-1})}=g_{n}(\bft)f_{n'}(\bft'),
        \end{aligned}
    \end{equation}
    then
    \begin{align}
            &g_{n+1}(\bft+[z^{-1}]_1)f_{n}(\bft)-z\cdot g_{n}(\bft+[z^{-1}]_1)f_{n+1}(\bft)=-z\cdot f_{n+1}(\bft+[z^{-1}]_1)g_{n}(\bft),\label{mToda1}\\ 
            &g_{n}(\bft+[z]_2)f_{n}(\bft)-z\cdot g_{n-1}(\bft+[z]_2)f_{n+1}(\bft)=f_{n}(\bft+[z]_2)g_{n}(\bft).\label{mToda2}
        \end{align}
\end{lemma}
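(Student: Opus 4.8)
The plan is to expand both sides of the bilinear equation \eqref{eq:fg} near $z=\infty$ and near $z=0$ respectively, exactly as one does when extracting Lax/residue identities from a KP-type bilinear relation. For the first identity \eqref{mToda1}, I would set $\bft'=\bft+[z^{-1}]_1$ in \eqref{eq:fg} and keep $n'$ generic (eventually $n'=n+1$ will be the relevant choice, but it is cleaner to track the general shift). With this substitution the exponential factor $e^{\xi(\bft^{(1)}-\bft'^{(1)},w)}$ becomes $e^{-\xi([z^{-1}],w)}=1-w/z$, which truncates the $C_\infty$ integrand to a simple rational expression in $w$; the residue at $w=\infty$ then picks out only finitely many Taylor coefficients. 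Meanwhile the $C_0$ integral has $\bft'^{(2)}=\bft^{(2)}$, so its exponential is $1$, and the $C_0$ residue vanishes because $f_{n+1}(\bft-[w]_2)g_{n'-1}(\bft+[w]_2)$ is regular at $w=0$ and the explicit prefactor $w^{n-n'+1}/w$ contributes no pole once $n'$ is chosen appropriately (this is where the precise index bookkeeping matters). Carrying out the $C_\infty$ residue and then specializing $n'=n+1$ should collapse the left side to $g_{n+1}(\bft+[z^{-1}]_1)f_n(\bft) - z\, g_n(\bft+[z^{-1}]_1)f_{n+1}(\bft)$ while the right side of \eqref{eq:fg} becomes $g_n(\bft)f_{n+1}(\bft+[z^{-1}]_1)$ times the appropriate power of $z$; after clearing denominators this is exactly \eqref{mToda1}.

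For the second identity \eqref{mToda2}, I would instead substitute $\bft'=\bft+[z]_2$, so that now the $C_0$ exponential $e^{\xi(\bft^{(2)}-\bft'^{(2)},w^{-1})}=e^{-\xi([z],w^{-1})}=1-z/w$ truncates the $C_0$ integrand to a rational function of $w$, and the $C_\infty$ exponential becomes trivial. The roles of the two contour integrals are thus exchanged relative to the first part: the $C_\infty$ residue now vanishes (regularity at $w=\infty$ together with the index choice $n'=n+1$) and the $C_0$ residue is computed by reading off one or two Taylor coefficients in $w$. Specializing $n'=n+1$ should turn the left side into $g_n(\bft+[z]_2)f_n(\bft) - z\, g_{n-1}(\bft+[z]_2)f_{n+1}(\bft)$ and the right side of \eqref{eq:fg} into $f_n(\bft+[z]_2)g_n(\bft)$, which is \eqref{mToda2}.

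The main obstacle I anticipate is purely bookkeeping rather than conceptual: getting the powers of $z$, the integer shift $z^{n-n'}$ versus $z^{n-n'+1}$, and the $\bft \mapsto \bft - [w^{\pm 1}]_a$ shifts all consistent so that (i) the contour integral that is supposed to vanish genuinely has no residue for the chosen $n'$, and (ii) the surviving contour integral contributes precisely the two terms displayed and no extra ones. Concretely, I need to check that after the substitution the $C_\infty$ integrand (in part one) has the form $z^{-1}(1-w/z)\cdot[\text{power series in }w^{-1}]\cdot w^{n-n'}$ so that $\Res_w$ extracts exactly the coefficients of $w^{0}$ and $w^{1}$ in the series, giving the two terms $g_{n+1}f_n$ and $-z\,g_n f_{n+1}$; the requirement that no $w^{\geq 2}$ term survives is exactly the constraint $n'=n+1$. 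Once the indices are pinned down, the rest is a one-line residue computation and the identity follows by multiplying through by the common $\tau$-denominators implicit in \eqref{eq:fg}. One should also note, as in Lemma \ref{LemForD} and the surrounding discussion, that $f_n, g_n$ are assumed to lie in the appropriate spaces so that all the formal series and contour manipulations are legitimate.
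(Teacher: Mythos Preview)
Your overall strategy---specialise $(n',\bft')$ in \eqref{eq:fg} so that one exponential collapses to a polynomial and then read off residues---is exactly what the paper does (it says simply: set $(n',\bft')=(n+1,\bft-[z^{-1}]_1)$ for \eqref{mToda1} and $(n',\bft')=(n,\bft-[z]_2)$ for \eqref{mToda2}, then compute). So the route is right; what goes wrong is the bookkeeping you flagged, and in a way that would actually derail the computation rather than just slow it down.

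For \eqref{mToda1} with your substitution $\bft'=\bft+[\lambda^{-1}]_1$, $n'=n+1$: the $C_\infty$ integrand becomes $w^{-1}(1-w/\lambda)\cdot[\text{series in }w^{-1}]$, whose $w^0$ coefficient is a \emph{single} term $-\lambda^{-1}f_n(\bft)g_{n+1}(\bft+[\lambda^{-1}]_1)$, not two. Meanwhile the $C_0$ integral does \emph{not} vanish: since $n-n'+1=0$ and the second exponential is $1$, the $C_0$ residue is $f_{n+1}(\bft)g_n(\bft+[\lambda^{-1}]_1)$. The two left-hand terms of \eqref{mToda1} thus come from the two contours, one each, and adding them and multiplying by $-\lambda$ gives \eqref{mToda1}.

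For \eqref{mToda2} the index choice $n'=n+1$ is genuinely wrong. With $n'=n+1$ and $\bft'=\bft+[\lambda]_2$ the $C_0$ integrand is $w^{0}(1-\lambda/w)\cdot G(w)$ with $G$ regular at $0$; its $w^0$ coefficient is $G(0)-\lambda G'(0)$, and $G'(0)$ involves $t^{(2)}_1$-derivatives of $f_{n+1},g_n$ rather than the clean shifts in \eqref{mToda2}. The correct specialisation is $n'=n$ (as in the paper): then $C_\infty$ contributes $f_n(\bft)g_n(\bft+[\lambda]_2)$ (again nonzero), $C_0$ contributes $-\lambda f_{n+1}(\bft)g_{n-1}(\bft+[\lambda]_2)$, and equating to the right side $g_n(\bft)f_n(\bft+[\lambda]_2)$ is precisely \eqref{mToda2}. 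So: same method as the paper, but both contour integrals survive in each case and the second index must be $n'=n$, not $n+1$.
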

\begin{proof}
    Letting $(n',\bft')=(n+1,\bft-[z^{-1}]_1)$ and $(n,\bft-[z]_2)$ in \eqref{eq:fg}, we can get these identities by direct computations.
\end{proof}

\begin{lemma}\label{lem:DeltaqPsi}
    Given mToda tau pairs $(\tau_{0,n},\tau_{1,n}),(\tau_{0,n},\rho_{i,n})$ and $(\sigma_{i,n},\tau_{1,n})$, if define 
    \begin{equation}
        q_{i,n}(\bft)=\frac{\rho_{i,n}(\bft)}{\tau_{1,n}(\bft)},\qquad r_{i,n}(\bft)=\frac{\sigma_{i,n}(\bft)}{\tau_{0,n}(\bft)},\label{qr}
    \end{equation}
    then functions $q_{i,n},r_{i,n}$ and $\Psi_a,\Psi_a^*$ defined in \eqref{Psi1}--\eqref{Psi2*} and \eqref{qr} satisfy
    \begin{align}
        &\Delta\big(q_{i,n}(\bft+[z^{-1}]_1)\Psi_1^*(n,\bft,z)\big)=q_{i,n}(\bft)\Delta\big(\Psi_1^*(n,\bft,z)\big),\label{eq:qPsi1*}\\
        &\Delta\big(q_{i,n-1}(\bft+[z]_2)\Psi_2^*(n,\bft,z)\big)=q_{i,n}(\bft)\Delta\big(\Psi_2^*(n,\bft,z)\big),\label{eq:qPsi2*}\\
        &\Delta\big(r_{i,n}(\bft-[z^{-1}]_1)\Psi_1(n,\bft,z)\big)=r_{i,n+1}(\bft)\Delta\big(\Psi_1(n,\bft,z)\big),\label{eq:rPsi1}\\
        &\Delta\big(r_{i,n+1}(\bft-[z]_2)\Psi_2(n,\bft,z)\big)=r_{i,n+1}(\bft)\Delta\big(\Psi_2(n,\bft,z)\big).\label{eq:rPsi2}
    \end{align}
\end{lemma}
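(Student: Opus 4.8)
The plan is to prove each of the four identities \eqref{eq:qPsi1*}--\eqref{eq:rPsi2} by reducing it to the mToda bilinear identities \eqref{mToda1}--\eqref{mToda2}, applied to the appropriate mToda tau pair rather than only to $(\tau_{0,n},\tau_{1,n})$. The key observation is that $(\tau_{0,n},\rho_{i,n})$ and $(\sigma_{i,n},\tau_{1,n})$ are also mToda tau pairs (this is exactly what \eqref{cmToda2}--\eqref{cmToda3} assert), so the lemma preceding this statement produces, for each such pair, its own pair of relations of the form \eqref{mToda1}--\eqref{mToda2}. Concretely, for \eqref{eq:qPsi1*} I would write out $\Delta$ acting on $q_{i,n}(\bft+[z^{-1}]_1)\Psi_1^*(n,\bft,z)$: since $\Psi_1^*(n,\bft,z)=\bigl(\tau_{1,n}(\bft+[z^{-1}]_1)/\tau_{0,n}(\bft)\bigr)z^{-n}e^{-\xi(\bft^{(1)},z)}$ and $q_{i,n}=\rho_{i,n}/\tau_{1,n}$, the product $q_{i,n}(\bft+[z^{-1}]_1)\Psi_1^*(n,\bft,z)$ equals $\bigl(\rho_{i,n}(\bft+[z^{-1}]_1)/\tau_{0,n}(\bft)\bigr)z^{-n}e^{-\xi(\bft^{(1)},z)}$, i.e. it is the ``adjoint wave function'' built from the mToda tau pair $(\tau_{0,n},\rho_{i,n})$. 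Then the claim \eqref{eq:qPsi1*} is precisely the statement that this quantity, shifted by $\Delta=\La-1$ in $n$ and then divided out by $\Psi_1^*$, reproduces multiplication by $q_{i,n}(\bft)$; multiplying through by $\tau_{0,n}(\bft)\tau_{0,n+1}(\bft)z^{n}e^{\xi(\bft^{(1)},z)}$ turns this into a polynomial identity among $\rho$'s and $\tau_0$'s that is exactly \eqref{mToda1} with $(f_n,g_n)=(\tau_{0,n},\rho_{i,n})$ (up to relabeling and an elementary rearrangement).

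The same strategy handles the other three. For \eqref{eq:qPsi2*}, the product $q_{i,n-1}(\bft+[z]_2)\Psi_2^*(n,\bft,z)=\bigl(\rho_{i,n-1}(\bft+[z]_2)/\tau_{0,n}(\bft)\bigr)z^{1-n}e^{-\xi(\bft^{(2)},z^{-1})}$ is the $\Psi_2^*$-type object for the pair $(\tau_{0,n},\rho_{i,n})$, and \eqref{eq:qPsi2*} follows from \eqref{mToda2} for that pair. For \eqref{eq:rPsi1}, since $r_{i,n}=\sigma_{i,n}/\tau_{0,n}$ and $\Psi_1(n,\bft,z)=\bigl(\tau_{0,n}(\bft-[z^{-1}]_1)/\tau_{1,n}(\bft)\bigr)z^{n}e^{\xi(\bft^{(1)},z)}$, the product $r_{i,n}(\bft-[z^{-1}]_1)\Psi_1(n,\bft,z)=\bigl(\sigma_{i,n}(\bft-[z^{-1}]_1)/\tau_{1,n}(\bft)\bigr)z^{n}e^{\xi(\bft^{(1)},z)}$ is the $\Psi_1$-type object for the pair $(\sigma_{i,n},\tau_{1,n})$; note the index in $r_{i,n+1}$ on the right of \eqref{eq:rPsi1} matches the $n+1$ appearing in \eqref{mToda1} when $f$ is shifted, so \eqref{eq:rPsi1} is \eqref{mToda1} applied to $(f_n,g_n)=(\sigma_{i,n},\tau_{1,n})$. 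Finally \eqref{eq:rPsi2} comes the same way from \eqref{mToda2} for $(\sigma_{i,n},\tau_{1,n})$. So the whole proof is: (i) rewrite each left-hand-side product as a wave-function-type object for the correct mToda pair, using the definitions \eqref{Psi1}--\eqref{Psi2*} and \eqref{qr}; (ii) clear denominators and the exponential/power-of-$z$ factors; (iii) invoke \eqref{mToda1} or \eqref{mToda2} for that pair; (iv) divide back to recover the stated form.

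The step I expect to require the most care is bookkeeping the $n$-shifts and the extra powers of $z$: $\Psi_2^*$ carries a $z^{1-n}$ while its stripped form $\psi_2^*$ is declared with a $z^{-n}$ normalization, and $\Delta=\La-1$ acts only on the discrete variable $n$ while $q_{i,n\mp1}(\bft\pm[z^{\mp1}]_a)$ has both its index and its time-argument shifted; one must check that applying $\La$ to the product indeed produces the combination appearing in \eqref{mToda1}/\eqref{mToda2} rather than a near-miss. There is also a sign/constant to track when passing between $\Psi_2^*$ (with its $z^{1-n}$) and the $\psi_2^*\,z^{-n}$ form. None of this is deep, but it is the place where an error would hide, so I would carry out the $M$-part (\eqref{eq:qPsi1*}, \eqref{eq:rPsi1}) and the $N$-part (\eqref{eq:qPsi2*}, \eqref{eq:rPsi2}) separately and check the specialization of arguments against the statement of the preceding lemma (where $(n',\bft')$ was set to $(n+1,\bft-[z^{-1}]_1)$ and to $(n,\bft-[z]_2)$).
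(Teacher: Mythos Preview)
Your approach is the same as the paper's and is correct in outline, but there is one point where your description is incomplete. When you clear denominators in \eqref{eq:qPsi1*}, the resulting identity is \emph{not} purely ``among $\rho$'s and $\tau_0$'s'': the right-hand side $q_{i,n}(\bft)\,\Delta\bigl(\Psi_1^*(n,\bft,z)\bigr)$ still carries $\tau_{1,n}$ through $\Psi_1^*$. Concretely, after multiplying both sides by $\tau_{0,n}(\bft)\tau_{0,n+1}(\bft)\,z^{n}e^{\xi(\bft^{(1)},z)}$ one obtains
\[
z^{-1}\rho_{i,n+1}(\bft+[z^{-1}]_1)\tau_{0,n}-\rho_{i,n}(\bft+[z^{-1}]_1)\tau_{0,n+1}
=q_{i,n}(\bft)\Bigl(z^{-1}\tau_{1,n+1}(\bft+[z^{-1}]_1)\tau_{0,n}-\tau_{1,n}(\bft+[z^{-1}]_1)\tau_{0,n+1}\Bigr),
\]
which is not a single instance of \eqref{mToda1}. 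What the paper does---and what your computation will force you to do---is apply \eqref{mToda1} \emph{twice}: once with $(f_n,g_n)=(\tau_{0,n},\rho_{i,n})$ to collapse the left side to $-\tau_{0,n+1}(\bft+[z^{-1}]_1)\rho_{i,n}(\bft)$, and once with $(f_n,g_n)=(\tau_{0,n},\tau_{1,n})$ to collapse the right side to $-q_{i,n}(\bft)\,\tau_{0,n+1}(\bft+[z^{-1}]_1)\tau_{1,n}(\bft)$; equality then follows from $\rho_{i,n}=q_{i,n}\tau_{1,n}$. The same double use of \eqref{mToda1} (resp.\ \eqref{mToda2}) is needed in each of the four identities: one application for the ``modified'' pair $(\tau_{0,n},\rho_{i,n})$ or $(\sigma_{i,n},\tau_{1,n})$ to handle the side containing $\rho$ or $\sigma$, and one for the basic pair $(\tau_{0,n},\tau_{1,n})$ to handle $\Delta(\Psi_a)$ or $\Delta(\Psi_a^*)$. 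With this correction your plan goes through and coincides with the paper's proof.
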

\begin{proof}
    We give a proof of \eqref{eq:qPsi1*}. Recalling the definitions \eqref{Psi1*} and \eqref{mToda1} for $(f_n,g_n)=(\tau_{0,n},\rho_{i,n})$, we have
    \begin{align*}
        \Delta\big(q_{i,n}(\bft+[z^{-1}]_1)\Psi_1^*(n,\bft,z)\big)
        =\frac{z^{-n}e^{-\xi(\bft^{(1)},z)}}{\tau_{0,n}(\bft)\tau_{0,n+1}(\bft)}\tau_{0,n+1}(\bft+[z^{-1}]_1)\rho_{i,n}(\bft).
    \end{align*}
    Since $\rho_{i,n}=q_{i,n}\tau_{1,n}$, we can obtain the following by using \eqref{mToda1} for $(f_n,g_n)=(\tau_{0,n},\tau_{1,n})$,
    \begin{align*}
        q_{i,n}(\bft)\frac{z^{-n}e^{-\xi(\bft^{(1)},z)}}{\tau_{0,n}(\bft)\tau_{0,n+1}(\bft)}\tau_{0,n+1}(\bft+[z^{-1}]_1)\tau_{1,n}(\bft)=q_{i,n}(\bft)\Delta\big(\Psi_1^*(n,\bft,z)\big).
    \end{align*}
    A similar proof for the remaining equations is available.
\end{proof}
\begin{proposition}
    For $\tau_{0,n},\tau_{1,n},\rho_{i,n}$ and $\sigma_{i,n}$ satisfying \eqref{cmToda1}--\eqref{mToda}, if we define $\Psi_a,\Psi_a^*$ by \eqref{Psi1}--\eqref{Psi2*} and $q_{i,n},r_{i,n}$ by \eqref{qr}, then
    \begin{align}
        &\begin{aligned}\label{Wave3}
            \oint_{C_\infty}\frac{dz}{2\pi\textnormal{\bf i} z}&z^M\Psi_1(n,\bft,z)\Psi_1^*(n',\bft',z)\\
            &+\oint_{C_0}\frac{dz}{2\pi\textnormal{\bf i} z}z^{-N}\Psi_2(n,\bft,z)\Psi_2^*(n',\bft',z)=\sum_{i=1}^{m}q_{i,n}(\bft)r_{i,n'}(\bft'),
        \end{aligned}\\
        &\begin{aligned}\label{Wave}
            \oint_{C_\infty}\frac{dz}{2\pi\textnormal{\bf i} z}&\Psi_1(n,\bft,z)\cdot \iota_{\La}\Delta^{-1}\Big( q_{i,n'}(\bft')\cdot \Delta\big(\Psi_1^*(n',\bft',z)\big)\Big)\\
            &+\oint_{C_0}\frac{dz}{2\pi\textnormal{\bf i} z}\Psi_2(n,\bft,z)\cdot\iota_{\La^{-1}}\Delta^{-1}\Big(q_{i,n'}(\bft')\cdot \Delta\big(\Psi_2^*(n',\bft',z)\big)\Big)=q_{i,n}(\bft),
        \end{aligned}\\
        &\begin{aligned}\label{Wave1'}
            \oint_{C_\infty}\frac{dz}{2\pi\textnormal{\bf i} z}&\iota_{\La^{-1}}\Delta^{-1}\Big(r_{i,n'+1}(\bft')\cdot\Delta\big(\Psi_1(n',\bft',z)\big)\Big)\cdot \Psi_1^*(n,\bft,z)\\
            &+\oint_{C_0}\frac{dz}{2\pi\textnormal{\bf i} z}\iota_{\La}\Delta^{-1}\Big(r_{i,n'+1}(\bft')\cdot\Delta\big(\Psi_2(n',\bft',z)\big)\Big)\cdot\Psi_2^*(n,\bft,z)=r_{i,n}(\bft),
        \end{aligned}\\
        &\begin{aligned}\label{Wave2'}
            \oint_{C_\infty}\frac{dz}{2\pi\textnormal{\bf i} z}\Psi_1(n,\bft,z)\cdot\Psi_1^*(n',\bft',z)
            +\oint_{C_0}\frac{dz}{2\pi\textnormal{\bf i} z}\Psi_2(n,\bft,z)\cdot\Psi_2^*(n',\bft',z)=1.
        \end{aligned}
\end{align}
\end{proposition}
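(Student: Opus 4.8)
The plan is to translate the four bilinear tau-function identities \eqref{cmToda1}--\eqref{mToda} into the wave-function identities \eqref{Wave3}--\eqref{Wave2'} by direct substitution of the definitions \eqref{Psi1}--\eqref{Psi2*} and \eqref{qr}, and then absorb the spurious exponential/power-of-$z$ prefactors by evaluating residues at $z=\infty$ and $z=0$. First I would handle \eqref{Wave2'} and \eqref{Wave3}, which are the most transparent cases. For \eqref{Wave2'}, multiply both sides of \eqref{mToda} by $z^{n'-n}e^{-\xi(\bft^{(1)}-\bft'^{(1)},z)}$ wherever it is needed — more precisely, observe that $\Psi_1(n,\bft,z)\Psi_1^*(n',\bft',z)=\psi_1(n,\bft,z)\psi_1^*(n',\bft',z)z^{n-n'}e^{\xi(\bft^{(1)}-\bft'^{(1)},z)}$, so the contour integral over $C_\infty$ in \eqref{Wave2'} is exactly the first integral in \eqref{mToda}; similarly the $C_0$ piece matches the second integral once one checks that $\Psi_2\Psi_2^*$ reproduces $z^{n-n'+1}\tau_{0,n+1}\tau_{1,n'-1}e^{\xi(\bft^{(2)}-\bft'^{(2)},z^{-1})}/(\tau_{1,n}\tau_{0,n'})$, where the extra $z^{+1}$ comes from the definition \eqref{Psi2*} of $\Psi_2^*$ carrying $z^{1-n}$ rather than $z^{-n}$. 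Dividing \eqref{mToda} through by $\tau_{1,n}(\bft)\tau_{0,n'}(\bft')$ then yields \eqref{Wave2'}. The same bookkeeping applied to \eqref{cmToda1}, with the extra factors $z^{M}$ and $z^{N}$ matching the $z^{M+n-n'}$ and $z^{N+n-n'+1}$ in \eqref{cmToda1}, and dividing by $\tau_{1,n}(\bft)\tau_{0,n'}(\bft')$ while using $\rho_{i,n}/\tau_{1,n}=q_{i,n}$ and $\sigma_{i,n'}/\tau_{0,n'}=r_{i,n'}$, gives \eqref{Wave3}.

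The genuinely new ingredient is needed for \eqref{Wave} and \eqref{Wave1'}, where the operators $\iota_{\La}\Delta^{-1}$ and $\iota_{\La^{-1}}\Delta^{-1}$ appear. Here I would use Lemma \ref{lem:DeltaqPsi}. For \eqref{Wave}, start from \eqref{cmToda2} divided by $\tau_{1,n}(\bft)\tau_{0,n'}(\bft')$: the right side becomes $q_{i,n}(\bft)$, the $C_\infty$ integrand becomes $\Psi_1(n,\bft,z)\cdot q_{i,n'}(\bft'+[z^{-1}]_1)\Psi_1^*(n',\bft',z)$ (using $\rho_{i,n'}(\bft'+[z^{-1}]_1)/\tau_{0,n'}(\bft') = q_{i,n'}(\bft'+[z^{-1}]_1)\,\tau_{1,n'}(\bft'+[z^{-1}]_1)/\tau_{0,n'}(\bft')$, and recognizing the latter ratio inside $\Psi_1^*$), and likewise the $C_0$ integrand becomes $\Psi_2(n,\bft,z)\cdot q_{i,n'-1}(\bft'+[z]_2)\Psi_2^*(n',\bft',z)$. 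So after this step \eqref{cmToda2} reads, in wave-function language,
\begin{align*}
    \oint_{C_\infty}\frac{dz}{2\pi\textnormal{\bf i}z}&\,\Psi_1(n,\bft,z)\,q_{i,n'}(\bft'+[z^{-1}]_1)\Psi_1^*(n',\bft',z)\\
    &+\oint_{C_0}\frac{dz}{2\pi\textnormal{\bf i}z}\,\Psi_2(n,\bft,z)\,q_{i,n'-1}(\bft'+[z]_2)\Psi_2^*(n',\bft',z)=q_{i,n}(\bft).
\end{align*}
The remaining task is to show that the shifted products $q_{i,n'}(\bft'+[z^{-1}]_1)\Psi_1^*(n',\bft',z)$ and $q_{i,n'-1}(\bft'+[z]_2)\Psi_2^*(n',\bft',z)$ can be rewritten as $\iota_{\La}\Delta^{-1}\big(q_{i,n'}(\bft')\Delta(\Psi_1^*(n',\bft',z))\big)$ and $\iota_{\La^{-1}}\Delta^{-1}\big(q_{i,n'}(\bft')\Delta(\Psi_2^*(n',\bft',z))\big)$, respectively (as functions of $n'$ with $\La$ acting on $n'$). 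This is precisely where \eqref{eq:qPsi1*} and \eqref{eq:qPsi2*} enter: applying $\Delta$ to the candidate right-hand sides and matching with the left-hand sides, together with a decay/growth check on the $z$-expansion coefficients to pin down the constant of $\Delta^{-1}$-integration (the $\iota_{\La}$ versus $\iota_{\La^{-1}}$ choice), identifies the two sides. The identity \eqref{Wave1'} is handled symmetrically, starting from \eqref{cmToda3}, dividing by $\tau_{1,n}(\bft)\tau_{0,n'}(\bft')$, using $\sigma_{i,n}/\tau_{0,n}=r_{i,n}$ and the pair \eqref{eq:rPsi1}, \eqref{eq:rPsi2} from Lemma \ref{lem:DeltaqPsi}, with the index shift $n'\mapsto n'+1$ appearing because $\sigma$ enters \eqref{cmToda3} with a plain index while $r_{i,n'+1}$ is what appears inside $\Delta$.

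The main obstacle I anticipate is the bookkeeping around the $\iota_{\La}\Delta^{-1}$ versus $\iota_{\La^{-1}}\Delta^{-1}$ distinction and the associated index shifts: one must be careful that $\Delta$ and $\iota_{\La^{\pm1}}\Delta^{-1}$ are acting on the discrete variable $n'$ (not $n$), that the contour $C_\infty$ naturally pairs with the expansion in $z^{-1}$ and hence with $\iota_{\La}$ on the $\Psi_1$-side but $\iota_{\La^{-1}}$ on the $\Psi_2$-side (and vice versa in \eqref{Wave1'}), and that the ``constant of integration'' killed by the residue is exactly the term that makes the right-hand sides $q_{i,n}(\bft)$ and $r_{i,n}(\bft)$ rather than those quantities plus an $n$-independent piece. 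Once Lemma \ref{lem:DeltaqPsi} is invoked to collapse the shifted products, everything else is routine substitution and residue extraction; no estimate or convergence subtlety beyond the formal $z$-expansions already set up in \eqref{Psi1}--\eqref{Psi2*} is needed.
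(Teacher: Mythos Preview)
Your proposal is correct and follows essentially the same approach as the paper's proof, which simply states: divide both sides of \eqref{cmToda1}--\eqref{mToda} by $\tau_{1,n}(\bft)\tau_{0,n'}(\bft')$ and use Lemma~\ref{lem:DeltaqPsi}. Your elaboration of how \eqref{eq:qPsi1*}--\eqref{eq:rPsi2} are inverted via $\iota_{\La^{\pm1}}\Delta^{-1}$ (with the choice fixed by the $z$-expansion structure) is exactly the content the paper leaves implicit; note only that for \eqref{Wave1'} you must first swap $(n,\bft)\leftrightarrow(n',\bft')$ in \eqref{cmToda3}, so the divisor is $\tau_{1,n'}(\bft')\tau_{0,n}(\bft)$ in the relabeled variables.
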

\begin{proof}
    By dividing $\tau_{1,n}(\bft)\tau_{0,n'}(\bft')$ at both sides of \eqref{cmToda1}--\eqref{mToda} and using Lemma \ref{lem:DeltaqPsi}, one can obtain \eqref{Wave3}--\eqref{Wave2'}.
\end{proof}

Note that $a(n)\Delta(b(n))=\Delta(a(n)\cdot b(n))-\Delta(a(n))b(n+1)$, then we have the following corollary.

\begin{corollary}
    The identities \eqref{Wave}--\eqref{Wave2'} are equivalent to the following identities,
    \begin{align}
        &\begin{aligned}\label{Wave'}
            \oint_{C_\infty}&\frac{dz}{2\pi\text{i} z}\iota_{\La}\Delta^{-1}\Big(\Delta\big(q_{i,n}(\bft)\big)\cdot \Psi_1^*(n+1,\bft,z)\Big)\cdot\Psi_1(n',\bft',z)\\
            &+\oint_{C_0}\frac{dz}{2\pi\text{i} z}\iota_{\La^{-1}}\Delta^{-1}\Big(\Delta\big(q_{i,n}(\bft)\big)\cdot\Psi_2^*(n+1,\bft,z)\Big)\cdot\Psi_2(n',\bft',z)=q_{i,n}(\bft)-q_{i,n'}(\bft'),
        \end{aligned}\\
        &\begin{aligned}\label{Wave1''}
            \oint_{C_\infty}&\frac{dz}{2\pi\text{i} z}\iota_{\La^{-1}}\Delta^{-1}\Big(\Delta\big(r_{i,n}(\bft)\big)\cdot\Psi_1(n,\bft,z)\Big)\cdot \Psi_1^*(n',\bft',z)\\
            &+\oint_{C_0}\frac{dz}{2\pi\text{i} z}\iota_{\La}\Delta^{-1}\Big(\Delta\big(r_{i,n}(\bft)\big)\cdot\Psi_2(n,\bft,z)\Big)\cdot\Psi_2^*(n',\bft',z)=r_{i,n}(\bft)-r_{i,n'}(\bft').
        \end{aligned}
    \end{align}
\end{corollary}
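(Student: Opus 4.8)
The plan is to establish the two equivalences $\eqref{Wave}\Leftrightarrow\eqref{Wave'}$ and $\eqref{Wave1'}\Leftrightarrow\eqref{Wave1''}$ one at a time, carrying the mToda bilinear identity \eqref{Wave2'} (available throughout, being a consequence of \eqref{mToda}) as the only external input. The two working tools are the discrete Leibniz rule $a(n)\Delta(b(n))=\Delta(a(n)b(n))-\Delta(a(n))b(n+1)$ recalled before the corollary, together with its variant $\La(a)\Delta(b)=\Delta(ab)-\Delta(a)b$ (both immediate from $\Delta(ab)=\La(a)\Delta(b)+\Delta(a)b$), and the fact that on the functions carrying the factor $z^{\mp n}$ appearing in $\Psi_a,\Psi_a^*$ the operators $\iota_{\La}\Delta^{-1}$ and $\iota_{\La^{-1}}\Delta^{-1}$ are genuine one-sided inverses of $\Delta$, i.e. $\iota_{\La}\Delta^{-1}\Delta f=f$ and $\iota_{\La^{-1}}\Delta^{-1}\Delta f=f$; the telescoping tails $f(n\pm i)$ vanish as $i\to\infty$ precisely because $z^{-n}\to0$ for $|z|$ large and $z^{n}\to0$ for $|z|$ small, which is exactly where $C_\infty$ and $C_0$ sit.

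For \eqref{Wave}, apply the Leibniz rule with $a=q_{i,n'}(\bft')$ and $b=\Psi_1^*(n',\bft',z)$ (difference operator in $n'$) and then $\iota_{\La}\Delta^{-1}$, giving
\[
\iota_{\La}\Delta^{-1}\big(q_{i,n'}(\bft')\,\Delta(\Psi_1^*(n',\bft',z))\big)=q_{i,n'}(\bft')\Psi_1^*(n',\bft',z)-\iota_{\La}\Delta^{-1}\big(\Delta(q_{i,n'}(\bft'))\,\Psi_1^*(n'+1,\bft',z)\big),
\]
and likewise with $\iota_{\La^{-1}}\Delta^{-1}$ on the $C_0$ term. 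Substituting into \eqref{Wave}, the pieces $\oint_{C_\infty}\frac{dz}{2\pi\mathbf i z}\Psi_1(n)q_{i,n'}\Psi_1^*(n')+\oint_{C_0}\frac{dz}{2\pi\mathbf i z}\Psi_2(n)q_{i,n'}\Psi_2^*(n')$ collapse to $q_{i,n'}(\bft')$ by \eqref{Wave2'}; isolating what remains and relabelling $(n,\bft)\leftrightarrow(n',\bft')$ (legitimate since \eqref{Wave} holds for all arguments) produces exactly \eqref{Wave'}, once one notes that the scalar $\Psi_1(n',\bft',z)$ commutes past $\iota_{\La}\Delta^{-1}$, which acts only in the $n$-index. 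Every step is an equality, so the argument reverses and the equivalence follows.

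The identity \eqref{Wave1'} is handled identically, the only change being that the coefficient $r_{i,n'+1}(\bft')$ is already shifted, so one uses the variant $\La(a)\Delta(b)=\Delta(ab)-\Delta(a)b$ with $a=r_{i,n'}(\bft')$, $b=\Psi_1(n',\bft',z)$, obtaining
\[
\iota_{\La^{-1}}\Delta^{-1}\big(r_{i,n'+1}(\bft')\,\Delta(\Psi_1(n',\bft',z))\big)=r_{i,n'}(\bft')\Psi_1(n',\bft',z)-\iota_{\La^{-1}}\Delta^{-1}\big(\Delta(r_{i,n'}(\bft'))\,\Psi_1(n',\bft',z)\big),
\]
and the analogue with $\iota_{\La}\Delta^{-1}$ on $C_0$. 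Using \eqref{Wave2'} with $(n,\bft)$ and $(n',\bft')$ interchanged to collapse the surviving product term to $r_{i,n'}(\bft')$, then isolating the remainder and relabelling, turns \eqref{Wave1'} into \eqref{Wave1''}, again reversibly. Combining the two equivalences (and keeping \eqref{Wave2'} unchanged) proves the corollary.

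I expect the only delicate point to be justifying $\iota_{\La^{\pm1}}\Delta^{-1}\Delta=\mathrm{id}$ inside the contour integrals, i.e. checking that no boundary or constant term survives the telescoping; this must be read off from the explicit expansions \eqref{Psi1}--\eqref{Psi2*}, where the factors $z^{\pm n}$ turn the telescoped tails into genuine formal power series in $z^{-1}$ (resp.\ $z$) with vanishing limit, so that these operators really invert $\Delta$ on exactly the functions entering \eqref{Wave}--\eqref{Wave2'}. Everything else is bookkeeping of index shifts and two applications of \eqref{Wave2'}.
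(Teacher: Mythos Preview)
Your proposal is correct and follows exactly the approach the paper intends: the paper's entire proof is the single remark ``Note that $a(n)\Delta(b(n))=\Delta(a(n)\cdot b(n))-\Delta(a(n))b(n+1)$,'' and you have filled in the routine computation by applying this Leibniz rule under $\iota_{\La^{\pm1}}\Delta^{-1}$, invoking \eqref{Wave2'} to kill the product term, and relabelling. Your care about why $\iota_{\La^{\pm1}}\Delta^{-1}\Delta$ acts as the identity on the relevant expressions is the only genuine analytic point, and your justification via the $z^{\mp n}$ factors (equivalently, that each fixed power of $z$ receives only finitely many contributions from the telescoped sum) is the correct one in this formal setting.
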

We also need to introduce the following wave operators:
\begin{alignat*}{2}
    &S_a(n,\bft,\La)=\psi_a(n,\bft,\La),\qquad &&\tilde{S}_a(n,\bft,\La)=\psi_a^*(n,\bft,\La^{-1}),
\end{alignat*}
then 
\begin{alignat}{2}
    &\Psi_1(n,\bft,z)=e^{\xi(\bft^{(1)},z)}S_1(n,\bft,\La)(z^n),\qquad &&\Psi_1^*(n,\bft,z)=e^{-\xi(\bft^{(1)},z)}\tilde{S}_1(n,\bft,\La)(z^{-n}),\label{W1}\\
    &\Psi_2(n,\bft,\La)=e^{\xi(\bft^{(2)},z^{-1})}S_2(n,\bft,\La)(z^n),\quad  &&\Psi_2^*(n,\bft,z)=e^{-\xi(\bft^{(2)},z^{-1})}\tilde{S}_2(n,\bft,\La)(z^{-n}).\label{W2}
\end{alignat}
Then notice that the bilinear equation \eqref{mToda} is just the mToda hierarchy. Thus it follows from \cite{RuiCheng2024} that there are the following identities,
\begin{alignat}{2}
    &\tilde{S}_1=-\iota_{\La}\Delta^{-1}\cdot (S_1^*)^{-1},\qquad &&\tilde{S}_2=\iota_{\La^{-1}}\Delta^{-1}\cdot (S_2^*)^{-1},\label{ConjS}\\
    &\p_{t_p^{(1)}}S_1=B_{p}^{(1)}\cdot S_1-S_1\La^p,\qquad &&\p_{t_p^{(1)}}S_2=B_{p}^{(1)}\cdot S_2,\nonumber\\
    &\p_{t_p^{(2)}}S_2=B_{p}^{(2)}\cdot S_2-S_2\La^{-p},\qquad &&\p_{t_p^{(2)}}S_1=B_{p}^{(1)}\cdot S_1,\nonumber\\
    &\p_{t_p^{(a)}}\Psi_b=B_{p}^{(a)}(\Psi_b),\qquad &&\p_{t_p^{(a)}}\Psi_b^*=-\big(\Delta^{-1} B_{p}^{(a)*}\Delta\big)(\Psi_b^*),\label{partialW1}
\end{alignat}
where $B_{p}^{(1)}=(S_1\La^pS_1^{-1})_{\Delta,\geq1}$ and $B_{p}^{(2)}=(S_2\La^{-p}S_2^{-1})_{\Delta^*,\geq1}$. 
\begin{remark}\label{RemarkpW}
    Here let us explain $\Delta^{-1}$ in \eqref{partialW1}. If we let $B_p^{(1)}=\sum_{i=1}^{p}u_i\Delta^i$, then $\iota_{\La^{-1}}\Delta^{-1} B_{p}^{(1)*}=-\sum_{i=1}^{p}\La^{-1}(\Delta^*)^{i-1}u_i=\iota_{\La}\Delta^{-1} B_{p}^{(1)*}$ is valid and the same. Similarly, $\iota_{\La^{\pm1}}\Delta^{-1} B_{p}^{(2)*}$ is the same. For brevity we will denote it as $\Delta^{-1}$.
\end{remark}

Now let us derive the constraint for the mToda hierarchy. Recalling the above definitions \eqref{W1} and \eqref{W2}, the equation \eqref{Wave3} for $\bft'=\bft$ can be written as  
\begin{align*}
    \sum_{l\in\Z}\Big(&\oint_{C_\infty}\frac{dz}{2\pi\textnormal{\bf i} z}S_1(n,\bft,\La)\La^M(z^n)\cdot \big(\tilde{S}_1(n+l,\bft,\La)\big)(z^{-n-l})\La^l\\
    &+\oint_{C_0}\frac{dz}{2\pi\textnormal{\bf i} z}S_2(n,\bft,\La)\La^{-N}(z^n)\cdot \big(\tilde{S}_2(n+l,\bft,\La)\big)(z^{-n-l})\La^l\Big)=\sum_{l\in\Z}\sum_{i=1}^{m}q_{i,n}(\bft)r_{i,n+l}(\bft)\La^l.
\end{align*}
Then using Lemma \ref{LemForD} and \eqref{ConjS}, we have
\begin{equation}\label{ConstS1S2}
    S_1\cdot \La^M\cdot S_1^{-1}\iota_{\La^{-1}}\Delta^{-1}=S_2\cdot \La^{-N}\cdot S_2^{-1}\iota_{\La}\Delta^{-1}+\sum_{l\in\Z}\sum_{i=1}^{m}q_{i,n}\La^lr_{i,n+1}.
\end{equation}
Moreover, applying $\p_{t_p^{(a)}}$ and $\p_{t_p'^{(a)}}$ to both sides of \eqref{Wave} and \eqref{Wave1'} respectively, and using \eqref{partialW1}, we have 
\[
    \p_{t_p^{(a)}}(q_{i,n})=B_p^{(a)}(q_{i,n}),\qquad \p_{t_p^{(a)}}(r_{i,n})=-\big(\Delta^{-1} B_{p}^{(a)*}\Delta\big)(r_{i,n}),
\]
which imply that $q_{i,n}$ is the mToda eigenfunction and $r_{i,n}$ is the mToda adjoint eigenfunction \cite{RuiCheng2024}. So if we define 
\begin{equation}\label{Laxop}
    \begin{aligned}
        &L_1(n,\bft,\La)=S_1(n,\bft,\La)\cdot \La\cdot S_1^{-1}(n,\bft,\La)=u(n,\bft)\La+\sum_{j\geq0}u_j(n,\bft)\La^{-j},\\
        &L_2(n,\bft,\La)=S_2(n,\bft,\La)\cdot \La^{-1}\cdot S_2^{-1}(n,\bft,\La)=\bar{u}(n,\bft)\La^{-1}+\sum_{j\geq0}\bar{u}_j(n,\bft)\La^j.
    \end{aligned}
\end{equation}
Then we obtain the following theorem.
\begin{theorem}\label{ThLax}
    The functions $\{q_{i,n},r_{i,n}\}_{1\leq i \leq m}$ defined by \eqref{qr} and the operators $L_a$ defined in \eqref{Laxop} satisfy
    \begin{align}
        &L_1(n)^M\cdot\iota_{\La^{-1}}\Delta^{-1}=L_2(n)^N\cdot\iota_{\La}\Delta^{-1}+\sum_{l\in\Z}\sum_{i=1}^{m}q_{i,n}\La^lr_{i,n+1},\label{ConsL}\\
        &\p_{t_p^{(a)}}L_b=[B_p^{(a)},L_b],\qquad a,b=1,2,\label{Laxt1}\\
        &\p_{t_p^{(a)}}(q_{i,n})=B_p^{(a)}(n)(q_{i,n}),\qquad \p_{t_p^{(a)}}(r_{i,n})=-\big(\Delta^{-1} B_{p}^{(a)*}(n)\Delta\big)(r_{i,n}),\label{partialqr'}
    \end{align}
    where $B_p^{(1)}(n)=(L_1^p(n))_{\Delta,\geq1}$ and $B_p^{(2)}(n)=(L_2^p(n))_{\Delta^*,\geq1}$.
\end{theorem}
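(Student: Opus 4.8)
The plan is to obtain the three families of equations in Theorem \ref{ThLax} by translating the wave-function identities \eqref{Wave3}, \eqref{Wave}, \eqref{Wave1'} (together with their equivalent forms in the preceding Corollary) into operator identities via Lemma \ref{LemForD}, exactly as the constraint \eqref{ConstS1S2} was derived just before the statement. For \eqref{ConsL}, I would start from \eqref{ConstS1S2} and simply conjugate: by definition \eqref{Laxop}, $L_1^M = S_1\La^M S_1^{-1}$ and $L_2^N = S_2\La^{-N}S_2^{-1}$, so $S_1\La^M S_1^{-1}\iota_{\La^{-1}}\Delta^{-1} = L_1(n)^M\cdot\iota_{\La^{-1}}\Delta^{-1}$ and similarly for the $L_2$ term; the right-hand side is already in the desired form, so \eqref{ConsL} is immediate from \eqref{ConstS1S2}. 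This is the easy part and really just a change of notation.

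For the flow equations \eqref{Laxt1}, I would differentiate $L_b = S_b\La^{\pm1}S_b^{-1}$ with respect to $t_p^{(a)}$ using the wave-operator evolution equations listed after \eqref{ConjS}. For instance $\p_{t_p^{(1)}}L_1 = (\p_{t_p^{(1)}}S_1)\La S_1^{-1} + S_1\La\,\p_{t_p^{(1)}}(S_1^{-1})$; substituting $\p_{t_p^{(1)}}S_1 = B_p^{(1)}S_1 - S_1\La^p$ and $\p_{t_p^{(1)}}(S_1^{-1}) = -S_1^{-1}(\p_{t_p^{(1)}}S_1)S_1^{-1}$, the $S_1\La^p$ terms cancel against $\La$ (since $\La$ commutes with $\La^p$) and one is left with $[B_p^{(1)}, L_1]$. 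The mixed and second-component cases ($\p_{t_p^{(2)}}L_1$, $\p_{t_p^{(1)}}L_2$, $\p_{t_p^{(2)}}L_2$) are handled identically using the corresponding lines after \eqref{ConjS}; one must check that $B_p^{(1)} = (S_1\La^p S_1^{-1})_{\Delta,\geq1} = (L_1^p)_{\Delta,\geq1}$ and $B_p^{(2)} = (L_2^p)_{\Delta^*,\geq1}$, which is immediate from $L_1^p = S_1\La^p S_1^{-1}$ and $L_2^p = S_2\La^{-p}S_2^{-1}$. This is standard dressing-method bookkeeping.

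The equations \eqref{partialqr'} for $q_{i,n}$ and $r_{i,n}$ were in fact already established in the displayed computation immediately before \eqref{Laxop}: applying $\p_{t_p^{(a)}}$ to \eqref{Wave} and \eqref{Wave1'}, invoking \eqref{partialW1}, and reading off that $q_{i,n}$ is an mToda eigenfunction and $r_{i,n}$ an mToda adjoint eigenfunction. So here I would simply recall that argument and note that, since $B_p^{(a)}$ computed from the wave operators coincides with $B_p^{(a)}(n)$ computed from the Lax operators $L_a$, the statement \eqref{partialqr'} follows. One subtlety worth a remark is the meaning of $\Delta^{-1}$ in $\Delta^{-1}B_p^{(a)*}\Delta$, which is the content of Remark \ref{RemarkpW}: the operator $B_p^{(a)}$ is a polynomial in $\Delta$ (resp. $\Delta^*$) with zero constant term, so $B_p^{(a)*}\Delta$ is divisible on the left by $\Delta^*$ (resp. $\Delta$) and the quotient is unambiguous.

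The only genuine obstacle is keeping the four sign/shift conventions straight across the two components: the asymmetry between $\La$ and $\La^{-1}$, between $\Delta$ and $\Delta^*$, and between $\iota_{\La^{-1}}\Delta^{-1}$ and $\iota_{\La}\Delta^{-1}$ means that the cancellations in the $\p_{t_p^{(2)}}L_2$ computation and in the $r_{i,n}$-flow run through $\Delta^*$ rather than $\Delta$, and one must use the correct line of \eqref{ConjS} and \eqref{partialW1} each time. I expect no conceptual difficulty beyond this; everything reduces to Lemma \ref{LemForD}, the dressing identities after \eqref{ConjS}, and careful use of Lemmas \ref{LemDelta} and \ref{A(1)} to make sense of the $\Delta^{-1}$ factors.
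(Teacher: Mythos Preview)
Your proposal is correct and follows essentially the same approach as the paper: the paper presents the proof of Theorem~\ref{ThLax} in the discussion immediately preceding its statement, deriving \eqref{ConsL} from \eqref{ConstS1S2} via the definitions~\eqref{Laxop}, obtaining \eqref{partialqr'} by applying $\p_{t_p^{(a)}}$ to \eqref{Wave} and \eqref{Wave1'} and invoking \eqref{partialW1}, and taking \eqref{Laxt1} as the standard consequence of the wave-operator evolutions after \eqref{ConjS}. Your write-up is in fact slightly more explicit than the paper's on the dressing computation for \eqref{Laxt1}, but the content is the same.
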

Moreover, by \eqref{W1}--\eqref{W2}, the following auxiliary equations hold.
\begin{corollary}\label{coro:LPsi}
    For $a,b\in\{1,2\}$, we have 
    \begin{alignat*}{2}  
        &L_1(\Psi_1)=z\Psi_1,\qquad &&\big(\iota_{\La}\Delta^{-1}L_1^*\Delta\big)(\Psi_1^*)=z\Psi_1^*,\\
        &L_2(\Psi_2)=z^{-1}\Psi_2,\qquad &&\big(\iota_{\La^{-1}}\Delta^{-1}L_2^*\Delta\big)(\Psi_2^*)=z^{-1}\Psi_2^*,\\
        &\p_{t_p^{(a)}}(\Psi_b)=B_p^{(a)}(\Psi_b),\qquad &&\p_{t_p^{(a)}}(\Psi_b^*)=-\big(\Delta^{-1}B_p^{(a)*}\Delta\big)(\Psi_b^*).
    \end{alignat*}
\end{corollary}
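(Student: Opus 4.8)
The plan is to check the six identities of Corollary \ref{coro:LPsi} one at a time, directly from the operator representations \eqref{W1}--\eqref{W2}, using nothing beyond two elementary facts: $\La^{\pm1}(z^{n})=z^{\pm1}\cdot z^{n}$ and $\La^{\pm1}(z^{-n})=z^{\mp1}\cdot z^{-n}$, and that the prefactors $e^{\pm\xi(\bft^{(a)},\cdot)}$, being independent of $n$, commute with every element of $\mathcal{A}[[\La^{-1},\La]]$. For the first pair, substitute $\Psi_1=e^{\xi(\bft^{(1)},z)}S_1(z^{n})$ and $L_1=S_1\La S_1^{-1}$ from \eqref{Laxop}, so that $L_1(\Psi_1)=e^{\xi(\bft^{(1)},z)}S_1\La S_1^{-1}S_1(z^{n})=e^{\xi(\bft^{(1)},z)}S_1(z\cdot z^{n})=z\Psi_1$, where $z$ pulls out of $S_1(\,\cdot\,)$ because it is a scalar in $n$ while the coefficients of $S_1$ are $z$-free. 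The identity $L_2(\Psi_2)=z^{-1}\Psi_2$ is the same computation with $L_2=S_2\La^{-1}S_2^{-1}$ and $\La^{-1}(z^{n})=z^{-1}z^{n}$.

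For the adjoint pair I would use the conjugation formulas coming from \eqref{ConjS}, namely $\tilde{S}_1=-\iota_{\La}\Delta^{-1}(S_1^*)^{-1}$ and $\tilde{S}_2=\iota_{\La^{-1}}\Delta^{-1}(S_2^*)^{-1}$, together with $L_1^*=(S_1^*)^{-1}\La^{-1}S_1^*$ and $L_2^*=(S_2^*)^{-1}\La S_2^*$ obtained by conjugating \eqref{Laxop}, and the two-sided-inverse relations $\Delta\cdot\iota_{\La}\Delta^{-1}=\iota_{\La}\Delta^{-1}\cdot\Delta=1$ in $\mathcal{A}[[\La]][\La^{-1}]$ and $\Delta\cdot\iota_{\La^{-1}}\Delta^{-1}=\iota_{\La^{-1}}\Delta^{-1}\cdot\Delta=1$ in $\mathcal{A}[[\La^{-1}]][\La]$. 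Then, as operators, $\iota_{\La}\Delta^{-1}L_1^*\Delta\cdot\tilde{S}_1=-\iota_{\La}\Delta^{-1}L_1^*(S_1^*)^{-1}=-\iota_{\La}\Delta^{-1}(S_1^*)^{-1}\La^{-1}$; applying this to $z^{-n}$, pulling $\La^{-1}(z^{-n})=z\cdot z^{-n}$ out, and noting that the surviving sign is exactly the one in $\tilde S_1=-\iota_{\La}\Delta^{-1}(S_1^*)^{-1}$, gives $(\iota_{\La}\Delta^{-1}L_1^*\Delta)(\Psi_1^*)=z\Psi_1^*$ after restoring $e^{-\xi(\bft^{(1)},z)}$. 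The identity $(\iota_{\La^{-1}}\Delta^{-1}L_2^*\Delta)(\Psi_2^*)=z^{-1}\Psi_2^*$ follows verbatim, now with $\tilde{S}_2$, $L_2^*=(S_2^*)^{-1}\La S_2^*$, the inverse relations for $\iota_{\La^{-1}}\Delta^{-1}$, and $\La(z^{-n})=z^{-1}z^{-n}$.

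Finally, the flow equations $\p_{t_p^{(a)}}(\Psi_b)=B_p^{(a)}(\Psi_b)$ and $\p_{t_p^{(a)}}(\Psi_b^*)=-\big(\Delta^{-1}B_p^{(a)*}\Delta\big)(\Psi_b^*)$ are exactly \eqref{partialW1}; the only thing to add is the observation that $L_1^{p}=S_1\La^{p}S_1^{-1}$ and $L_2^{p}=S_2\La^{-p}S_2^{-1}$, so $B_p^{(1)}=(S_1\La^{p}S_1^{-1})_{\Delta,\geq1}=(L_1^{p})_{\Delta,\geq1}$ and $B_p^{(2)}=(S_2\La^{-p}S_2^{-1})_{\Delta^*,\geq1}=(L_2^{p})_{\Delta^*,\geq1}$, which is precisely the definition used in Theorem \ref{ThLax} and in the corollary. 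I expect the only genuinely delicate point to be the bookkeeping in the adjoint case: one must confirm that each product above ($\iota_{\La}\Delta^{-1}L_1^*$, $L_1^*(S_1^*)^{-1}$, and their $S_2$-analogues) lies in a ring where composition and the cancellations $\Delta\cdot\iota_{\La^{\pm1}}\Delta^{-1}=1$ are legitimate, and that the choice between $\iota_{\La}\Delta^{-1}$ and $\iota_{\La^{-1}}\Delta^{-1}$ — harmless in \eqref{partialW1} by Remark \ref{RemarkpW} but not here — is the one dictated by the ring in which $\tilde{S}_a$ lives; once these membership facts are in place the computation is mechanical.
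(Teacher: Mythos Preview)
Your proposal is correct and follows exactly the route the paper indicates: the paper's own ``proof'' is the single clause ``by \eqref{W1}--\eqref{W2}, the following auxiliary equations hold,'' and what you have written is precisely the mechanical verification one obtains by inserting $L_a=S_a\La^{\pm1}S_a^{-1}$ and the conjugation relations \eqref{ConjS} into those operator representations. Your closing caveat about ring membership is well placed and resolves cleanly: since $L_1^*$, $(S_1^*)^{-1}$, $\iota_{\La}\Delta^{-1}$ and $\tilde S_1$ all live in the associative ring $\mathcal{A}[[\La]][\La^{-1}]$ (and their $a=2$ counterparts in $\mathcal{A}[[\La^{-1}]][\La]$), the cancellations $\Delta\cdot\iota_{\La^{\pm1}}\Delta^{-1}=1$ and the reassociations you use are legitimate, so nothing further is needed.
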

\begin{corollary}\label{coro:consL}
    The GBMT constraint \eqref{ConsL} is equivalent to 
    \begin{equation}
        L_1(n)^M=L_2(n)^N+\sum_{l\in\Z}\sum_{i=1}^{m}q_{i,n}\La^lr_{i,n+1}\Delta,\qquad \big(L_1^M(n)+L_2^N(n)\big)(1)=0.\label{eq:ConsL1}
    \end{equation}
\end{corollary}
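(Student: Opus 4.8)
My plan is to massage the constraint \eqref{ConsL} into the exact shape required by Lemma \ref{A(1)} and then read off \eqref{eq:ConsL1}. The first step is the only non-mechanical one. Since $\iota_{\La^{-1}}\Delta^{-1}-\iota_{\La}\Delta^{-1}=\sum_{l\in\Z}\La^l$, the source term splits as $\sum_{l\in\Z}\sum_{i=1}^m q_{i,n}\La^l r_{i,n+1}=\sum_{i=1}^m q_{i,n}\,\iota_{\La^{-1}}\Delta^{-1}\,r_{i,n+1}-\sum_{i=1}^m q_{i,n}\,\iota_{\La}\Delta^{-1}\,r_{i,n+1}$; moving one half of this to each side of \eqref{ConsL} and then inserting $1=\Delta\cdot\iota_{\La^{\pm1}}\Delta^{-1}$ so as to factor $\iota_{\La^{-1}}\Delta^{-1}$ out on the left and $\iota_{\La}\Delta^{-1}$ out on the right, I would rewrite \eqref{ConsL} equivalently as
\begin{align*}
&A\cdot\iota_{\La^{-1}}\Delta^{-1}=B\cdot\iota_{\La}\Delta^{-1},\qquad\text{with}\\
&A:=L_1(n)^M-\sum_{i=1}^m q_{i,n}\,\iota_{\La^{-1}}\Delta^{-1}\,r_{i,n+1}\,\Delta,\qquad B:=L_2(n)^N-\sum_{i=1}^m q_{i,n}\,\iota_{\La}\Delta^{-1}\,r_{i,n+1}\,\Delta.
\end{align*}
The point of keeping the trailing $\Delta$ inside $A$ and $B$ (rather than cancelling it against the source term) is precisely that $q_{i,n}\iota_{\La^{-1}}\Delta^{-1}r_{i,n+1}\Delta\cdot\iota_{\La^{-1}}\Delta^{-1}=q_{i,n}\iota_{\La^{-1}}\Delta^{-1}r_{i,n+1}$ (and similarly with $\iota_{\La}$), so that expanding $A\cdot\iota_{\La^{-1}}\Delta^{-1}=B\cdot\iota_{\La}\Delta^{-1}$ recovers \eqref{ConsL} verbatim; all of this is reversible, so it is a genuine equivalence.

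The second step is to check that $A\in\mathcal{A}[[\La^{-1}]][\La]$ and $B\in\mathcal{A}[[\La]][\La^{-1}]$, which is what allows Lemma \ref{A(1)} to be invoked. This is immediate from the fact that $\iota_{\La^{-1}}\Delta^{-1}\in\mathcal{A}[[\La^{-1}]][\La]$, $\iota_{\La}\Delta^{-1}\in\mathcal{A}[[\La]][\La^{-1}]$, and that each of these rings is closed under right multiplication by the finite-order operators $r_{i,n+1}$ and $\Delta$; if one wants to be explicit, a short computation expresses $\iota_{\La^{-1}}\Delta^{-1}r_{i,n+1}\Delta$ as a series in non-positive powers of $\La$ and $\iota_{\La}\Delta^{-1}r_{i,n+1}\Delta$ as a series in non-negative powers of $\La$. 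Granting this, Lemma \ref{A(1)} says that the rewritten constraint, hence \eqref{ConsL}, holds iff $A=B$ and $A(1)=0$. Unravelling $A=B$ gives $L_1(n)^M-L_2(n)^N=\sum_{i=1}^m q_{i,n}(\iota_{\La^{-1}}\Delta^{-1}-\iota_{\La}\Delta^{-1})r_{i,n+1}\Delta=\sum_{l\in\Z}\sum_{i=1}^m q_{i,n}\La^l r_{i,n+1}\Delta$, i.e.\ the first equation of \eqref{eq:ConsL1}. For the second equation: by the Remark after Lemma \ref{A(1)} the operator $A=B$ is a finite Laurent operator in $\La$, so $A(1)$ is a well-defined finite sum; since $\Delta(1)=0$, applying $L_1(n)^M$ and $L_2(n)^N$ to the constant function $1$ kills the source contributions, so $A(1)=(L_1(n)^M)(1)$ and (using $A=B$) also $A(1)=(L_2(n)^N)(1)$. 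Hence $A(1)=0$ is equivalent to $(L_1(n)^M)(1)=(L_2(n)^N)(1)=0$, which, combined with $A=B$, is exactly $\big(L_1^M(n)+L_2^N(n)\big)(1)=0$. The converse direction is obtained by running the same chain of equivalences backwards.

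I expect the one genuinely non-routine point to be the rewriting in the first step: one has to realize that the rank-$m$, bi-infinite source term must be distributed symmetrically between the two $\iota$-branches, with the compensating $\Delta$ absorbed into $A$ and $B$ rather than into the source, in order to land on operators that are one-sided (in $\mathcal{A}[[\La^{-1}]][\La]$ resp.\ $\mathcal{A}[[\La]][\La^{-1}]$) — which is exactly the hypothesis that Lemma \ref{A(1)} needs. Once the identity $\iota_{\La^{-1}}\Delta^{-1}-\iota_{\La}\Delta^{-1}=\sum_{l\in\Z}\La^l$ and the cancellations $\Delta\cdot\iota_{\La^{\pm1}}\Delta^{-1}=1$ have been used in this way, the remainder is bookkeeping together with a direct appeal to Lemma \ref{A(1)} and its Remark.
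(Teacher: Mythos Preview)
Your proof is correct and follows essentially the same route as the paper's: both define the same operators $A=L_1^M-\sum_i q_{i,n}\iota_{\La^{-1}}\Delta^{-1}r_{i,n+1}\Delta$ and $B=L_2^N-\sum_i q_{i,n}\iota_{\La}\Delta^{-1}r_{i,n+1}\Delta$, observe that \eqref{ConsL} is exactly $A\cdot\iota_{\La^{-1}}\Delta^{-1}=B\cdot\iota_{\La}\Delta^{-1}$, and then invoke Lemma~\ref{A(1)}. The only cosmetic differences are that you justify the vanishing of the source contributions on $1$ via $\Delta(1)=0$ whereas the paper uses the operator identity $\iota_{\La^{\pm1}}\Delta^{-1}r_{i,n+1}\Delta=r_{i,n}-\iota_{\La^{\pm1}}\Delta^{-1}\Delta(r_{i,n})$, and that you package the whole argument as a chain of equivalences while the paper treats the two implications separately.
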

\begin{proof}
    Firstly let us prove $\eqref{ConsL}\Rightarrow\eqref{eq:ConsL1}$. For this, denote
    \begin{align*}
        &A=L_1(n)^M-\sum_{i=1}^{m}q_{i,n}\cdot\iota_{\La^{-1}}\Delta^{-1}\cdot r_{i,n+1}\cdot\Delta,\qquad B=L_2(n)^N-\sum_{i=1}^{m}q_{i,n}\cdot\iota_{\La}\Delta^{-1}\cdot r_{i,n+1}\cdot \Delta,
    \end{align*}
    Notice that $A\in\mathcal{A}[[\La^{-1}]][\La],~B\in\mathcal{A}[[\La]][\La^{-1}]$ and we know $A\cdot \iota_{\La^{-1}}\Delta^{-1}=B\cdot \iota_{\La}\Delta^{-1}$ by \eqref{ConsL}. Thus from Lemma \ref{A(1)}, we get $A=B$, that is the first relation in \eqref{eq:ConsL1}, and $A(1)=B(1)=0$, which means $L_1^M(1)=L_2^N(1)=0$, since $\big(\iota_{\La^{\pm1}}\Delta^{-1} r_{i,n+1} \Delta\big)(1)=\big(r_{i,n}-\iota_{\La^{\pm1}}\Delta^{-1}\cdot \Delta(r_{i,n})\big)(1)=0$.

    Conversely, let us show $\eqref{eq:ConsL1}\Rightarrow\eqref{ConsL}$. From the first equation in \eqref{eq:ConsL1}, we get $A=B$. Further, there are also $A(1)=L_1^M(1)$ and $B(1)=L_2^N(1)$, so $(A+B)(1)=0$. Therefore according to Lemma \ref{A(1)}, we have 
    $
        A\cdot \iota_{\La^{-1}}\Delta^{-1}=B\cdot \iota_{\La}\Delta^{-1},
    $
    which is \eqref{ConsL}.
\end{proof}

\begin{corollary}\label{coro:conL'}
    The GBMT constraint \eqref{eq:ConsL1} is also equivalent to 
    \begin{align}
        &L_1(n)^M=B_M^{(1)}(n)+B_N^{(2)}(n)+\sum_{i=1}^{m}q_{i,n}\cdot\iota_{\La^{-1}}\Delta^{-1}\cdot r_{i,n+1}\cdot\Delta,\label{eq:consL1}  \\
        &L_2(n)^N=B_M^{(1)}(n)+B_N^{(2)}(n)+\sum_{i=1}^{m}q_{i,n}\cdot\iota_{\La}\Delta^{-1}\cdot r_{i,n+1}\cdot\Delta. \label{eq:consL2}
    \end{align}
\end{corollary}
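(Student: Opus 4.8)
The plan is to deduce the statement from Corollary \ref{coro:consL}, which already identifies \eqref{eq:ConsL1} with the composite form \eqref{ConsL}, by performing one further splitting of the relevant operator into its $\Delta$-positive, $\Delta^*$-positive and zeroth-order parts using the bookkeeping of Lemma \ref{LemDelta}. First I would record the elementary operator identity $\iota_{\La^{-1}}\Delta^{-1}-\iota_{\La}\Delta^{-1}=\sum_{l\in\Z}\La^l$, which is immediate from the two defining series, so that Lemma \ref{Asum} gives
$$\sum_{i=1}^{m}q_{i,n}\cdot\iota_{\La^{-1}}\Delta^{-1}\cdot r_{i,n+1}\cdot\Delta-\sum_{i=1}^{m}q_{i,n}\cdot\iota_{\La}\Delta^{-1}\cdot r_{i,n+1}\cdot\Delta=\sum_{l\in\Z}\sum_{i=1}^{m}q_{i,n}\La^l r_{i,n+1}\Delta .$$
Write $C=\sum_i q_{i,n}\cdot\iota_{\La^{-1}}\Delta^{-1}\cdot r_{i,n+1}\cdot\Delta$ and $D=\sum_i q_{i,n}\cdot\iota_{\La}\Delta^{-1}\cdot r_{i,n+1}\cdot\Delta$; expanding the series shows that $C$ involves only powers $\La^{j}$ with $j\le 0$, that $D$ involves only powers $\La^{j}$ with $j\ge 0$, and that $C(1)=D(1)=0$ (both since $\Delta(1)=0$).

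For the direction \eqref{eq:ConsL1}$\Rightarrow$\eqref{eq:consL1},\eqref{eq:consL2}: by Corollary \ref{coro:consL} the hypothesis is equivalent to \eqref{ConsL}, and its proof produces the operator $A:=L_1(n)^M-C=L_2(n)^N-D$, which lies in $\mathcal{A}[[\La^{-1}]][\La]\cap\mathcal{A}[[\La]][\La^{-1}]$ — hence is a finite Laurent series in $\La$ — and satisfies $A(1)=0$. Since $C$ has $\La$-degree $\le 0$, it has no $\Delta$-positive part, so $A_{\Delta,\ge1}=(L_1(n)^M)_{\Delta,\ge1}=B_M^{(1)}(n)$; since $D$ has $\La$-degree $\ge 0$, it has no $\Delta^*$-positive part, so $A_{\Delta^*,\ge1}=(L_2(n)^N)_{\Delta^*,\ge1}=B_N^{(2)}(n)$. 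Applying the second and fourth identities of Lemma \ref{LemDelta} to the finite operator $A$ yields $A_{\Delta,\le0}=A_{<0}+A_{\ge0}(1)$ and $A_{\Delta^*,\ge1}=A_{<0}-A_{<0}(1)$, whence $A_{\Delta,\le0}-A_{\Delta^*,\ge1}=A_{\ge0}(1)+A_{<0}(1)=A(1)=0$. Therefore $A=A_{\Delta,\ge1}+A_{\Delta,\le0}=B_M^{(1)}(n)+B_N^{(2)}(n)$, which rewrites as \eqref{eq:consL1} (via $A=L_1(n)^M-C$) and as \eqref{eq:consL2} (via $A=L_2(n)^N-D$).

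For the converse I would subtract \eqref{eq:consL2} from \eqref{eq:consL1} and invoke the displayed identity above to recover the first relation of \eqref{eq:ConsL1}; then evaluate either of \eqref{eq:consL1},\eqref{eq:consL2} on the constant $1$, using that $B_M^{(1)}(n)$ and $B_N^{(2)}(n)$ kill the constant $1$ (being polynomials in $\Delta$, resp. $\Delta^*$, without constant term) together with $C(1)=D(1)=0$, to obtain $L_1(n)^M(1)=L_2(n)^N(1)=0$, i.e. the second relation of \eqref{eq:ConsL1}.

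The main obstacle is the penultimate step of the forward direction: the $\Delta$- and $\Delta^*$-positive truncations each discard the zeroth-order ($\La^0$) contribution of $A$, so a priori $A$ and $B_M^{(1)}(n)+B_N^{(2)}(n)$ differ by a multiplication operator, and one must identify this discrepancy with $A(1)$ and then use that it vanishes. This is exactly the role of the auxiliary condition $\big(L_1^M(n)+L_2^N(n)\big)(1)=0$ in \eqref{eq:ConsL1} — equivalently $A(1)=0$, by Lemma \ref{A(1)} — and of the precise constant-term formulas recorded in Lemma \ref{LemDelta}; everything else is routine series manipulation with $\iota_{\La^{\pm1}}\Delta^{-1}$.
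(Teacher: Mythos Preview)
Your proof is correct and follows essentially the same approach as the paper: both directions of the converse are identical, and for the forward direction both arguments reduce to the key observation that a finite operator $X$ with $X(1)=0$ satisfies $X_{\Delta,\le0}=X_{\Delta^*,\ge1}$ (via the constant-term formulas of Lemma \ref{LemDelta}). The only cosmetic difference is that the paper applies this to $X=L_1^M$ and computes $(L_1^M)_{\Delta^*,\ge1}$ by projecting the constraint equation directly (using $(A\Delta)_{\Delta^*,\ge1}=A_{\le0}\Delta^*$), whereas you apply it to $X=A=L_1^M-C=L_2^N-D$ and read off $A_{\Delta,\ge1}=B_M^{(1)}$, $A_{\Delta^*,\ge1}=B_N^{(2)}$ from the supports of $C$ and $D$; the underlying computation is the same.
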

\begin{proof}
    Firstly let us prove \eqref{eq:consL1}--\eqref{eq:consL2} $\Rightarrow\eqref{eq:ConsL1}$. The first equation for \eqref{eq:ConsL1}
    can be obtained by subtracting \eqref{eq:consL1} from \eqref{eq:consL2}. While notice that $B_p^{(a)}(n)(1)=0$ and $\big(\iota_{\La^{\pm1}}\Delta^{-1} r_{i,n+1} \Delta\big)(1)=\big(r_{i,n}-\iota_{\La^{\pm1}}\Delta^{-1}\cdot \Delta(r_{i,n})\big)(1)=0$, we have $L_1^M(1)=L_2^N(1)=0$.

    Next let us prove \eqref{eq:ConsL1} $\Rightarrow$ \eqref{eq:consL1}--\eqref{eq:consL2}. From the first equation of \eqref{eq:ConsL1}, we obtain 
    \begin{align}
        &(L_1^M)_{\Delta,\geq1}=(L_2^N)_{\Delta,\geq1}-\sum_{i=1}^{m}q_{i,n}\cdot\iota_{\La}\Delta^{-1}\cdot r_{i,n+1}\cdot\Delta,\label{eq:L1>0}\\
        &(L_1^M)_{\Delta^*,\geq1}=(L_2^N)_{\Delta^*,\geq1}+\sum_{i=1}^{m}q_{i,n}\cdot\iota_{\La^{-1}}\Delta^{-1}\cdot r_{i,n+1}\cdot\Delta,\label{eq:L1>0*}
    \end{align}
    where we have used $(A\Delta)_{\Delta,\geq1}=A_{\Delta,\geq0}\Delta=A_{\geq0}\Delta$ and $(A\Delta^*)_{\Delta^*,\geq1}=A_{\leq0}\Delta^*$. By the first relation in \eqref{eq:ConsL1}, we know $L_1^M(1)=L_2^N(1)$. So using the second relation in \eqref{eq:ConsL1}, we get $L_1^M(1)=L_2^N(1)=0$. Further by $L_1^M(1)=0$, we get $(L_1^M)_{\Delta,\leq0}=(L_1^M)_{<0}+(L_1^M)_{\geq0}(1)=(L_1^M)_{\leq0}-(L_1^M)_{\leq0}(1)=(L_1^M)_{\Delta^*,\geq1}$. Thus
    \[
        L_1^M=(L_1^M)_{\Delta,\geq1}+(L_1^M)_{\Delta,\leq0}=(L_1^M)_{\Delta,\geq1}+(L_1^M)_{\Delta^*,\geq1}.
    \]
    After taking \eqref{eq:L1>0*} into above relation, we can obtain \eqref{eq:consL1}. Similarly, we can prove \eqref{eq:consL2}.
\end{proof}

\begin{remark}\label{Rmk:L1L2}
    If we let $B_M^{(1)}(n)=\sum_{l=1}^{M}b_l^{(M)}\Delta^l$ and $B_N^{(2)}(n)=\sum_{j=1}^{N}\bar{b}_j^{(N)}(\Delta^*)^j$, then \eqref{eq:consL1}--\eqref{eq:consL2} imply
    \begin{align*}
        &L_1^M=\sum_{l=1}^{M}b_l^{(M)}\Delta^l+\sum_{j=1}^{N}\bar{b}_j^{(N)}(\Delta^*)^j+\sum_{i=1}^{m}q_{i,n}\cdot\iota_{\La^{-1}}\Delta^{-1}\cdot r_{i,n+1}\cdot\Delta,\\
        &L_2^N=\sum_{l=1}^{M}b_l^{(M)}\Delta^l+\sum_{j=1}^{N}\bar{b}_j^{(N)}(\Delta^*)^j+\sum_{i=1}^{m}q_{i,n}\cdot\iota_{\La}\Delta^{-1}\cdot r_{i,n+1}\cdot\Delta,
    \end{align*}
    which implies that there are only $M+N+2m$ independent functions, that is $q_{i,n},r_{i,n}$ and $b_l^{(M)},\bar{b}_j^{(N)}$ for $1\leq i\leq m,~ 1\leq l\leq M,~ 1\leq j\leq N$.
\end{remark}

\begin{example}
    When $M=N=m=1$, $B_1^{(1)}=b_1\Delta,~B_1^{(2)}=\bar{b}_1\Delta^*$, thus 
    \[
        L_1=b_1\Delta+\bar{b}_1\Delta^*+q_n\cdot\iota_{\La^{-1}}\Delta^{-1}\cdot r_{n+1}\cdot\Delta,\qquad L_2=b_1\Delta+\bar{b}_1\Delta^*+q_n\cdot\iota_{\La}\Delta^{-1}\cdot r_{n+1}\cdot\Delta.
    \]
    There are only four independent functions $q_n,r_n,$ $b_{1}$ and $\bar{b}_1$. Then the Lax equations \eqref{Laxt1} and \eqref{partialqr'} imply
    \begin{alignat*}{2}
        &\p_{t_1^{(1)}}b_1=b_1\cdot\Delta(q_nr_n-\bar{b}_1),\qquad &&  \p_{t_1^{(2)}}b_1=b_1\cdot\Delta(\bar{b}_{1}),\\
        &\p_{t_1^{(1)}}q_n=b_{1}\cdot\Delta(q_n),\qquad &&  \p_{t_1^{(2)}}q_n=-\bar{b}_{1}\cdot\Delta(q_{n-1}),\\
        &\p_{t_1^{(1)}}r_n=b_{1}(n-1)\cdot\Delta(r_{n-1}),\qquad &&\p_{t_1^{(2)}}r_n=-\bar{b}_{1}\cdot\Delta(r_n),\\
        &\p_{t_1^{(1)}}\bar{b}_1=\bar{b}_1\cdot\Delta(b_1)+q_n\cdot\Delta(r_{n-1})(b_1(n-1)&&-b_1(n+1)),\\
        &\p_{t_1^{(2)}}\bar{b}_1=b_1\cdot\big(\Delta(q_n)r_n+\bar{b}_1(n-1)-\Delta(\bar{b}_1)\big)&&+b_{1}(n-1)\cdot\big(q_n\Delta(r_{n-1})-\bar{b}_1\big).\\
    \end{alignat*}
\end{example}

\section{Bilinear equations from the Lax formulations}\label{sect:Lax->Bieq}

In Section \ref{Sect:Lax}, we have obtained the Lax formulation of the GBMT hierarchy from the corresponding bilinear equations. In this section, we consider the inverse direction, that is to derive the GBMT bilinear equations from the Lax formulation, so that we can get the equivalence of the bilinear equations and Lax formulation in the GBMT case.

Firstly, the Lax formulation of the GBMT hierarchy is given by Lax operators $L_1=u_{-1}\La+\sum_{j\geq0}u_j\La^{-j}$ and $L_2=\bar{u}_{-1}\La^{-1}+\sum_{j\geq0}\bar{u}_j\La^{j}~ (u_{-1},\bar{u}_{-1}\neq0)$ and unknown functions $\{q_{i,n},r_{i,n}\}_{i=1}^m$, satisfying
\begin{alignat}{2}
    &L_1^M\iota_{\La^{-1}}\Delta^{-1}=L_2^N\iota_{\La}\Delta^{-1}+&&\sum_{l\in\Z}\sum_{i=1}^{m}q_{i,n}\La^lr_{i,n+1},\label{ConsL'}\\
    &\p_{t_p^{(a)}}L_1=[B_p^{(a)},L_1],\qquad &&\p_{t_p^{(a)}}L_2=[B_p^{(a)},L_2],\label{Laxeq}\\
    &\p_{t_p^{(a)}}(q_{i,n})=B_p^{(a)}(q_{i,n}),\qquad &&\p_{t_p^{(a)}}(r_{i,n})=-\big(\Delta^{-1} B_{p}^{(a)*}\Delta\big)(r_{i,n}),\label{partialqr}
\end{alignat}
where $B_p^{(1)}=(L_1^p)_{\Delta,\geq1}$ and $B_p^{(2)}=(L_2^p)_{\Delta^*,\geq1}$.

Let us first point out that the above definition is well-defined. 
\begin{lemma}
    The Lax formulation of the GBMT hierarchy \eqref{ConsL'}--\eqref{partialqr} is well-defined, i.e. applying $\p_{t_p^{(a)}}$ to both sides of \eqref{ConsL'}, and using \eqref{Laxeq}--\eqref{partialqr}, the obtained equation still holds, that is
    \begin{align*}
        &[B_p^{(a)},L_1^M]\iota_{\La^{-1}}\Delta^{-1}-[B_p^{(a)},L_2^N]\iota_{\La}\Delta^{-1}\\
        =&\sum_{l\in\Z}\sum_{i=1}^{m}\Big(B_p^{(a)}(q_{i,n})\cdot\La^l\cdot r_{i,n+1}-q_{i,n}\cdot\La^l\cdot\big(\Delta^{-1} B_{p}^{(a)*}\Delta\big)(r_{i,n})\Big).
    \end{align*}
\end{lemma}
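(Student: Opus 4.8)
The plan is to differentiate the constraint \eqref{ConsL'} with respect to $\p_{t_p^{(a)}}$ directly and substitute in the Lax equations \eqref{Laxeq}--\eqref{partialqr}, then reorganize the resulting expression until both sides of the claimed identity match. First I would handle the left-hand side: differentiating $L_1^M\iota_{\La^{-1}}\Delta^{-1}$ gives $[B_p^{(a)},L_1^M]\iota_{\La^{-1}}\Delta^{-1}+L_1^M\,\p_{t_p^{(a)}}(\iota_{\La^{-1}}\Delta^{-1})$, and similarly for the $L_2^N\iota_{\La}\Delta^{-1}$ term. But $\iota_{\La^{\pm1}}\Delta^{-1}$ is a constant-coefficient operator (pure series in $\La$), so $\p_{t_p^{(a)}}(\iota_{\La^{\pm1}}\Delta^{-1})=0$, and these terms drop out. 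Hence the left side of the target identity is exactly what one gets, with no further work. For the right-hand side, differentiating $\sum_{l}\sum_i q_{i,n}\La^l r_{i,n+1}$ and using \eqref{partialqr} produces $\sum_l\sum_i\big(B_p^{(a)}(q_{i,n})\La^l r_{i,n+1}-q_{i,n}\La^l(\Delta^{-1}B_p^{(a)*}\Delta)(r_{i,n+1})\big)$, where I must be careful that the adjoint-eigenfunction equation for $r_{i,n}$ applies at shifted index $n+1$ as well; since \eqref{partialqr} holds for all $n\in\Z$, replacing $n$ by $n+1$ is legitimate.

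So the content of the lemma reduces to showing that the naive termwise differentiation of both sides of \eqref{ConsL'} is consistent — i.e., that $\p_{t_p^{(a)}}$ applied to the equality \eqref{ConsL'}, using \eqref{Laxeq}--\eqref{partialqr}, reproduces precisely the stated identity, with the index-$(n+1)$ subtlety on the $r$-side reconciled against the index-$n$ form written in the lemma. The reconciliation is a shift-of-index bookkeeping step: one rewrites $q_{i,n}\La^l(\Delta^{-1}B_p^{(a)*}\Delta)(r_{i,n+1})$ by absorbing the shift, or alternatively notes that in the sum over $l\in\Z$ the pair $(\La^l, r_{i,n+1})$ can be re-indexed so that the operator $\Delta^{-1}B_p^{(a)*}\Delta$ acts on $r_{i,n}$ in the form written; Lemma \ref{Asum} (on $\sum_{l}f\La^l g\cdot A^*$) is the tool that converts the action $(\Delta^{-1}B_p^{(a)*}\Delta)(r)$ appearing on the right into composition with $\big(\Delta^{-1}B_p^{(a)*}\Delta\big)^*$ on the $\sum_l q\La^l r$ expression, matching the shape one wants.

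The step I expect to be the genuine obstacle — though it is mechanical rather than deep — is the careful handling of the $\Delta^{-1}$ in $\Delta^{-1}B_p^{(a)*}\Delta$ (see Remark \ref{RemarkpW}): one must check that the two interpretations $\iota_{\La}\Delta^{-1}B_p^{(a)*}$ and $\iota_{\La^{-1}}\Delta^{-1}B_p^{(a)*}$ agree (they do, since $B_p^{(a)*}$ is a polynomial in $\Delta^*$ with zero constant term), so that the expression is unambiguous when composed on either side, and then verify that $\big(\Delta^{-1}B_p^{(a)*}\Delta\big)^*=\Delta^{-1}B_p^{(a)}\Delta$ up to the same harmless ambiguity, which is what lets the $r$-equation in \eqref{partialqr} feed correctly through Lemma \ref{Asum}. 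Once that adjoint identity is nailed down, assembling the left and right sides is immediate and the lemma follows; I would present it as a short direct computation citing Lemma \ref{Asum} and Remark \ref{RemarkpW}.
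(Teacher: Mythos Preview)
Your proposal correctly computes what one obtains by formally differentiating each side of \eqref{ConsL'} and substituting the evolution equations: the left side yields $[B_p^{(a)},L_1^M]\iota_{\La^{-1}}\Delta^{-1}-[B_p^{(a)},L_2^N]\iota_{\La}\Delta^{-1}$ and the right side yields the $q,r$ expression. But this only shows that the displayed identity \emph{is} the time-derivative of \eqref{ConsL'}; it does not prove the identity holds. The phrase ``well-defined'' means precisely that the constraint \eqref{ConsL'} is preserved by the flow, i.e.\ that the displayed identity is a \emph{consequence} of \eqref{ConsL'} (at the current time) together with the definitions of $B_p^{(a)}$. Saying ``differentiate both sides, the equation holds because \eqref{ConsL'} does'' is circular: it presupposes that \eqref{ConsL'} holds for all $t$, which is what you are trying to establish.

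The missing step is to actually use \eqref{ConsL'} algebraically inside the commutators. The paper does this in two moves: first left-multiply \eqref{ConsL'} by $B_p^{(a)}$ and invoke Lemma~\ref{Asum} to get $B_p^{(a)}L_1^M\iota_{\La^{-1}}\Delta^{-1}=B_p^{(a)}L_2^N\iota_{\La}\Delta^{-1}+\sum B_p^{(a)}(q_{i,n})\La^l r_{i,n+1}$; second, use Corollary~\ref{coro:consL} (the equivalent form $L_1^M-L_2^N=\sum q_{i,n}\La^l r_{i,n+1}\Delta$) together with Remark~\ref{RemarkpW} (so that $B_p^{(a)}\iota_{\La^{-1}}\Delta^{-1}=B_p^{(a)}\iota_{\La}\Delta^{-1}$) to obtain the analogous relation with $B_p^{(a)}$ on the right, namely $L_1^M B_p^{(a)}\iota_{\La^{-1}}\Delta^{-1}=L_2^N B_p^{(a)}\iota_{\La}\Delta^{-1}+\sum q_{i,n}\La^l(\Delta^{-1}B_p^{(a)*}(n+1)\Delta)(r_{i,n+1})$, using Lemma~\ref{Asum} again and the adjoint identity $(\Delta B_p^{(a)}(n)\Delta^{-1})^*=\Delta^{-1}B_p^{(a)*}(n+1)\Delta$. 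Subtracting these two relations produces the commutator identity. Your discussion of Lemma~\ref{Asum} and Remark~\ref{RemarkpW} touches the right tools, but you never articulate that the constraint itself (in both its forms) must be fed back into the commutator expansion; without that, the argument does not close.
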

\begin{proof}
    Multiplying $B_p^{(a)}$ to both sides of \eqref{ConsL'} and using Lemma \ref{Asum}, we have 
    \[
        B_p^{(a)}\cdot L_1^M\cdot\iota_{\La^{-1}}\Delta^{-1}=B_p^{(a)}\cdot L_2^N\cdot\iota_{\La}\Delta^{-1}+\sum_{l\in\Z}\sum_{i=1}^{m}B_p^{(a)}(q_{i,n})\cdot\La^l\cdot r_{i,n+1}.
    \]
    On the other hand, by Corollary \ref{coro:consL}, we know $L_1(n)^M=L_2(n)^N+\sum_{l\in\Z}\sum_{i=1}^{m}q_{i,n}\La^lr_{i,n+1}\Delta$. While recalling Remark \ref{RemarkpW}, we get $B_p^{(a)}\cdot\iota_{\La^{-1}}\Delta^{-1}=B_p^{(a)}\cdot\iota_{\La}\Delta^{-1}$, thus again using Lemma \ref{Asum},
    \[
        L_1^M\cdot B_p^{(a)}\cdot\iota_{\La^{-1}}\Delta^{-1}=L_2^N\cdot B_p^{(a)}\cdot\iota_{\La}\Delta^{-1}+\sum_{l\in\Z}\sum_{i=1}^{m}q_{i,n}\cdot\La^l\cdot\big(\Delta^{-1}\cdot B_p^{(a)*}(n+1)\cdot\Delta\big)(r_{i,n+1}).
    \]
    Notice that $\Delta^{-1}\cdot B_p^{(a)*}(n+1)\cdot\Delta=\big(\Delta \cdot B_p^{(a)}(n)\cdot\Delta^{-1}\big)^*$, hence we obtain
    \[
        \p_{t_p^{(a)}}L_1^M\cdot\iota_{\La^{-1}}\Delta^{-1}=\p_{t_p^{(a)}}L_2^N\cdot\iota_{\La}\Delta^{-1}+\p_{t_p^{(a)}}\Big(\sum_{l\in\Z}\sum_{i=1}^{m}q_{i,n}\La^lr_{i,n+1}\Big),
    \]
    which means the definition is well-defined.
\end{proof}

Notice that the Lax equations \eqref{Laxeq} are exactly the mToda Lax equations. Thus, as shown in~\cite{RuiCheng2024}, there exist wave operators $S_1(n,\bft,\La)=\sum_{i\geq1}s_{1,j}(n,\bft)\La^{-j}$ and $S_2(n,\bft,\La)=\sum_{i\geq1}s_{2,j}(n,\bft)\La^{j}$ satisfying
\begin{alignat}{2}
    &L_1=S_1 \cdot \La\cdot S_1^{-1} ,\qquad &&L_2 =S_2 \cdot \La^{-1}\cdot S_2^{-1},\label{eq:L1L2}\\
    &\p_{t_p^{(1)}}S_1=B_p^{(1)}\cdot S_1-S_1\cdot\La^p,\qquad &&\p_{t_p^{(1)}}S_2=B_p^{(1)}\cdot S_2,\\
    &\p_{t_p^{(2)}}S_2=B_p^{(2)}\cdot S_2-S_2\cdot \La^{-p},\qquad &&\p_{t_p^{(2)}}S_1=B_p^{(2)}\cdot S_1.\label{eq:pS2S1}
\end{alignat}
Next if we define the mToda wave functions $\Psi_a$ and the Toda adjoint wave functions $\Psi_a^*$,
\begin{equation}\label{Phi}
    \begin{alignedat}{2}
        &\Psi_1(n,\bft,z)=e^{\xi(\bft^{(1)},z)}S_1(n,\bft,\La)(z^n),\qquad &&\Psi_1^*(n,\bft,z)=-e^{-\xi(\bft^{(1)},z)}\iota_{\La^{-1}}\Delta^{-1}(S_1^*)^{-1}(z^{-n}),\\
        &\Psi_2(n,\bft,\La)=e^{\xi(\bft^{(2)},z^{-1})}S_2(n,\bft,\La)(z^n),\qquad &&\Psi_2^*(n,\bft,z)=e^{-\xi(\bft^{(2)},z^{-1})}\iota_{\La}\Delta^{-1}(S_2^*)^{-1}(z^{-n}),
    \end{alignedat}
\end{equation}
then from \cite[Theorem 3.6]{RuiCheng2024}, we have 
\be\label{mToda'}
    \oint_{C_\infty}\frac{dz}{2\pi\textnormal{\bf i} z}\Psi_1(n,\bft,z)\Psi_1^*(n',\bft',z)
    +\oint_{C_0}\frac{dz}{2\pi\textnormal{\bf i} z}\Psi_2(n,\bft,z)\Psi_2^*(n',\bft',z)=1.
\ee
Further, the following auxiliary equations also hold by \eqref{eq:L1L2}--\eqref{Phi},
\begin{alignat}{2}  
    &L_1(\Psi_1)=z\Psi_1,\qquad &&\big(\iota_{\La}\Delta^{-1}L_1^*\Delta\big)(\Psi_1^*)=z\Psi_1^*,\label{LPhi1}\\
    &L_2(\Psi_2)=z^{-1}\Psi_2,\qquad &&\big(\iota_{\La^{-1}}\Delta^{-1}L_2^*\Delta\big)(\Psi_2^*)=z^{-1}\Psi_2^*,\label{LPhi2}\\
    &\p_{t_p^{(a)}}(\Psi_b)=B_p^{(a)}(\Psi_b),\qquad &&\p_{t_p^{(a)}}(\Psi_b^*)=-\big(\iota_{\La}\Delta^{-1}B_p^{(a)*}\Delta\big)(\Psi_b^*).\label{eq:ParPsi}
\end{alignat}
Based on the above preliminaries, we have the following theorem.
\begin{theorem}\label{thm:Wave}
    The GBMT hierarchy \eqref{ConsL'}--\eqref{partialqr} also requires that the mToda wave functions $\Psi_a(n,\bft,z)$ and $\Psi_a^*(n,\bft,z)$ satisfy the following bilinear equations.
    \begin{align}
        &\begin{aligned}\label{WavePhi1}
            \oint_{C_\infty}&\frac{dz}{2\pi\text{i} z}\iota_{\La}\Delta^{-1}\Big(\Delta\big(q_{i,n}(\bft)\big)\cdot \Psi_1^*(n+1,\bft,z)\Big)\cdot\Psi_1(n',\bft',z)\\
            &+\oint_{C_0}\frac{dz}{2\pi\text{i} z}\iota_{\La^{-1}}\Delta^{-1}\Big(\Delta\big(q_{i,n}(\bft)\big)\cdot\Psi_2^*(n+1,\bft,z)\Big)\cdot\Psi_2(n',\bft',z)=q_{i,n}(\bft)-q_{i,n'}(\bft'),
        \end{aligned}\\
        &\begin{aligned}\label{WavePhi2}
            \oint_{C_\infty}&\frac{dz}{2\pi\text{i} z}\iota_{\La^{-1}}\Delta^{-1}\Big(\Delta\big(r_{i,n}(\bft)\big)\cdot\Psi_1(n,\bft,z)\Big)\cdot \Psi_1^*(n',\bft',z)\\
            &+\oint_{C_0}\frac{dz}{2\pi\text{i} z}\iota_{\La}\Delta^{-1}\Big(\Delta\big(r_{i,n}(\bft)\big)\cdot\Psi_2(n,\bft,z)\Big)\cdot\Psi_2^*(n',\bft',z)=r_{i,n}(\bft)-r_{i,n'}(\bft'),
        \end{aligned}\\
        &\begin{aligned}\label{WavePhi3}
            \oint_{C_\infty}\frac{dz}{2\pi\textnormal{\bf i} z}&z^M\Psi_1(n,\bft,z)\Psi_1^*(n',\bft',z)\\
            &+\oint_{C_0}\frac{dz}{2\pi\textnormal{\bf i} z}z^{-N}\Psi_2(n,\bft,z)\Psi_2^*(n',\bft',z)=\sum_{i=1}^{m}q_{i,n}(\bft)r_{i,n'}(\bft').
        \end{aligned}
    \end{align}
\end{theorem}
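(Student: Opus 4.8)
The plan is to run the computation of Section~\ref{Sect:Lax} in reverse. There the wave-function identities \eqref{Wave3}--\eqref{mToda} were converted, through Lemma~\ref{LemForD} and the dressing, into the operator constraint \eqref{ConstS1S2} (i.e.\ \eqref{ConsL}) and the flows; each of those conversions is an equivalence once one knows the wave operators $S_a$ exist, which here is guaranteed by \eqref{eq:L1L2}--\eqref{eq:pS2S1}. So from the GBMT constraint \eqref{ConsL'} together with \eqref{eq:L1L2}--\eqref{eq:pS2S1} and the auxiliary equations \eqref{LPhi1}--\eqref{eq:ParPsi} I will reconstruct the bilinear identities. I will prove \eqref{WavePhi3} first; since \eqref{WavePhi1}, \eqref{WavePhi2} coincide with \eqref{Wave'}, \eqref{Wave1''}, which by the algebraic identity $a(n)\Delta(b(n))=\Delta(a(n)b(n))-\Delta(a(n))b(n+1)$ are equivalent (as shown in Section~\ref{Sect:Lax}) to \eqref{Wave}, \eqref{Wave1'}, it is these last two that I will actually establish, while \eqref{Wave2'} is just \eqref{mToda'}, already available from \cite{RuiCheng2024}.

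For the diagonal case $\bft'=\bft$ of \eqref{WavePhi3}: the exponential factors in \eqref{Phi} cancel, so after substituting $n'=n+l$, multiplying by $\La^l$ and summing over $l\in\Z$, Lemma~\ref{LemForD} applied to the $C_\infty$-integral (and its $z\mapsto z^{-1}$ counterpart applied to the $C_0$-integral) turns the left-hand side of \eqref{WavePhi3} into the left-hand side of \eqref{ConstS1S2} and the right-hand side into $\sum_{l\in\Z}\sum_i q_{i,n}\La^l r_{i,n+1}$; but \eqref{ConstS1S2} is exactly \eqref{ConsL'}, so comparing coefficients of $\La^l$ recovers \eqref{WavePhi3} at $(n',\bft')=(n+l,\bft)$. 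The diagonal cases of \eqref{Wave}, \eqref{Wave1'} are handled by the same device together with the conjugation relations \eqref{ConjS}, Lemma~\ref{lem:DeltaqPsi} and the eigenfunction equations \eqref{partialqr}: they are the lattice mToda analogues of the squared-eigenfunction-potential identities recalled in Section~\ref{Sect:BieqtoLax} and follow by the same mechanism, using that \eqref{partialqr} makes $\iota_\La\Delta^{-1}\big(q_{i,n'}\Delta(\Psi_1^*)\big)$ a well-defined potential with the correct value $q_{i,n}$.

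To pass from $\bft'=\bft$ to arbitrary $\bft,\bft'$ I will differentiate each (still conjectural) identity in $t_p'^{(a)}$: by \eqref{eq:ParPsi} the factors $\Psi_a^*(n',\bft',z)$ contribute $-\big(\iota_{\La^{\pm1}}\Delta^{-1}B_p^{(a)*}\Delta\big)(\Psi_a^*)$ and by \eqref{partialqr} the right-hand sides contribute the matching $-\sum_i q_{i,n}\big(\Delta^{-1}B_p^{(a)*}\Delta\big)(r_{i,n'})$ (and analogously for \eqref{WavePhi1}, \eqref{WavePhi2}). Using Lemma~\ref{Asum} and Remark~\ref{RemarkpW} one commutes $\iota_{\La^{\pm1}}\Delta^{-1}B_p^{(a)*}\Delta$, which is the conjugate of a difference operator in $n'$, through the residue pairing, so that the $t_p'^{(a)}$-derivatives of the two sides agree once the identity is known in all $n'$-shifts, i.e.\ in the diagonal case; the $t_p^{(a)}$-derivatives are handled symmetrically. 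A function that is independent of every $t_p'^{(a)}$ and vanishes on $\bft'=\bft$ vanishes identically, which upgrades the diagonal identities to the full \eqref{WavePhi1}--\eqref{WavePhi3}, the argument for \eqref{WavePhi1}, \eqref{WavePhi2} also invoking \eqref{mToda'}.

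The main obstacle is twofold. In the diagonal step one must reconcile the two contours: Lemma~\ref{LemForD} is stated only over $C_\infty$, so the $C_0$-term has to be transported by $z\mapsto z^{-1}$, and the mismatched spectral weights $z^M$ and $z^{-N}$ have to be fitted to the single operator constraint \eqref{ConsL'} --- this requires care about which of $\iota_{\La}\Delta^{-1}$, $\iota_{\La^{-1}}\Delta^{-1}$ is meant by $\Delta^{-1}$ on each contour and about the adjoint conventions of \eqref{ConjS}. In the off-diagonal step the ``integration by parts'' that commutes $\iota_{\La^{\pm1}}\Delta^{-1}B_p^{(a)*}\Delta$ past the residue pairings must be justified separately in $\mathcal{A}[[\La^{-1}]][\La]$ and $\mathcal{A}[[\La]][\La^{-1}]$, keeping track of the boundary $(1)$-terms exactly as in the proofs of Corollaries~\ref{coro:consL} and~\ref{coro:conL'}.
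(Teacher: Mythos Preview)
Your plan is sound. For \eqref{WavePhi1} and \eqref{WavePhi2} your diagonal-plus-Taylor argument is exactly what the paper does (it states the Taylor reduction explicitly for the equivalent form \eqref{eq:q} and then checks the diagonal by plugging in \eqref{Phi} and applying Lemma~\ref{LemForD}). For \eqref{WavePhi3}, however, the paper takes a genuinely different route: instead of reducing to the diagonal and extending, it takes the adjoint of the constraint \eqref{ConsL'} to obtain \eqref{ConjConstL}, applies $\Delta_{n'}$ to the mToda bilinear identity \eqref{mToda'} to get \eqref{eq:DeltaPsi*}, and then lets the operator \eqref{ConjConstL} act on that relation in the $n'$-variable, using the spectral equations \eqref{LPhi1}--\eqref{LPhi2}. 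This produces \eqref{WavePhi3} for arbitrary $(n,n',\bft,\bft')$ in one stroke, after invoking the already-established \eqref{WavePhi1} (through \eqref{Phiq}). Your approach has the advantage of treating all three identities uniformly and of making the ``reversibility'' of Section~\ref{Sect:Lax} explicit; the paper's approach for \eqref{WavePhi3} avoids the Taylor-extension step but needs \eqref{WavePhi1} as input and the adjoint manipulation as an extra idea.

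One caution: in your diagonal step for \eqref{Wave} and \eqref{Wave1'} you invoke Lemma~\ref{lem:DeltaqPsi}, but that lemma assumes $(\tau_{0,n},\rho_{i,n})$ and $(\sigma_{i,n},\tau_{1,n})$ are already mToda tau pairs, which is precisely the content of \eqref{WavePhi1}--\eqref{WavePhi2} you are trying to prove (the tau functions are only constructed after this theorem, in \eqref{eq:PsiPsi*}--\eqref{eq:rhosigma}). The paper avoids this circularity by working directly with the wave-operator definitions \eqref{Phi}, the conjugation relations \eqref{ConjS}, and Lemma~\ref{LemForD}; you should do the same and drop the reference to Lemma~\ref{lem:DeltaqPsi}.
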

\begin{proof}
    Firstly by using \eqref{mToda'}, we know that \eqref{WavePhi1} is equivalent to 
    \begin{equation}\label{eq:q}
        \begin{aligned}
            \oint_{C_\infty}\frac{dz}{2\pi\textnormal{\bf i} z}&\Psi_1(n,\bft,z)\cdot \iota_{\La}\Delta^{-1}\Big( q_{i,n'}(\bft')\cdot \Delta\big(\Psi_1^*(n',\bft',z)\big)\Big)\\
            &+\oint_{C_0}\frac{dz}{2\pi\textnormal{\bf i} z}\Psi_2(n,\bft,z)\cdot\iota_{\La^{-1}}\Delta^{-1}\Big(q_{i,n'}(\bft')\cdot \Delta\big(\Psi_2^*(n',\bft',z)\big)\Big)=q_{i,n}(\bft).
        \end{aligned}
    \end{equation}
    Since $\p_{t_p^{(a)}}(\Psi_b)=B_p^{(a)}(\Psi_b)$ and $\p_{t_p^{(a)}}(q_{i,n})=B_p^{(a)}(q_{i,n})$, recalling Taylor expansion of \eqref{eq:q} at $\bft=\bft'$, the proof of \eqref{eq:q} is equivalent to 
    \begin{align*}
        \sum_{l\in\Z}\oint_{C_\infty}\frac{dz}{2\pi\textnormal{\bf i} z}&\Psi_1(n,\bft,z)\cdot \iota_{\La}\Delta^{-1}\Big( q_{i,n+l}(\bft)\cdot\Delta\big(\Psi_1^*(n+l,\bft,z)\big)\Big)\La^l\\
        &+\sum_{l\in\Z}\oint_{C_0}\frac{dz}{2\pi\textnormal{\bf i} z}\Psi_2(n,\bft,z)\cdot\iota_{\La^{-1}}\Delta^{-1}\Big(q_{i,n+l}(\bft)\cdot\Delta\big(\Psi_2^*(n+l,\bft,z)\big)\Big)\La^l=\sum_{l\in\Z}q_{i,n}(\bft)\La^l.
    \end{align*}
    It can be verified that the above equation holds by taking definition \eqref{Phi} into the left hand side of the above equation, and using Lemma \ref{LemForD}.  The proof of \eqref{WavePhi2} is analogous. 

    Next let us prove \eqref{WavePhi3}. Taking the adjoint operation $*$ to the constraint \eqref{ConsL'}, we have
    \begin{equation}\label{ConjConstL}
        \iota_{\La^{-1}}\Delta^{-1}(L_2^*)^N=\iota_{\La}\Delta^{-1}(L_1^*)^M+\sum_{l\in\Z}\sum_{i=1}^{m}r_{i,n}\La^lq_{i,n}.
    \end{equation}
    Applying $\Delta$ to \eqref{mToda'} with respect to $n'$, one obtains
    \begin{equation}\label{eq:DeltaPsi*}
        \oint_{C_\infty}\frac{dz}{2\pi\textnormal{\bf i} z}\Psi_1(n,\bft,z)\Delta\big(\Psi_1^*(n',\bft',z)\big)+\oint_{C_0}\frac{dz}{2\pi\textnormal{\bf i} z}\Psi_2(n,\bft,z)\Delta\big(\Psi_2^*(n',\bft',z)\big)=0.
    \end{equation}
    Then by acting the operator \eqref{ConjConstL} on \eqref{eq:DeltaPsi*} with respect to $n'$, and using \eqref{LPhi1} and \eqref{LPhi2}, we have
    \begin{equation}\label{eq:Psiqr}
        \begin{aligned}
            &\oint_{C_\infty}\frac{dz}{2\pi\textnormal{\bf i} z}z^M\Psi_1(n,\bft,z)\Psi_1^*(n',\bft',z)+\oint_{C_0}\frac{dz}{2\pi\textnormal{\bf i} z}z^{-N}\Psi_2(n,\bft,z)\Psi_2^*(n',\bft',z)\\
            =&-\sum_{l\in\Z}\sum_{i=1}^{m}\oint_{C_\infty}\frac{dz}{2\pi\textnormal{\bf i} z}\Psi_1(n,\bft,z)\cdot r_{i,n'}(\bft')\cdot\La^l\Big(q_{i,n'}(\bft')\cdot\Delta\big(\Psi_1^*(n',\bft',z)\big)\Big).
        \end{aligned}
    \end{equation}
    To match the desired equation \eqref{WavePhi3}, it is sufficient to prove
    \begin{equation}\label{Phiq}
        \begin{aligned}
            -\sum_{l\in\Z}\oint_{C_\infty}\frac{dz}{2\pi\textnormal{\bf i} z}\Psi_1(n,\bft,z)\cdot\La^l\Big( q_{i,n'}(\bft')\cdot\Delta\big(\Psi_1^*(n',\bft',z)\big)\Big)=q_{i,n}(\bft).
        \end{aligned}
    \end{equation}
    Indeed, using $\iota_{\La}\Delta^{-1}=-\sum_{k\geq0}\La^k$, $\iota_{\La^{-1}}\Delta^{-1}=\sum_{k\geq1}\La^{-k}$ and \eqref{eq:DeltaPsi*}, the left side of \eqref{Phiq} can be written as
    \begin{align*}
        &\oint_{C_\infty}\frac{dz}{2\pi\textnormal{\bf i} z}\Psi_1(n,\bft,z)\cdot(\iota_{\La^{-1}}\Delta^{-1}-\iota_{\La}\Delta^{-1})\Big(q_{i,n'}(\bft')\cdot \Delta\big(\Psi_1^*(n',\bft',z)\big)\Big)\\
        &=-\oint_{C_\infty}\frac{dz}{2\pi\textnormal{\bf i} z}\Psi_1(n,\bft,z)\cdot\iota_{\La}\Delta^{-1}\Big(q_{i,n'}(\bft')\cdot \Delta\big(\Psi_1^*(n',\bft',z)\big)\Big)\\
        &\quad+\oint_{C_0}\frac{dz}{2\pi\textnormal{\bf i} z}\Psi_2(n,\bft,z)\cdot\iota_{\La^{-1}}\Delta^{-1}\Big(q_{i,n'}(\bft')\cdot \Delta\big(\Psi_2^*(n',\bft',z)\big)\Big)=q_{i,n}(\bft),
    \end{align*}
    where the last the identity holds because of \eqref{eq:q}. Hence equation \eqref{eq:Psiqr} holds.
\end{proof}

\begin{remark}\label{Rmk:spectral}
    Here \eqref{WavePhi1} or \eqref{eq:q} is the spectral representation of the mToda eigenfunction $q_{i,n}(\bft)$, satisfying $\p_{t_p^{(a)}}(q_{i,n})=B_p^{(a)}(q_{i,n})$. While \eqref{Wave1'} or \eqref{WavePhi2} is the spectral representation of the mToda adjoint eigenfunction $r_{i,n}(\bft)$, satisfying $\p_{t_p^{(a)}}(r_{i,n})=-\big(\Delta^{-1} B_{p}^{(a)*}\Delta\big)(r_{i,n})$. We would remark that the spectral representation of the mToda (adjoint) eigenfunctions can be directly derived from the mToda hierarchy without considering the GBMT constraint \eqref{ConsL'}.
\end{remark}

If we apply $\Delta$ to \eqref{WavePhi1}, we can obtain the mToda bilinear equation in terms of wave functions, that is \eqref{mToda'}. Thus, as shown in \cite{RuiCheng2024}, there exists the mToda tau pair $(\tau_{0,n},\tau_{1,n})$ such that
\begin{equation}\label{eq:PsiPsi*}
    \begin{aligned}
        &\Psi_1(n,\bft,z)=\frac{\tau_{0,n}(\bft-[z^{-1}]_1)}{\tau_{1,n}(\bft)}z^ne^{\xi(\bft^{(1)},z)},\qquad \Psi_1^*(n,\bft,z)=\frac{\tau_{1,n}(\bft+[z^{-1}]_1)}{\tau_{0,n}(\bft)}z^{-n}e^{-\xi(\bft^{(1)},z)},\\
        &\Psi_2(n,\bft,z)=\frac{\tau_{0,n+1}(\bft-[z]_2)}{\tau_{1,n}(\bft)}z^ne^{\xi(\bft^{(2)},z^{-1})},\qquad \Psi_2^*(n,\bft,z)=\frac{\tau_{1,n-1}(\bft+[z]_2)}{\tau_{0,n}(\bft)}z^{1-n}e^{-\xi(\bft^{(2)},z^{-1})}.
    \end{aligned}
\end{equation}

\begin{lemma}\label{Lem:eq}
    For the mToda wave functions $\Psi_a$, the mToda adjoint wave functions $\Psi^*_a$, the mToda eigenfunctions $q_{i,n}$ and the mToda adjoint eigenfunctions $r_{i,n}$, the following identities hold:
    \begin{align*}
        &\iota_{\La}\Delta^{-1}\Big(\Delta\big(q_{i,n}(\bft)\big)\Psi_1^*(n+1,\bft,z)\Big)=\Big(q_{i,n}(\bft)-q_{i,n}(\bft+[z^{-1}]_1)\Big)\Psi_1^*(n,\bft,z).\\
        &\iota_{\La^{-1}}\Delta^{-1}\Big(\Delta\big(q_{i,n}(\bft)\big)\Psi_2^*(n+1,\bft,z)\Big)=\Big(q_{i,n}(\bft)-q_{i,n-1}(\bft+[z]_2)\Big)\Psi_2^*(n,\bft,z),\\
        &\iota_{\La^{-1}}\Delta^{-1}\Big(\Delta\big(r_{i,n+1}(\bft)\big)\Psi_1(n+1,\bft,z)\Big)=\Big(r_{i,n+1}(\bft)-r_{i,n}(\bft-[z^{-1}]_1)\Big)\Psi_1(n,\bft,z),\\
        &\iota_{\La}\Delta^{-1}\Big(\Delta\big(r_{i,n+1}(\bft)\big)\Psi_2(n+1,\bft,z)\Big)=\Big(r_{i,n+1}(\bft)-r_{i,n+1}(\bft-[z]_2)\Big)\Psi_2(n,\bft,z).
    \end{align*}
\end{lemma}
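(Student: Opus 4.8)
These four identities are the discrete (Toda) analogues of the mKP squared eigenfunction potential relations recalled in Section~\ref{Sect:BieqtoLax}, so in each case the plan is to ``integrate'' a first–order difference relation. Concretely, for every one of the four I would carry out two steps: first, show that applying the difference operator $\Delta$ in the lattice variable $n$ to the claimed right–hand side reproduces exactly the function on which $\iota_{\La}\Delta^{-1}$ or $\iota_{\La^{-1}}\Delta^{-1}$ acts on the left; second, check that the branch of $\Delta^{-1}$ occurring there is the one singled out by the behaviour of that right–hand side as $n\to\pm\infty$.

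For the first step the only inputs are the identities \eqref{eq:qPsi1*}--\eqref{eq:rPsi2} of Lemma~\ref{lem:DeltaqPsi} and the Leibniz rule $a(n)\Delta\big(b(n)\big)=\Delta\big(a(n)b(n)\big)-\Delta\big(a(n)\big)b(n+1)$. For the first identity: rewrite \eqref{eq:qPsi1*} as $\Delta\big(q_{i,n}(\bft+[z^{-1}]_1)\Psi_1^*(n,\bft,z)\big)=q_{i,n}(\bft)\Delta\big(\Psi_1^*(n,\bft,z)\big)$, expand $\Delta\big(q_{i,n}(\bft)\Psi_1^*(n,\bft,z)\big)$ by the Leibniz rule, and subtract the two to obtain
\[
\Delta\Big(\big(q_{i,n}(\bft)-q_{i,n}(\bft+[z^{-1}]_1)\big)\Psi_1^*(n,\bft,z)\Big)=\Delta\big(q_{i,n}(\bft)\big)\,\Psi_1^*(n+1,\bft,z),
\]
whose right side is precisely the argument of $\iota_{\La}\Delta^{-1}$ in the first identity. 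The remaining three come out the same way from \eqref{eq:qPsi2*}, \eqref{eq:rPsi1} and \eqref{eq:rPsi2}; the only bookkeeping is that in the Leibniz rule the lattice subscript of $q_{i,n}$ (resp.\ of $r_{i,n+1}$) is itself shifted by $\Delta$, so that, e.g., for the third identity one gets $\Delta\big(\,(r_{i,n+1}(\bft)-r_{i,n}(\bft-[z^{-1}]_1))\Psi_1(n,\bft,z)\,\big)=\Delta\big(r_{i,n+1}(\bft)\big)\Psi_1(n+1,\bft,z)$.

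For the second step, $\Delta$ acting on functions of $n$ kills exactly the $n$–independent functions, so the first step pins each identity down only up to such a constant; to remove it, note that on an exact difference $\Delta g$ one has $\iota_{\La}\Delta^{-1}(\Delta g)=-\sum_{k\ge0}\Delta g(n+k)=g(n)-\lim_{k\to+\infty}g(n+k)$ and $\iota_{\La^{-1}}\Delta^{-1}(\Delta g)=\sum_{k\ge1}\Delta g(n-k)=g(n)-\lim_{k\to+\infty}g(n-k)$, all sums and limits understood in the appropriate adic completion. By the tau–function formulas \eqref{eq:PsiPsi*}, $\Psi_1,\Psi_1^*$ are $z^{\pm n}$ times a power series in $z^{-1}$, and $\Psi_2,\Psi_2^*$ are $z^{\pm n}$ times a power series in $z$; hence each claimed right–hand side is $O(z^{\pm n})$ and tends to $0$ --- as $n\to+\infty$ for the first and fourth identities and as $n\to-\infty$ for the second and third --- in the $z^{-1}$–adic topology for the objects expanded at $z=\infty$ (identities one and three) and in the $z$–adic topology for those expanded at $z=0$ (identities two and four). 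The same decay makes the telescoping series $\sum_k\Delta g(n\pm k)$ converge in that topology and collapse to exactly the claimed right–hand side, so combining with the first step yields all four identities.

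The first step is routine difference calculus; the point requiring care is the second --- correctly pairing the branch $\iota_{\La}$ versus $\iota_{\La^{-1}}$ with the side of the spectral parameter at which each $\Psi_a$, $\Psi_a^*$ is expanded (equivalently, deciding whether the identity lives in $\mathcal{A}[[\La^{-1}]][\La]$ or in $\mathcal{A}[[\La]][\La^{-1}]$) and checking that the associated telescoping series genuinely converges in that completion, so that the formal antidifference is unambiguous.
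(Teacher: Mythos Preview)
Your two-step strategy --- (i) show via Leibniz plus Lemma~\ref{lem:DeltaqPsi} that $\Delta$ applied to the claimed right-hand side reproduces the argument of the antidifference, then (ii) telescope and fix the branch by adic vanishing --- is internally coherent, and it is genuinely different from the paper's proof. The paper does not antidifferentiate at all: instead it evaluates the spectral representation \eqref{WavePhi1} (already proved in Theorem~\ref{thm:Wave}) at the special point $(n',\bft')=(n,\bft+[\lambda^{-1}]_1)$; the $C_0$-integral vanishes and the $C_\infty$-integral is computed by the geometric-series residue formula~\eqref{eq:Residue'}, after which the relation $\tilde\psi_1(n,\bft+[z^{-1}]_1,z)=\tilde\psi_1^*(n,\bft,z)^{-1}$ delivers the right-hand side directly.

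There is, however, a logical gap in your Step~1. Lemma~\ref{Lem:eq} is placed in Section~\ref{sect:Lax->Bieq}, whose purpose is to start from the Lax data~\eqref{ConsL'}--\eqref{partialqr} and \emph{derive} the tau-function bilinear equations; the Corollary immediately following Lemma~\ref{Lem:eq} uses it to conclude that $(\tau_{0,n},\rho_{i,n})$ and $(\sigma_{i,n},\tau_{1,n})$ are mToda tau pairs. But those tau-pair assertions are precisely the hypotheses under which Lemma~\ref{lem:DeltaqPsi} --- and hence the identities \eqref{eq:qPsi1*}--\eqref{eq:rPsi2} that are your sole input for Step~1 --- were proved in Section~\ref{Sect:Lax}. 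So in this section's logical flow your argument is circular: you are invoking, via Lemma~\ref{lem:DeltaqPsi}, the very facts Lemma~\ref{Lem:eq} is meant to help establish. The paper avoids this because the spectral representation in Theorem~\ref{thm:Wave} is obtained directly from the Lax side (cf.\ Remark~\ref{Rmk:spectral}) without ever assuming those extra tau pairs. Your Step~2 --- matching the $\iota_{\La}$ versus $\iota_{\La^{-1}}$ branch to the $z^{-1}$-adic or $z$-adic decay of $\Psi_a$, $\Psi_a^*$ --- is correct and would survive, but to rescue the overall argument you would need an independent derivation of \eqref{eq:qPsi1*}--\eqref{eq:rPsi2} from the hypotheses available in Section~\ref{sect:Lax->Bieq} alone.
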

\begin{proof}
    Firstly, by using \eqref{eq:PsiPsi*}, one can let 
    \begin{alignat*}{2}
        &\Psi_1(n,\bft,z)=\tilde{\psi}_1(n,\bft,z)z^ne^{\xi(\bft^{(1)},z)},\qquad &&\Psi_1^*(n,\bft,z)=\tilde{\psi}_1^*(n,\bft,z)z^{-n}e^{-\xi(\bft^{(1)},z)},\\
        &\Psi_2(n,\bft,z)=\tilde{\psi}_2(n,\bft,z)z^ne^{\xi(\bft^{(2)},z^{-1})},\qquad &&\Psi_2^*(n,\bft,z)=\tilde{\psi}_2^*(n,\bft,z)z^{-n}e^{-\xi(\bft^{(2)},z^{-1})},
    \end{alignat*}
    where $\tilde{\psi}_1=\sum_{k\geq0}\tilde{\psi}_{1,k}z^{-k},~\tilde{\psi}_1^*=\sum_{k\geq0}\tilde{\psi}_{1,k}^*z^{-k},~\tilde{\psi}_2=\sum_{k\geq0}\tilde{\psi}_{2,k}z^k$ and $\tilde{\psi}_2^*=\sum_{k\geq0}\tilde{\psi}_{2,k}^*z^{k+1}.$ Further, from \eqref{eq:PsiPsi*} we obtain
    \begin{equation}\label{eq:psi1psi1*}
        \tilde{\psi}_1(n,\bft+[z^{-1}]_1,z)=\tilde{\psi}_1^*(n,\bft,z)^{-1}.
    \end{equation}
    Thus we can assume 
    \begin{align*}
        &\iota_{\La}\Delta^{-1}\Big(\Delta\big(q_{i,n}(\bft)\big)\Psi_1^*(n+1,\bft,z)\Big)=z^{-n}e^{-\xi(\bft^{(1)},z)}K_1(n,\bft,z)=\sum_{j\geq1}z^{-n}e^{-\xi(\bft^{(1)},z)}K_1(n,\bft)z^{-j},\\
        &\iota_{\La^{-1}}\Delta^{-1}\Big(\Delta\big(q_{i,n}(\bft)\big)\Psi_2^*(n+1,\bft,z)\Big)=z^{-n}e^{-\xi(\bft^{(2)},z^{-1})}K_2(n,\bft,z)=\sum_{j\geq1}z^{-n}e^{-\xi(\bft^{(2)},z^{-1})}K_2(n,\bft)z^{j}.
    \end{align*}

    The following equation is to be used next,
    \begin{equation}\label{eq:Residue'}
        \Res_z\sum_{k=0}^{\infty}a_k(\la)z^{-k}\frac{1}{1-z/\la}=\la\sum_{k=1}^{\infty}a_k(\la)
    z^{-k}\Big|_{z=\la}.
    \end{equation}
    Notice that the mToda eigenfunction $q_{i,n}$ in \eqref{partialqr} satisfies the spectral representation \eqref{WavePhi1} (please see Remark \ref{Rmk:spectral}). Thus letting $n'=n$ and $\bft'=\bft+[\la^{-1}]_1$ for \eqref{WavePhi1}, note that 
    \begin{align*}
        &\oint_{C_0}\frac{dz}{2\pi\text{i} z}\iota_{\La^{-1}}\Delta^{-1}\Big(\Delta\big(q_{i,n}(\bft)\big)\cdot\Psi_2^*(n+1,\bft,z)\Big)\cdot\Psi_2(n,\bft+[\la^{-1}]_1,z)\\
        =&\oint_{C_0}\frac{dz}{2\pi\text{i} z}K_2(n,\bft,z)\tilde{\psi}_2^*(n,\bft+[\la^{-1}]_1,z)=0,
    \end{align*}
    where the second identity holds due to \eqref{eq:Residue'}. Thus by using \eqref{WavePhi1},
    \begin{align*}
        q_{i,n}(\bft)-q_{i,n}(\bft+[\la^{-1}]_1)=&\oint_{C_\infty}\frac{dz}{2\pi\text{i} z}\iota_{\La}\Delta^{-1}\Big(\Delta\big(q_{i,n}(\bft)\big)\cdot \Psi_1^*(n+1,\bft,z)\Big)\cdot\Psi_1(n,\bft+[\la^{-1}]_1,z)\\
        =&\oint_{C_\infty}\frac{dz}{2\pi\text{i} z}K_1(n,\bft,z)\tilde{\psi}_1(n,\bft+[\la^{-1}]_1,z)\frac{1}{1-z/\la}\\
        =&K_1(n,\bft,\la)\tilde{\psi}_1(n,\bft+[\la^{-1}]_1,\la)\\
        =&\la^{-n}e^{-\xi(\bft^{(1)},\la)}K_1(n,\bft,\la)\cdot\la^{n}e^{\xi(\bft^{(1)},\la)}\tilde{\psi}_1^*(n,\bft,\la)^{-1},
    \end{align*}
    where the third identity holds due to \eqref{eq:Residue'} and the fourth identity holds due to \eqref{eq:psi1psi1*}. Hence, we have
    \[
        \iota_{\La}\Delta^{-1}\Big(\Delta\big(q_{i,n}(\bft)\big)\Psi_1^*(n+1,\bft,\la)\Big)=(q_{i,n}(\bft)-q_{i,n}(\bft+[\la^{-1}]_1))\Psi_1^*(n,\bft,\la).
    \]
    The remaining identities follow by analogous arguments.
\end{proof}

If we denote 
\begin{equation}\label{eq:rhosigma}
    \rho_{i,n}(\bft)=q_{i,n}(\bft)\tau_{1,n}(\bft),\qquad \sigma_{i,n}(\bft)=r_{i,n}(\bft)\tau_{0,n}(\bft),
\end{equation}
then by Lemma \ref{Lem:eq} and Theorem \ref{thm:Wave}, we can obtain the bilinear equations for the GBMT hierarchy in terms of the mToda tau pair, which is given by the following corollary.
\begin{corollary}
    The functions $\tau_{0,n},\tau_{1,n}$ defined by \eqref{eq:PsiPsi*} and $\rho_{i,n}, \sigma_{i,n}$ defined by \eqref{eq:rhosigma} satisfy \eqref{cmToda1}--\eqref{mToda}.
\end{corollary}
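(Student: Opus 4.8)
The plan is to read off the four tau-function bilinear equations \eqref{cmToda1}--\eqref{mToda} from the three wave-function bilinear equations \eqref{WavePhi1}--\eqref{WavePhi3} of Theorem \ref{thm:Wave} together with the mToda identity \eqref{mToda'}, by inserting the tau-pair representations \eqref{eq:PsiPsi*} and recognizing the products $q_{i,n}\tau_{1,n}=\rho_{i,n}$ and $r_{i,n}\tau_{0,n}=\sigma_{i,n}$ from \eqref{eq:rhosigma}. Two of the four equations are almost immediate. Substituting \eqref{eq:PsiPsi*} into \eqref{mToda'}, the monomials $z^{n-n'}$, $z^{n-n'+1}$ and the exponential factors assemble precisely into the integrands of \eqref{mToda}, and multiplying both sides by the common denominator $\tau_{1,n}(\bft)\tau_{0,n'}(\bft')$ yields \eqref{mToda} (which is just the content of \cite{RuiCheng2024} that $(\tau_{0,n},\tau_{1,n})$ is an mToda tau pair). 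Likewise, substituting \eqref{eq:PsiPsi*} into \eqref{WavePhi3} matches the integrands of \eqref{cmToda1}, and after multiplication by $\tau_{1,n}(\bft)\tau_{0,n'}(\bft')$ the right-hand side $\sum_i q_{i,n}(\bft)r_{i,n'}(\bft')$ becomes $\sum_i\rho_{i,n}(\bft)\sigma_{i,n'}(\bft')$ by \eqref{eq:rhosigma}, giving \eqref{cmToda1}.

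For \eqref{cmToda2} and \eqref{cmToda3} the content is that $(\tau_{0,n},\rho_{i,n})$ and $(\sigma_{i,n},\tau_{1,n})$ are themselves mToda tau pairs, which I would extract from the spectral representations \eqref{WavePhi1} and \eqref{WavePhi2}. For \eqref{cmToda2}: apply Lemma \ref{Lem:eq} to the two $\iota_{\La^{\pm1}}\Delta^{-1}$-kernels of \eqref{WavePhi1}, so that its left-hand side becomes an integral over $C_\infty$ with kernel $(q_{i,n}(\bft)-q_{i,n}(\bft+[z^{-1}]_1))\Psi_1^*(n,\bft,z)\Psi_1(n',\bft',z)$ plus one over $C_0$ with kernel $(q_{i,n}(\bft)-q_{i,n-1}(\bft+[z]_2))\Psi_2^*(n,\bft,z)\Psi_2(n',\bft',z)$; the part carrying the overall factor $q_{i,n}(\bft)$ collapses to $q_{i,n}(\bft)$ by \eqref{mToda'} applied with $n\leftrightarrow n'$, so cancellation leaves $\oint_{C_\infty}q_{i,n}(\bft+[z^{-1}]_1)\Psi_1^*(n,\bft,z)\Psi_1(n',\bft',z)+\oint_{C_0}q_{i,n-1}(\bft+[z]_2)\Psi_2^*(n,\bft,z)\Psi_2(n',\bft',z)=q_{i,n'}(\bft')$. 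Substituting \eqref{eq:PsiPsi*}, using $q_{i,n}(\bft+[z^{-1}]_1)\tau_{1,n}(\bft+[z^{-1}]_1)=\rho_{i,n}(\bft+[z^{-1}]_1)$ and its $C_0$-analogue, clearing the denominator $\tau_{0,n}(\bft)\tau_{1,n'}(\bft')$, and relabelling $(n,\bft)\leftrightarrow(n',\bft')$ produces exactly \eqref{cmToda2}. Equation \eqref{cmToda3} is the mirror computation starting from \eqref{WavePhi2}: apply the last two identities of Lemma \ref{Lem:eq} (with the index shift $n\mapsto n-1$ they require), split off the $r_{i,n}(\bft)$-term via \eqref{mToda'} with $n$ shifted to $n-1$, shift $n\mapsto n+1$, substitute \eqref{eq:PsiPsi*}, and use $r_{i,n}(\bft-[z^{-1}]_1)\tau_{0,n}(\bft-[z^{-1}]_1)=\sigma_{i,n}(\bft-[z^{-1}]_1)$.

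I do not expect a genuine obstacle: all the substantive input — the wave-operator factorization \eqref{eq:L1L2}--\eqref{eq:pS2S1}, the auxiliary equations \eqref{LPhi1}--\eqref{eq:ParPsi}, Theorem \ref{thm:Wave}, Lemma \ref{Lem:eq}, and the existence of the mToda tau pair \eqref{eq:PsiPsi*} from \cite{RuiCheng2024} — is already in place, so the corollary is essentially a change of variables from wave functions to tau functions. The only delicate point is the bookkeeping: tracking the powers of $z$ and the direction of the $[z^{\pm1}]_a$-shifts inside the $C_0$ integrals, the forced index shifts $n\mapsto n\pm1$ coming from Lemma \ref{Lem:eq}, and the reflection $n\leftrightarrow n'$ that places the shifted argument of $q_i$ (resp. $r_i$) next to the correct tau-factor so that \eqref{eq:rhosigma} can be used to form $\rho_i$ (resp. $\sigma_i$). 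It is also worth noting that \eqref{WavePhi1} and \eqref{eq:q} (resp. \eqref{WavePhi2} and \eqref{Wave1'}) are interchangeable by \eqref{mToda'}, as already used in the proof of Theorem \ref{thm:Wave}, in case one prefers to start from the form with $\iota\Delta^{-1}$ on the outside.
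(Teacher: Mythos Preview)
Your proposal is correct and follows exactly the approach indicated by the paper, which simply states that the corollary follows ``by Lemma \ref{Lem:eq} and Theorem \ref{thm:Wave}'' together with the definitions \eqref{eq:PsiPsi*} and \eqref{eq:rhosigma}. You have accurately unpacked that one-line justification: \eqref{mToda} and \eqref{cmToda1} come directly from \eqref{mToda'} and \eqref{WavePhi3} after inserting \eqref{eq:PsiPsi*} and clearing denominators, while \eqref{cmToda2} and \eqref{cmToda3} are obtained from \eqref{WavePhi1} and \eqref{WavePhi2} via Lemma \ref{Lem:eq}, peeling off the constant-in-$z$ term with the swapped form of \eqref{mToda'}, and then recombining the shifted $q_{i}$ (resp.\ $r_{i}$) with the adjacent shifted $\tau_{1}$ (resp.\ $\tau_{0}$) into $\rho_{i}$ (resp.\ $\sigma_{i}$) before relabelling $(n,\bft)\leftrightarrow(n',\bft')$.
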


\begin{remark}
    We have established the following equivalent formulations of the GBMT hierarchy, 
    \begin{itemize}
        \item bilinear equations in terms of tau functions \eqref{cmToda1}--\eqref{mToda};
        \item bilinear equations in terms of wave functions \eqref{WavePhi1}--\eqref{WavePhi3};
        \item Lax formulations \eqref{ConsL'}--\eqref{partialqr}.
    \end{itemize}
\end{remark}

\section{Miura links with the GBT hierarchy}\label{Sect:Miura}

Note that the mToda hierarchy is related to the Toda hierarchy through Miura links\cite{RuiCheng2024}. In this section, we discuss the relations between the GBMT hierarchy and the GBT hierarchy.

Recall that there are two Miura links between the Toda and mToda hierarchies. Given Lax operators $(L_1,L_2)$ of the mToda hierarchy, let
\begin{equation}\label{eq:Miura'}
    \begin{alignedat}{2}
        &\LL_1=T_1\cdot L_1\cdot T_1^{-1}, \qquad &&\LL_2=T_1\cdot L_2\cdot T_1^{-1}, \\
        &\LL_1'=T_2\cdot L_1\cdot \iota_{\La^{-1}}T_2^{-1}, \qquad &&\LL_2'=T_2\cdot L_2\cdot \iota_{\La}T_2^{-1},
    \end{alignedat}
\end{equation}
where 
\begin{equation}
    T_1=c_0^{-1}(n),\qquad T_2=c_0^{-1}(n+1)\Delta, \qquad c_0(n)=(S_1)_{[0]}.
\end{equation}
Here $S_1$ is the wave operator of $L_1$, that is $L_1=S_1\La S_1^{-1}$. Then \cite[Section 4]{RuiCheng2024} shows us the following results:
\begin{itemize}
    \item $(\LL_1,\LL_2)$ and $(\LL_1',\LL_2')$ are two different Lax operators of the Toda hierarchy, i.e.
    \begin{equation}
        \p_{t^{(a)}_p}\tilde{\LL}_b=[\tilde{\mathcal{B}}_p^{(a)},\tilde{\LL}_b], \qquad \tilde{\LL}_a=\LL_a~\textnormal{or}~\LL_a',\quad a,b=1,2,\label{eq:TodaLax}
    \end{equation}
    where $\tilde{\mathcal{B}}_p^{(1)}=(\tilde{\LL}_1^p)_{\geq0},~ \tilde{\mathcal{B}}_p^{(2)}=(\tilde{\LL}_2^p)_{<0}$ and $\tilde{\LL}_1=\Lambda +\sum_{i\geq0}\tilde{a}_i\La^{-i}$ and $\tilde{\LL}_2=\tilde{b}\Lambda^{-1}+\sum_{i\geq0}\tilde{b}_i\La^{i}$.
    \item $c^{-1}_0(n)$ is the Toda eigenfunction of $(\LL_1,\LL_2)$, and $c_0(n+1)$ is the Toda adjoint eigenfunction of $(\LL_1',\LL_2')$, that is,
    \[
        \p_{t^{(a)}_p}c^{-1}_0(n)=\mathcal{B}_p^{(a)}(c^{-1}_0(n)),\qquad \p_{t^{(a)}_p}c_0(n+1)=-(\mathcal{B}_p^{'(a)})^*(c_0(n+1)).
    \]
\end{itemize}
Here we use prime to stress that the two transformations give different results. The other symbols are expressed similarly.

We can deduce that the constraint of the GBT hierarchy from the constraint of the GBMT hierarchy. More precisely, we need following preparation.
\begin{lemma}\label{lem:B&BB}
    Under above transformations, we get
    \begin{align}
        &B_p^{(a)}=c_0\cdot \mathcal{B}_p^{(a)}\cdot c_0^{-1}+\p_{t_p^{(a)}}(c_0)\cdot c_0^{-1},\label{eq:B&BB}\\
        &B_p'^{(a)}=\iota_{\La^{-1}}\Delta^{-1}\cdot\Big(c_0(n+1)\cdot\mathcal{B}_p'^{(a)}+\p_{t_p^{(a)}}(c_0(n+1))\Big)\cdot c_0^{-1}(n+1) \Delta,\label{eq:B&BB'}
    \end{align}
    where $B_p^{(1)}=(L_1^p)_{\Delta,\geq1}$, $B_p^{(2)}=(L_2^p)_{\Delta^*,\geq1}$.
    \begin{proof}
        While \eqref{eq:B&BB} is proved in \cite[Proposition 4.6]{RuiCheng2024}, we provide here a proof for $B_p'^{(1)}$. Recalling definition \eqref{eq:Miura'} and 
        \begin{equation}
            \Big(\iota_{\La^{-1}}\Delta^{-1}fA f^{-1} \Delta\Big)_{\Delta,\geq1}=\iota_{\La^{-1}}\Delta^{-1}fA_{\geq0}f^{-1}\Delta-\iota_{\La^{-1}}\Delta^{-1}(A_{\geq0})^*(f)f^{-1}\Delta,\label{eq:fAf}
        \end{equation}
        we get 
        \begin{align*}
            B_p'^{(1)}=&\Big(\iota_{\La^{-1}}\Delta^{-1}c_0(n+1)\LL_1'^p c_0^{-1}(n+1) \Delta\Big)_{\Delta,\geq1}\\
            =&\iota_{\La^{-1}}\Delta^{-1}c_0(n+1)\mathcal{B}_p'^{(1)}c_0^{-1}(n+1)\Delta+\iota_{\La^{-1}}\Delta^{-1}\p_{t_p^{(1)}}(c_0(n+1))c_0^{-1}(n+1)\Delta.
        \end{align*}
        So we get the formula, and the method is also valid for $a=2$.
    \end{proof}
\end{lemma}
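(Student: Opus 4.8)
The plan is to derive both formulas from the two Miura transformations \eqref{eq:Miura'} together with the eigenfunction and adjoint-eigenfunction properties of $c_0$ recalled above; since \eqref{eq:B&BB} is already in \cite[Proposition 4.6]{RuiCheng2024}, the real content is \eqref{eq:B&BB'}. For completeness, \eqref{eq:B&BB} goes as follows: from $\LL_1=c_0^{-1}(n)\,L_1\,c_0(n)$ one gets $L_1^p=c_0\,\LL_1^p\,c_0^{-1}$, and since conjugation by a function preserves $\La$-degrees, $(L_1^p)_{\geq1}=c_0\,(\LL_1^p)_{\geq1}\,c_0^{-1}$; writing $(\LL_1^p)_{\geq1}=\mathcal{B}_p^{(1)}-(\LL_1^p)_{[0]}$, passing to the $\Delta$-truncation via $(L_1^p)_{\Delta,\geq1}=(L_1^p)_{\geq1}-(L_1^p)_{\geq1}(1)$, and using $\mathcal{B}_p^{(1)}(c_0^{-1})=\p_{t_p^{(1)}}(c_0^{-1})$ to absorb the boundary terms yields \eqref{eq:B&BB}; the case $a=2$ is the same with $\LL_2$ and the $\Delta^*$-truncation.

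For \eqref{eq:B&BB'} (I treat $a=1$ below) I would proceed in three steps. Step one: invert the second Miura link. Since $T_2=c_0^{-1}(n+1)\Delta$ and $\iota_{\La^{-1}}\Delta^{-1}\cdot\Delta=1$, raising $\LL_1'=T_2\,L_1\,\iota_{\La^{-1}}T_2^{-1}$ to the $p$-th power lets the $\Delta$'s telescope, so $\LL_1'^p=c_0^{-1}(n+1)\,\Delta\,L_1^p\,\iota_{\La^{-1}}\Delta^{-1}\,c_0(n+1)$, hence $L_1^p=\iota_{\La^{-1}}\Delta^{-1}\cdot c_0(n+1)\,\LL_1'^p\, c_0^{-1}(n+1)\cdot\Delta$. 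Step two: compute $B_p'^{(1)}=(L_1^p)_{\Delta,\geq1}$ by splitting $\LL_1'^p=\mathcal{B}_p'^{(1)}+(\LL_1'^p)_{<0}$. The tail $(\LL_1'^p)_{<0}$, after conjugation by $c_0(n+1)$, still has only strictly negative $\La$-powers; multiplying by $\Delta$ on the right keeps the powers $\leq0$, and then $\iota_{\La^{-1}}\Delta^{-1}\cdot$ sends it into strictly negative $\La$-degrees, so its $\Delta,\geq1$-part vanishes. For the remaining finite part, writing $D=c_0(n+1)\,\mathcal{B}_p'^{(1)}\,c_0^{-1}(n+1)=\sum_i d_i\La^i$ and using $[D,\Delta]=-\sum_i\Delta(d_i)\La^{i+1}$, one has $\iota_{\La^{-1}}\Delta^{-1}D\Delta=D-\iota_{\La^{-1}}\Delta^{-1}\sum_i\Delta(d_i)\La^{i+1}$, and extracting the $\Delta,\geq1$-truncation with the truncation rules of Lemma \ref{LemDelta} leaves exactly one correction term; concretely this is the auxiliary identity $(\iota_{\La^{-1}}\Delta^{-1}fAf^{-1}\Delta)_{\Delta,\geq1}=\iota_{\La^{-1}}\Delta^{-1}fA_{\geq0}f^{-1}\Delta-\iota_{\La^{-1}}\Delta^{-1}(A_{\geq0})^*(f)f^{-1}\Delta$, applied with $f=c_0(n+1)$ and $A=\LL_1'^p$, so that $A_{\geq0}=\mathcal{B}_p'^{(1)}$. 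Step three: substitute the adjoint-eigenfunction relation $-(\mathcal{B}_p'^{(1)})^*(c_0(n+1))=\p_{t_p^{(1)}}(c_0(n+1))$ to turn the correction term into $\p_{t_p^{(1)}}(c_0(n+1))$, which collapses the answer to $B_p'^{(1)}=\iota_{\La^{-1}}\Delta^{-1}\cdot\big(c_0(n+1)\,\mathcal{B}_p'^{(1)}+\p_{t_p^{(1)}}(c_0(n+1))\big)\cdot c_0^{-1}(n+1)\,\Delta$, i.e. \eqref{eq:B&BB'}. The case $a=2$ runs in parallel, using the second formula of \eqref{eq:Miura'}, the Toda truncation $\mathcal{B}_p'^{(2)}=(\LL_2'^p)_{<0}$, and the $\Delta^*$-truncation in place of the $\Delta$-truncation.

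I expect the only delicate point to be step two — the $\Delta,\geq1$-truncation of the dressed conjugate $\iota_{\La^{-1}}\Delta^{-1}D\Delta$. That is where one must keep three distinct projection operations straight (positive $\La$-degree, positive $\Delta$-degree, positive $\Delta^*$-degree), verify that the negative tail genuinely drops out, and check that the single surviving term is precisely the derivative-of-the-dressing-function contribution $(\mathcal{B}_p'^{(a)})^*(c_0(n+1))$ that the adjoint-eigenfunction law is built to cancel against $\p_{t_p^{(a)}}(c_0(n+1))$. Everything else — inverting the Miura link and plugging in the eigenfunction identities — is purely formal.
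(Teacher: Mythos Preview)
Your proposal is correct and follows essentially the same route as the paper: invert the second Miura link to write $L_1^p=\iota_{\La^{-1}}\Delta^{-1}c_0(n+1)\LL_1'^p c_0^{-1}(n+1)\Delta$, apply the truncation identity $(\iota_{\La^{-1}}\Delta^{-1}fAf^{-1}\Delta)_{\Delta,\geq1}=\iota_{\La^{-1}}\Delta^{-1}fA_{\geq0}f^{-1}\Delta-\iota_{\La^{-1}}\Delta^{-1}(A_{\geq0})^*(f)f^{-1}\Delta$ with $f=c_0(n+1)$ and $A=\LL_1'^p$, and then use the adjoint-eigenfunction relation for $c_0(n+1)$ to convert the correction term into $\p_{t_p^{(1)}}(c_0(n+1))$. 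Your extra detail (explicitly telescoping the Miura link, arguing that the $<0$ tail drops out, and the commutator computation for $D$) is sound but not strictly needed once the auxiliary identity is invoked, since that identity already isolates $A_{\geq0}$ on the right-hand side.
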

\begin{lemma}\label{lem:eigenfunc}
    Let $q_n$ be mToda eigenfunction and $r_n$ be mToda adjoint eigenfunction, then
    \begin{itemize}
        \item $c_0^{-1}\cdot q_n$ is Toda eigenfunction and $c_0\cdot \Delta(r_n)$ is Toda adjoint eigenfunction with respect to $(\LL_1,\LL_2)$;
        \item $c_0^{-1}(n+1)\cdot \Delta(q_n)$ is Toda eigenfunction and $c_{0}(n+1)\cdot r_{n+1}$ is Toda adjoint eigenfunction with respect to $(\LL_1',\LL_2')$.
    \end{itemize}
\end{lemma}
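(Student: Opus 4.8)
The plan is to derive all four assertions from Lemma~\ref{lem:B&BB} by one recurring device: conjugate the mToda flow generators $B_p^{(a)}$ into the Toda flow generators $\mathcal{B}_p^{(a)}$ (or $\mathcal{B}_p'^{(a)}$) and watch the $\p_{t_p^{(a)}}(c_0)$–contributions cancel against the $c_0$–derivatives produced by the Leibniz rule. Throughout I will use that $q_n$ being an mToda eigenfunction means $\p_{t_p^{(a)}}(q_n)=B_p^{(a)}(q_n)$, that $r_n$ being an mToda adjoint eigenfunction means $\p_{t_p^{(a)}}(r_n)=-\big(\Delta^{-1}B_p^{(a)*}\Delta\big)(r_n)$, together with the evolutions of the dressing coefficient recorded after \eqref{eq:Miura'}, namely $\p_{t_p^{(a)}}\big(c_0^{-1}(n)\big)=\mathcal{B}_p^{(a)}\big(c_0^{-1}(n)\big)$ and $\p_{t_p^{(a)}}\big(c_0(n+1)\big)=-\big(\mathcal{B}_p'^{(a)}\big)^{*}\big(c_0(n+1)\big)$, and the fact (visible in the proof of Lemma~\ref{lem:B&BB}) that the mToda generator entering the second Miura link is again $B_p^{(a)}$, so \eqref{eq:B&BB'} reads $B_p^{(a)}=\iota_{\La^{-1}}\Delta^{-1}\big(c_0(n+1)\mathcal{B}_p'^{(a)}+\p_{t_p^{(a)}}(c_0(n+1))\big)c_0^{-1}(n+1)\,\Delta$.

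For the link $(\LL_1,\LL_2)$: writing \eqref{eq:B&BB} as $B_p^{(a)}=c_0\,\mathcal{B}_p^{(a)}\,c_0^{-1}+\p_{t_p^{(a)}}(c_0)\,c_0^{-1}$, I expand $\p_{t_p^{(a)}}\big(c_0^{-1}q_n\big)$ by Leibniz, substitute $\p_{t_p^{(a)}}(q_n)=B_p^{(a)}(q_n)$ and the displayed form of $B_p^{(a)}$, and use $\p_{t_p^{(a)}}(c_0^{-1})=-c_0^{-2}\p_{t_p^{(a)}}(c_0)$; the two $\p_{t_p^{(a)}}(c_0)$–terms cancel and $\mathcal{B}_p^{(a)}\big(c_0^{-1}q_n\big)$ remains, i.e.\ $c_0^{-1}q_n$ is a Toda eigenfunction. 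For $c_0\,\Delta(r_n)$ I first note that, since $\Delta$ commutes with $\p_{t_p^{(a)}}$ and $\Delta\cdot\Delta^{-1}=1$, applying $\Delta$ to the adjoint flow gives $\p_{t_p^{(a)}}\big(\Delta(r_n)\big)=-B_p^{(a)*}\big(\Delta(r_n)\big)$; then I take the adjoint of \eqref{eq:B&BB}, namely $B_p^{(a)*}=c_0^{-1}\big(\mathcal{B}_p^{(a)}\big)^{*}c_0+\p_{t_p^{(a)}}(c_0)\,c_0^{-1}$, and run the identical cancellation on $c_0\,\Delta(r_n)$ to obtain $\p_{t_p^{(a)}}\big(c_0\,\Delta(r_n)\big)=-\big(\mathcal{B}_p^{(a)}\big)^{*}\big(c_0\,\Delta(r_n)\big)$.

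For the link $(\LL_1',\LL_2')$ the mechanism is the same, except that $\Delta$ now sits on the outside. Using the above form of \eqref{eq:B&BB'} and $\Delta\cdot\iota_{\La^{-1}}\Delta^{-1}=1$, I apply $\Delta$ to $\p_{t_p^{(a)}}(q_n)=B_p^{(a)}(q_n)$ to get $\p_{t_p^{(a)}}\big(\Delta(q_n)\big)=\big(c_0(n+1)\mathcal{B}_p'^{(a)}+\p_{t_p^{(a)}}(c_0(n+1))\big)\big(c_0^{-1}(n+1)\Delta(q_n)\big)$, then divide by $c_0(n+1)$ and add $\p_{t_p^{(a)}}\big(c_0^{-1}(n+1)\big)\Delta(q_n)$; the Leibniz expansion of $\p_{t_p^{(a)}}\big(c_0^{-1}(n+1)\Delta(q_n)\big)$ collapses to $\mathcal{B}_p'^{(a)}\big(c_0^{-1}(n+1)\Delta(q_n)\big)$. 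For $c_0(n+1)r_{n+1}$ I take the adjoint of \eqref{eq:B&BB'}, using $\big(\iota_{\La^{-1}}\Delta^{-1}\big)^{*}=\iota_{\La}\Delta^{*-1}$ and $\Delta^{*}=-\La^{-1}\Delta$, to compute $\La\cdot\Delta^{-1}B_p^{(a)*}\Delta=c_0^{-1}(n+1)\big(\big(\mathcal{B}_p'^{(a)}\big)^{*}c_0(n+1)+\p_{t_p^{(a)}}(c_0(n+1))\big)\La$; applying $\La$ to the mToda adjoint flow and then this identity to $r_n$ gives $\p_{t_p^{(a)}}(r_{n+1})$ in a form which, after multiplying by $c_0(n+1)$ and the same cancellation, produces exactly $-\big(\mathcal{B}_p'^{(a)}\big)^{*}\big(c_0(n+1)r_{n+1}\big)$.

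The routine content is just these Leibniz cancellations; the step I expect to be the main obstacle is the formal difference–operator bookkeeping in the primed case — choosing the branches $\iota_{\La}$ and $\iota_{\La^{-1}}$ consistently so that every composition in \eqref{eq:B&BB'} and its adjoint genuinely lands in $\mathcal{A}[[\La^{-1}]][\La]$ or $\mathcal{A}[[\La]][\La^{-1}]$, checking the well-definedness of the inner $\Delta^{-1}$ in $\Delta^{-1}B_p^{(a)*}\Delta$ (Remark~\ref{RemarkpW}), and verifying the adjoint identity $\big(\iota_{\La^{-1}}\Delta^{-1}\big)^{*}=\iota_{\La}\Delta^{*-1}$ with $\Delta^{*}=-\La^{-1}\Delta$. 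Once this calculus is pinned down, each of the four cases reduces to the same short computation.
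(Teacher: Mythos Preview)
Your proposal is correct and follows exactly the route the paper intends: the paper's own proof is the single sentence ``This proof is straightforward with Lemma~\ref{lem:B&BB},'' and what you have written is precisely that straightforward verification spelled out in full, including the Leibniz cancellations and the adjoint/difference-operator bookkeeping for the primed link. There is no alternative idea to compare.
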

\begin{proof}
    This proof is straightforward with Lemma \ref{lem:B&BB}.
\end{proof}

With the above preparations, we now describe the Miura links between the two constraints.
\begin{theorem}
    Let $(L_1,L_2)$ be the Lax operators of GBMT hierarchy, i.e.,
    \begin{equation}
        L_1^M\iota_{\La^{-1}}\Delta^{-1}=L_2^N\iota_{\La}\Delta^{-1}+\sum_{l\in\Z}\sum_{i=1}^{m}q_{i,n}\La^lr_{i,n+1}.\label{eq:consL'}
    \end{equation}
    Then operators $(\LL_1,\LL_2)$ and $(\LL_1',\LL_2')$ defined by \eqref{eq:Miura'} are the GBT Lax operators,
    \begin{equation}
        \LL_1^M=\LL_2^{N}+\sum_{l\in \mathbb Z}\sum_{i=1}^{m}\tilde{q}_{i,n}\Lambda^l\tilde{r}_{i,n}, \qquad \LL_1'^M=\LL_2'^N+\sum_{l\in\Z}\sum_{i=1}^{m}\tilde{q}_{i,n}'\Lambda^l\tilde{r}_{i,n+1}'.\label{eq:consLL'}
    \end{equation}
\end{theorem}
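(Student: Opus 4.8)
The plan is to verify only the algebraic constraint \eqref{eq:consLL'}, since the Toda Lax equations \eqref{eq:TodaLax} for $(\LL_1,\LL_2)$ and $(\LL_1',\LL_2')$ are already established in \cite{RuiCheng2024} and Lemma \ref{lem:eigenfunc} already identifies the would-be Toda (adjoint) eigenfunctions. So I would conjugate the GBMT constraint by the relevant Miura operator, read off that the conjugated rank-one tail has exactly the GBT shape $\sum_{l\in\Z}\sum_i\tilde q_{i,n}\La^l\tilde r_{i,n}$, and identify $\tilde q,\tilde r$ with the expressions in Lemma \ref{lem:eigenfunc}; then invoking \eqref{eq:TodaLax} and Lemma \ref{lem:eigenfunc} closes the argument. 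Conveniently, the two Miura links call for the two equivalent forms of the GBMT constraint: Corollary \ref{coro:consL} for the first, and the $\iota$-form \eqref{eq:consL'} for the second.

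For the first link I would start from $L_1^M=L_2^N+\sum_{l\in\Z}\sum_i q_{i,n}\La^l r_{i,n+1}\Delta$ (Corollary \ref{coro:consL}). Conjugation by the multiplication operator $T_1=c_0^{-1}(n)$ is a ring homomorphism, so $\LL_1^M-\LL_2^N=c_0^{-1}\big(\sum_{l\in\Z}\sum_i q_{i,n}\La^l r_{i,n+1}\Delta\big)c_0$. Expanding the action on a test function and re-summing over $l$ yields the elementary identity $\sum_{l\in\Z}q_{i,n}\La^l r_{i,n+1}\Delta\cdot c_0=-\sum_{l\in\Z}q_{i,n}\La^l\big(c_0\,\Delta(r_{i,n})\big)$, whence $\LL_1^M=\LL_2^N+\sum_{l\in\Z}\sum_i\big(c_0^{-1}q_{i,n}\big)\La^l\big(-c_0\Delta(r_{i,n})\big)$. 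This is \eqref{eq:consLL'} with $\tilde q_{i,n}=c_0^{-1}q_{i,n}$ and $\tilde r_{i,n}=-c_0\Delta(r_{i,n})$ (the overall sign is immaterial); by Lemma \ref{lem:eigenfunc} these are a Toda eigenfunction and a Toda adjoint eigenfunction of $(\LL_1,\LL_2)$, and by \eqref{eq:TodaLax} the pair $(\LL_1,\LL_2)$ solves the Toda Lax equations, so it is a GBT Lax pair.

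For the second link the natural input is \eqref{eq:consL'} itself. With $T_2=c_0^{-1}(n+1)\Delta$ one has $\iota_{\La^{-1}}T_2^{-1}=\iota_{\La^{-1}}\Delta^{-1}c_0(n+1)$ and $\iota_\La T_2^{-1}=\iota_\La\Delta^{-1}c_0(n+1)$, and the cancellations $\big(\iota_{\La^{-1}}\Delta^{-1}c_0(n+1)\big)\big(c_0^{-1}(n+1)\Delta\big)=\iota_{\La^{-1}}\Delta^{-1}\Delta=1$ and likewise with $\iota_\La$ let the inner factors telescope when $\LL_1'$ and $\LL_2'$ are raised to powers: $(\LL_1')^M=c_0^{-1}(n+1)\,\Delta\,L_1^M\,\iota_{\La^{-1}}\Delta^{-1}\,c_0(n+1)$ and $(\LL_2')^N=c_0^{-1}(n+1)\,\Delta\,L_2^N\,\iota_\La\Delta^{-1}\,c_0(n+1)$. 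Subtracting and inserting \eqref{eq:consL'}, the operator $L_1^M\iota_{\La^{-1}}\Delta^{-1}-L_2^N\iota_\La\Delta^{-1}$ collapses to $\sum_{l\in\Z}\sum_i q_{i,n}\La^l r_{i,n+1}$ with no leftover $\Delta^{-1}$, and one more index shift gives $\Delta\big(\sum_l q_{i,n}\La^l r_{i,n+1}\big)c_0(n+1)=\sum_l\Delta(q_{i,n})\La^l\big(r_{i,n+1}c_0(n+1)\big)$. Hence $(\LL_1')^M=(\LL_2')^N+\sum_{l\in\Z}\sum_i\big(c_0^{-1}(n+1)\Delta(q_{i,n})\big)\La^l\big(c_0(n+1)r_{i,n+1}\big)$, which is \eqref{eq:consLL'} with $\tilde q_{i,n}'=c_0^{-1}(n+1)\Delta(q_{i,n})$ and $\tilde r_{i,n}'=c_0(n)r_{i,n}$; by Lemma \ref{lem:eigenfunc} and \eqref{eq:TodaLax} the pair $(\LL_1',\LL_2')$ is then a GBT Lax pair.

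The routine parts are the two index-shift computations for the rank-one tails and checking that every intermediate product lives in the appropriate algebra ($\mathcal{A}[[\La^{-1}]][\La]$ or $\mathcal{A}[[\La]][\La^{-1}]$). I expect the step needing the most care to be the telescoping in the second link: one must track which regularization, $\iota_{\La^{-1}}$ or $\iota_\La$, sits next to each occurrence of $\Delta^{-1}$, so that the cancellations $\iota_{\La^{\mp1}}\Delta^{-1}\Delta=1$ are legitimate and $(\LL_1')^M$, $(\LL_2')^N$ come out carrying exactly the $\iota$-factors that appear in \eqref{eq:consL'} — which is precisely the reason the two Miura links are handled, respectively, through Corollary \ref{coro:consL} and through \eqref{eq:consL'}.
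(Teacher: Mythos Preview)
Your proposal is correct and follows essentially the same route as the paper: for $T_1$ you conjugate the Corollary~\ref{coro:consL} form of the constraint and use the reindexing $\sum_l q_{i,n}\La^l r_{i,n+1}\Delta=-\sum_l q_{i,n}\La^l\Delta(r_{i,n})$, while for $T_2$ you conjugate the $\iota$-form \eqref{eq:consL'} after telescoping $(\LL_a')^k=T_2L_a^k\iota_{\La^{\mp1}}T_2^{-1}$, then invoke Lemma~\ref{Asum}; the identifications of $\tilde q,\tilde r,\tilde q',\tilde r'$ via Lemma~\ref{lem:eigenfunc} match the paper's exactly (up to the harmless sign you noted).
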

\begin{proof}
    Recalling the first equation of \eqref{eq:ConsL1}, we get
    \begin{align*}
        L_1^M-L_2^N=&\sum_{l\in\Z}\sum_{i=1}^{m}q_{i,n}\big(\La^{l+1}\cdot r_{i,n}-\La^{l}\cdot r_{i,n+1}\big)\\
        =&\sum_{l\in\Z}\sum_{i=1}^{m}q_{i,n}\La^{l}\cdot\big(r_{i,n}- r_{i,n+1}\big)
        =-\sum_{l\in\Z}\sum_{i=1}^{m}q_{i,n}\La^{l}\cdot \Delta(r_{i,n}).
    \end{align*}
    For the first transformation $T_1$,  
    \[
        c_0^{-1}\cdot L_1^M\cdot c_0=c_0^{-1}\cdot L_2^N\cdot c_0-\sum_{l\in\Z}\sum_{i=1}^{m}c_0^{-1}\cdot q_{i,n}\La^l\cdot \Delta(r_{i,n})\cdot c_0.
    \]
    Therefore
    \[
        \LL_1^M=\LL_2^{N}+\sum_{l\in \mathbb Z}\sum_{i=1}^{m}\tilde{q}_{i,n}\Lambda^l\tilde{r}_{i,n},
    \]
    where $\tilde{q}_{i,n}=c_0^{-1}\cdot q_{i,n}$ is Toda eigenfunction and $\tilde{r}_{i,n}=\Delta(r_{i,n})\cdot c_0$ is Toda adjoint eigenfunction according to Lemma \ref{lem:eigenfunc}.

    For the second transformation $T_2$, recalling Lemma \ref{Asum}, we have 
    \begin{align*}
        &c_0^{-1}(n+1)\Delta\cdot L_1^M\iota_{\La^{-1}}\Delta^{-1}\cdot c_0(n+1)\\
        &\qquad =c_0^{-1}(n+1)\Delta\cdot L_2^N\iota_{\La}\Delta^{-1}\cdot c_0(n+1)+\sum_{l\in\Z}\sum_{i=1}^{m}c_0^{-1}(n+1)\Delta(q_{i,n})\La^lr_{i,n+1}c_0(n+1).
    \end{align*}
    Thus 
    \[
        \LL_1'^M=\LL_2'^N+\sum_{l\in\Z}\sum_{i=1}^{m}\tilde{q}_{i,n}'\Lambda^l\tilde{r}_{i,n+1}',
    \]
    where $\tilde{q}_{i,n}'=c_0^{-1}(n+1)\Delta(q_{i,n})$ is Toda eigenfunction and $\tilde{r}_{i,n+1}'=c_0(n+1)r_{i,n+1}$ is Toda adjoint eigenfunction by using Lemma \ref{lem:eigenfunc}.
\end{proof}

\section{Equivalent conditions on tau functions for the GBMT hierarchy}\label{sect:ConsInTau}

In this section, we establish the equivalent constraints on tau functions for the GBT and GBMT hierarchies. Recall that the constraints on Lax operators of the GBMT and GBT hierarchies are given as follows.
\begin{itemize}
    \item The GBMT hierarchy: 
    \begin{equation*}
        L_1(n)^M=L_2(n)^N+\sum_{l\in\Z}\sum_{i=1}^{m}q_{i,n}\cdot\La^l\cdot r_{i,n+1}\cdot\Delta,\qquad \big(L_1^M(n)+L_2^N(n)\big)(1)=0.
    \end{equation*}
    \item The GBT hierarchy: 
    \begin{equation*}
        \LL_1(n)^M=\LL_2(n)^{N}+\sum_{j\in \mathbb Z}\sum_{i=1}^{m}\tilde{q}_{i,n}\cdot\Lambda^j\cdot\tilde{r}_{i,n}.
    \end{equation*} 
\end{itemize}

Firstly let us discuss the case of the GBMT hierarchy.

\begin{proposition}\label{prop:Fayqr}
    For the mToda wave functions $\Psi_a$, the mToda adjoint wave functions $\Psi^*_a$, the mToda eigenfunctions $q_{i,n}$ and the mToda adjoint eigenfunctions $r_{i,n}$ defined in \eqref{eq:PsiPsi*}--\eqref{eq:rhosigma} satisfying \eqref{WavePhi1}--\eqref{WavePhi3}, the following identities hold:
    \begin{equation}\label{I=Delta}
        \begin{aligned}
            &\oint_{C_\infty}\frac{dz}{2\pi {\bf i}z}\iota_{\La^{-1}}\Delta^{-1}\Big(r_{i,n+1}(\bft)\Delta\big(\Psi_1(n,\bft,z)\big)\Big)\cdot\iota_{\La}\Delta^{-1} \Big(q_{i,n'}(\bft')\Delta\big(\Psi_1^*(n',\bft',z)\big)\Big)\\
            &+\oint_{C_0}\frac{dz}{2\pi {\bf i}z}\iota_{\La}\Delta^{-1}\Big(r_{i,n+1}(\bft)\Delta\big(\Psi_2(n,\bft,z)\big)\Big)\cdot\iota_{\La^{-1}}\Delta^{-1}\Big(q_{i,n'}(\bft')\Delta\big(\Psi_2^*(n',\bft',z)\big)\Big)\\
            =&\Delta^{-1}\Big(\Delta(q_{i,n}(\bft)) r_{i,n+1}(\bft)\Big)+\Delta^{-1}\Big(q_{i,n'}(\bft') \Delta(r_{i,n'+1}(\bft'))\Big).
        \end{aligned}
    \end{equation}
    and 
    \begin{equation}\label{eq:Deltaqr}
        \begin{aligned}
            &\Delta\big(q_{i,n}(\bft)\big) r_{i,n+1}(\bft)+q_{i,n}(\bft-[\la^{-1}]_1) \Delta\big(r_{i,n+1}(\bft-[\la^{-1}]_1)\big)=\Delta\Big(q_{i,n}(\bft)r_{i,n}(\bft-[\la^{-1}]_1)\Big),\\
            &\Delta\big(q_{i,n}(\bft)\big) r_{i,n+1}(\bft)+q_{i,n+1}(\bft-[\la]_2) \Delta\big(r_{i,n+2}(\bft-[\la]_2)\big)=\Delta\Big(q_{i,n}(\bft)r_{i,n+1}(\bft-[\la]_2)\Big).
        \end{aligned}
    \end{equation}
\end{proposition}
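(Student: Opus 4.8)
The plan is to prove the integral identity \eqref{I=Delta} by a mixed-difference argument reducing it to the two spectral representations already established, and then to obtain the ``Fay''-type identities \eqref{eq:Deltaqr} from \eqref{I=Delta} by specializing the times and extracting a residue, as in the proof of Lemma \ref{Lem:eq}. For \eqref{I=Delta}, I would first apply the difference $\Delta$ in the variable $n$ to the left-hand side. Only the factors $\iota_{\La^{\mp1}}\Delta^{-1}\big(r_{i,n+1}(\bft)\Delta(\Psi_a(n,\bft,z))\big)$ depend on $n$, and since $\iota_{\La^{\pm1}}\Delta^{-1}$ are fixed series in $\La$ with $\Delta\cdot\iota_{\La^{\pm1}}\Delta^{-1}=1$, applying $\Delta_n$ collapses that factor to $r_{i,n+1}(\bft)\Delta(\Psi_a(n,\bft,z))$; writing $\Delta(\Psi_a(n,\bft,z))=\Psi_a(n+1,\bft,z)-\Psi_a(n,\bft,z)$ and pulling $r_{i,n+1}(\bft)$ out of the contour integrals, the two resulting brackets are the spectral representation \eqref{eq:q}/\eqref{WavePhi1} of $q_{i,m}(\bft)$ at $m=n+1$ and at $m=n$, so $\Delta_n(\mathrm{LHS})=r_{i,n+1}(\bft)\Delta(q_{i,n}(\bft))$, which equals $\Delta_n$ of the first term on the right of \eqref{I=Delta}. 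By the symmetric computation, applying $\Delta$ in $n'$ collapses the $q$-factor and, via the spectral representation \eqref{Wave1'}/\eqref{WavePhi2} of $r$ with the two arguments exchanged, makes $\Delta_{n'}$ of the left-hand side coincide with $\Delta_{n'}$ of the second term on the right.

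Consequently the difference of the two sides of \eqref{I=Delta} is independent of both $n$ and $n'$; I would remove the remaining $\bft,\bft'$-dependence either by verifying that the $t_p^{(a)}$-flows \eqref{eq:ParPsi}, \eqref{partialqr} act identically on both sides, or by expanding at $\bft'=\bft$ and reducing to an operator identity in $\mathcal{A}[\La^{-1},\La]$ via Lemma \ref{LemForD}, as in the proof of Theorem \ref{thm:Wave}. The surviving constant is then fixed to zero at $n'=n$, $\bft'=\bft$, where the $C_0$-integral vanishes identically (after inserting \eqref{eq:PsiPsi*}, \eqref{eq:rhosigma} its integrand is $z$ times a power series in $z$ and so has no $z^0$-term) while the $C_\infty$-integral equals $\tau_{1,n}(\bft)^{-1}\tau_{0,n}(\bft)^{-1}\sigma_{i,n}(\bft)\rho_{i,n}(\bft)=q_{i,n}(\bft)r_{i,n}(\bft)$, which is precisely the value of the right-hand side of \eqref{I=Delta} there because $\Delta(q_{i,n})r_{i,n+1}+q_{i,n}\Delta(r_{i,n})=\Delta(q_{i,n}r_{i,n})$.

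For \eqref{eq:Deltaqr}, I would specialize \eqref{I=Delta}: setting $n'=n$ and shifting $\bft'$ by $[\la^{-1}]_1$ makes the $C_0$-integral drop out by the residue identity \eqref{eq:Residue'} and collapses the $C_\infty$-integral to a single residue evaluation, and, after simplifying with the antidifference relations \eqref{eq:rPsi1}, \eqref{eq:qPsi1*} together with \eqref{eq:psi1psi1*} and regrouping through $a\Delta(b)=\Delta(ab)-\Delta(a)b_{n+1}$, this gives the first identity of \eqref{eq:Deltaqr}; shifting $\bft'$ by $[\la]_2$ instead, with the roles of the two contours interchanged, gives the second. Both can also be checked directly from the mToda bilinear relations \eqref{mToda1}--\eqref{mToda2} for the tau pairs $(\tau_{0,n},\tau_{1,n})$, $(\tau_{0,n},\rho_{i,n})$ and $(\sigma_{i,n},\tau_{1,n})$ after substituting $q_{i,n}=\rho_{i,n}/\tau_{1,n}$ and $r_{i,n}=\sigma_{i,n}/\tau_{0,n}$.

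The delicate point will be the bookkeeping around the $\Delta^{-1}$-operators in \eqref{I=Delta}: one must keep track of which of $\iota_{\La}\Delta^{-1}$ and $\iota_{\La^{-1}}\Delta^{-1}$ occupies each $C_\infty$- and $C_0$-slot, verify the decay in $n$ (respectively $n'$) that makes the termwise telescoping $\Delta\cdot\iota_{\La^{\pm1}}\Delta^{-1}=1$ legitimate on the relevant products of wave functions, and ensure that the single additive ``constant of integration'' --- simultaneously independent of $n$ and of $n'$ --- is exactly the one absorbed into the $\Delta^{-1}$ on the right-hand side, not a stray extra term. Once this is settled, everything reduces to the spectral representations and residue identities from Sections \ref{Sect:Lax} and \ref{sect:Lax->Bieq}.
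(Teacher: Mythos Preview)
Your plan coincides with the paper's proof: for \eqref{I=Delta} the paper likewise computes $\Delta_n I$ and $\Delta_{n'} I$ via the spectral representations \eqref{Wave} and \eqref{Wave1'}, and for \eqref{eq:Deltaqr} it specializes \eqref{I=Delta} at $n'=n$, $\bft'=\bft-[\la^{-1}]_1$ (respectively $n'=n+1$, $\bft'=\bft-[\la]_2$) and evaluates the resulting residue exactly as you outline. The only streamlining is that the paper does not separately fix the additive constant or argue about the $\bft,\bft'$-dependence --- it simply writes $I=\Delta^{-1}(\Delta(q_{i,n})r_{i,n+1})+f(n')$, determines $\Delta_{n'}f$, and lets the residual integration constant be absorbed into the intrinsically ambiguous $\Delta^{-1}$'s on the right, which suffices because \eqref{eq:Deltaqr} is obtained after applying $\Delta$.
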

\begin{proof}
    Denote the left hand side of \eqref{I=Delta} as $I(n,n',\bft,\bft')$. Recalling the spectral representations \eqref{Wave} and \eqref{Wave1'}, we have
    \begin{equation}\label{eq:DeltaI}
        \begin{aligned}
            &I(n+1,n',\bft,\bft')-I(n,n',\bft,\bft')=\Delta(q_{i,n}(\bft))\cdot r_{i,n+1}(\bft),\\
            &I(n,n'+1,\bft,\bft')-I(n,n',\bft,\bft')=q_{i,n'}(\bft')\cdot \Delta(r_{i,n'+1}(\bft')).
        \end{aligned}
    \end{equation}
    By the first equation in \eqref{eq:DeltaI}, there exists an unknown function $f(n')$, such that
    \[
        I=\Delta^{-1}\big(\Delta(q_{i,n}(\bft)) r_{i,n+1}(\bft)\big)+f(n').
    \]
    Then combining the second equation in \eqref{eq:DeltaI}, there is $\Delta f(n')=q_{i,n'}(\bft') \Delta(r_{i,n'+1}(\bft'))$, which leads to \eqref{I=Delta}. 

    Next let us prove \eqref{eq:Deltaqr}, and the following equation will be used
    \begin{equation}\label{eq:Residue}
        \Res_z\sum_{k=0}^{\infty}a_k(\la)z^{-k}\frac{1}{1-z/\la}=\la\sum_{k=1}^{\infty}a_k(\la)z^{-k}\Big|_{z=\la}.
    \end{equation}
    Let $\bft'=\bft-[\la^{-1}]_1$ and $n'=n$ for \eqref{I=Delta}. Recalling definitions \eqref{eq:PsiPsi*} and using Lemma \ref{Lem:eq}, we get
    \begin{align*}
        &\oint_{C_\infty}\frac{dz}{2\pi {\bf i}z}r_{i,n}(\bft-[z^{-1}]_1)q_{i,n}(\bft-[\la^{-1}]_1+[z^{-1}]_1)\Psi_1(n,\bft,z)\Psi_1^*(n,\bft-[\la^{-1}]_1,z)\\
        =&\oint_{C_\infty}\frac{dz}{2\pi {\bf i}z}r_{i,n}(\bft-[z^{-1}]_1)q_{i,n}(\bft-[\la^{-1}]_1+[z^{-1}]_1)\frac{\tau_{0,n}(\bft-[z^{-1}]_{1})}{\tau_{1,n}(\bft)}\frac{\tau_{1,n}(\bft-[\la^{-1}]_1+[z^{-1}]_{1})}{\tau_{0,n}(\bft-[\la^{-1}]_1)}\frac{1}{1-z/\la}\\
        =&q_{i,n}(\bft)r_{i,n}(\bft-[\la^{-1}]_1),
    \end{align*}
    where we have used \eqref{eq:Residue}, and
    \begin{align*}
        \oint_{C_0}\frac{dz}{2\pi {\bf i}}r_{i,n+1}(\bft-[z]_2)q_{i,n-1}(\bft-[\la^{-1}]_1+[z]_2)\Psi_2(n,\bft,z)\Psi_2^*(n,\bft-[\la^{-1}]_1,z)=0.
    \end{align*}
    Thus 
    \[
        \Delta^{-1}\Big(\Delta\big(q_{i,n}(\bft)\big) r_{i,n+1}(\bft)\Big)+\Delta^{-1}\Big(q_{i,n}(\bft-[\la^{-1}]_1) \Delta\big(r_{i,n+1}(\bft-[\la^{-1}]_1)\big)\Big)=q_{i,n}(\bft)r_{i,n}(\bft-[\la^{-1}]_1).
    \]
    The second identity can be proved in a similar way if $\bft'=\bft-[\la]_2,\ n'=n+1$ is considered in \eqref{I=Delta}.
\end{proof}

\begin{theorem}\label{Thm:DDelta}
    Let $D_{M,N}=\p_{t_M^{(1)}}+\p_{t_N^{(2)}}$, the GBMT constraint 
    \begin{equation}
        L_1(n)^M=L_2(n)^N+\sum_{l\in\Z}\sum_{i=1}^{m}q_{i,n}\cdot\La^l\cdot r_{i,n+1}\cdot\Delta,\qquad \big(L_1^M(n)+L_2^N(n)\big)(1)=0.\label{eq:consL1'}
    \end{equation}
    is equivalent to 
    \begin{align}
        D_{M,N}\Delta\log\tau_{0,n}=-\sum_{i=1}^{m}q_{i,n}\Delta(r_{i,n}),\qquad D_{M,N}\Delta\log\tau_{1,n}=\sum_{i=1}^{m}\Delta(q_{i,n})r_{i,n+1},\label{eq:Dlogtau}
    \end{align}
    where $(\tau_{0,n},\tau_{1,n})$ is tau pair of the mToda hierarchy.
\end{theorem}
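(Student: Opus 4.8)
The plan is to prove the equivalence in both directions, using the bilinear equation \eqref{WavePhi3} (equivalently \eqref{cmToda1}) as a pivot, since \eqref{eq:consL1'} has already been shown equivalent to it in Sections~\ref{Sect:Lax}--\ref{sect:Lax->Bieq}.

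For the implication \eqref{eq:consL1'} $\Rightarrow$ \eqref{eq:Dlogtau}, the idea is to express $\Delta\log\tau_{0,n}$ and $\Delta\log\tau_{1,n}$ through the leading coefficients of the mToda wave operators and then differentiate. Using \eqref{eq:PsiPsi*} one has $(S_1)_{[0]}=\tau_{0,n}/\tau_{1,n}$ and $(S_2)_{[0]}=\tau_{0,n+1}/\tau_{1,n}$, so that $\Delta\log\tau_{0,n}=\log(S_2)_{[0]}-\log(S_1)_{[0]}$ and $\Delta\log\tau_{1,n}=\log(S_2)_{[0]}-\La\log(S_1)_{[0]}$. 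Since the $\La^0$-component of $(\p_{t_p^{(a)}}S_b)S_b^{-1}$ equals $\p_{t_p^{(a)}}\log(S_b)_{[0]}$, the evolution equations \eqref{eq:pS2S1} together with Lemma~\ref{LemDelta} (to identify the $\La^0$-parts of $B_p^{(1)}=(L_1^p)_{\Delta,\ge1}$ and $B_p^{(2)}=(L_2^p)_{\Delta^*,\ge1}$) give, with $D_{M,N}=\p_{t_M^{(1)}}+\p_{t_N^{(2)}}$,
\[
D_{M,N}\log(S_1)_{[0]}=-(L_1^M)_{\ge0}(1)-(L_2^N)_{<0}(1),\qquad D_{M,N}\log(S_2)_{[0]}=-(L_1^M)_{\ge1}(1)-(L_2^N)_{\le0}(1),
\]
hence the two purely mToda identities
\[
D_{M,N}\Delta\log\tau_{0,n}=(L_1^M-L_2^N)_{[0]}(1),\qquad D_{M,N}\Delta\log\tau_{1,n}=-(L_1^M)_{\ge1}(1)-(L_2^N)_{\le0}(1)+\La\big((L_1^M)_{\ge0}(1)+(L_2^N)_{<0}(1)\big),
\]
which hold without any reference to the constraint. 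Finally one substitutes the constraint in the form of Corollary~\ref{coro:conL'}: since $B_p^{(a)}(1)=0$, all the $B_M^{(1)}$- and $B_N^{(2)}$-contributions cancel out of the truncations above, and reading off the $\La^0$-coefficients of $\iota_{\La^{\pm1}}\Delta^{-1}r_{i,n+1}\Delta$ leaves exactly $-\sum_i q_{i,n}\Delta(r_{i,n})$ and $\sum_i\Delta(q_{i,n})r_{i,n+1}$, which is \eqref{eq:Dlogtau}.

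For the converse, the first of the two mToda identities above shows that the $\tau_{0,n}$-equation of \eqref{eq:Dlogtau} is equivalent to matching the $\La^0$-component of the operator constraint, $(L_1^M-L_2^N)_{[0]}(1)=-\sum_i q_{i,n}\Delta(r_{i,n})$; combined with the $\tau_{1,n}$-equation and a shift in $n$ this recovers the diagonal case $\bft'=\bft$ of \eqref{WavePhi3}. The full \eqref{WavePhi3} then follows by propagation in $\bft'$: by \eqref{eq:ParPsi} and \eqref{partialqr}, both sides of \eqref{WavePhi3} evolve under $\p_{t_p^{(a)}}$ in the primed slot through the same operator $-\big(\Delta^{-1}B_p^{(a)*}\Delta\big)$, so equality on the diagonal is preserved; along the way one uses the tau-pair relations \eqref{cmToda2}--\eqref{cmToda3} (automatic once $q_i,r_i$ are mToda (adjoint) eigenfunctions, with $\rho_i=q_i\tau_1$, $\sigma_i=r_i\tau_0$) and the spectral identities \eqref{eq:Deltaqr} of Proposition~\ref{prop:Fayqr} to control the $n$-direction and to fix the integration constants. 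Invoking the equivalence \eqref{WavePhi3} $\Leftrightarrow$ \eqref{eq:consL1'} then completes the argument.

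The step I expect to be the main obstacle is this converse: on its face \eqref{eq:Dlogtau} looks weaker than \eqref{eq:consL1'}, since it pins down directly only the $\La^0$-part of $L_1^M-L_2^N$, so the essential point is that, because $(\tau_{0,n},\tau_{1,n})$ is a genuine mToda tau pair and $q_i,r_i$ are its eigenfunctions, nothing is actually lost. The cleanest route seems to be through the bilinear form, where the rigidity of the mToda wave functions and the propagation in $\bft'$ recover the full identity; the delicate bookkeeping is tracking the $\iota_{\La^{\pm1}}\Delta^{-1}$ expansions and the separate $C_\infty$ and $C_0$ contributions while matching the two sides.
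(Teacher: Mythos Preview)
Your forward direction \eqref{eq:consL1'}$\Rightarrow$\eqref{eq:Dlogtau} is correct and genuinely different from the paper's. You extract $\Delta\log\tau_{0,n}$ and $\Delta\log\tau_{1,n}$ from the leading coefficients $(S_a)_{[0]}$, differentiate using the Sato equations \eqref{eq:pS2S1}, and then read off the right-hand sides from Corollary~\ref{coro:conL'}; the computations check out. The paper instead lets $L_1^M$ and $L_2^N$ act on the wave functions $\Psi_1,\Psi_2$, uses Lemma~\ref{Lem:eq} to turn $q_{i,n}\iota_{\La^{\pm1}}\Delta^{-1}r_{i,n+1}\Delta$ into multiplication by $q_{i,n}(\bft)r_{i,n}(\bft-[z^{-1}]_1)$ (resp.\ $q_{i,n}(\bft)r_{i,n+1}(\bft-[z]_2)$), and then compares $z^0$-coefficients after taking logarithms. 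Your route avoids Lemma~\ref{Lem:eq} and is arguably more elementary; the paper's route gives more, namely the full $z$-dependent identities such as $D_{M,N}\log\tau_{0,n}(\bft-[z^{-1}]_1)-D_{M,N}\log\tau_{1,n}(\bft)=-\sum_i q_{i,n}(\bft)r_{i,n}(\bft-[z^{-1}]_1)$, which is exactly what is needed in the converse.

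Your converse has a real gap. The pure-mToda identities you derived show that \eqref{eq:Dlogtau} pins down only two scalar quantities at each $n$, essentially $(L_1^M-L_2^N)_{[0]}$ and one linear combination of truncated values. But ``the diagonal case $\bft'=\bft$ of \eqref{WavePhi3}'' for all $n,n'$ is, via Lemma~\ref{LemForD}, already the full operator identity \eqref{ConstS1S2}; you cannot recover it from two scalar relations plus a shift in $n$. If instead you mean only $n'=n$ and $\bft'=\bft$, then propagation in $\bft'$ still leaves $n'$ fixed, so you never reach general $n'$, and the vague reference to \eqref{eq:Deltaqr} ``to control the $n$-direction'' does not fill this in. The paper's converse avoids \eqref{WavePhi3} entirely: starting from \eqref{eq:Dlogtau} and applying Proposition~\ref{prop:Fayqr} (the identities \eqref{eq:Deltaqr}), one obtains directly the full shifted relations
\[
D_{M,N}\log\tau_{0,n}(\bft-[z^{-1}]_1)-D_{M,N}\log\tau_{1,n}(\bft)=-\sum_i q_{i,n}(\bft)\,r_{i,n}(\bft-[z^{-1}]_1)
\]
and its $[z]_2$-analogue; via \eqref{eq:PsiPsi*}, \eqref{LPhi1}--\eqref{eq:ParPsi} and Lemma~\ref{Lem:eq} these are recognised as $\big(B_M^{(1)}+B_N^{(2)}-L_1^M+\sum_i q_{i,n}\iota_{\La^{-1}}\Delta^{-1}r_{i,n+1}\Delta\big)(\Psi_1)=0$ (and similarly on $\Psi_2$), and since an operator in $\mathcal{A}[[\La^{-1}]][\La]$ annihilating $\Psi_1$ must vanish, this yields \eqref{eq:consL1L2} and hence \eqref{eq:consL1'}. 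The point is that the $z$-dependence in the shifted tau identity encodes all $\La^j$-components at once, not just $\La^0$; your strategy never produces this $z$-dependent identity and so cannot close the argument.
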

\begin{proof}
    Firstly let us prove \eqref{eq:consL1'}$\Rightarrow$\eqref{eq:Dlogtau}. Recalling Corollary \ref{coro:conL'}, the GBMT constraint \eqref{eq:consL1'} is equivalent to 
    \begin{equation}\label{eq:consL1L2}
        \begin{aligned}
            &L_1(n)^M=B_M^{(1)}(n)+B_N^{(2)}(n)+\sum_{i=1}^{m}q_{i,n}\cdot\iota_{\La^{-1}}\Delta^{-1}\cdot r_{i,n+1}\cdot\Delta,\\
            &L_2(n)^N=B_M^{(1)}(n)+B_N^{(2)}(n)+\sum_{i=1}^{m}q_{i,n}\cdot\iota_{\La}\Delta^{-1}\cdot r_{i,n+1}\cdot\Delta.
        \end{aligned}
    \end{equation}
    Then by using \eqref{LPhi1}--\eqref{eq:ParPsi} and Lemma \ref{Lem:eq}, we get 
    \[
        D_{M,N}(\Psi_1)=\Big(z^M-\sum_{i=1}^{m}q_{i,n}(\bft)r_{i,n}(\bft-[z^{-1}]_1)\Big)\cdot\Psi_1,
    \]
    Further using \eqref{Psi1}, there is
    \[
        D_{M,N}\log\tau_{0,n}(\bft-[z^{-1}]_1)-D_{M,N}\log\tau_{1,n}(\bft)=-\sum_{i=1}^{m}q_{i,n}(\bft)r_{i,n}(\bft-[z^{-1}]_1).
    \]
    Thus, from the coefficients of $z^0$, we obtain
    \begin{equation}
        D_{M,N}\log\tau_{0,n}(\bft)-D_{M,N}\log\tau_{1,n}(\bft)=-\sum_{i=1}^{m}q_{i,n}(\bft)r_{i,n}(\bft).\label{eq:Dlogn}
    \end{equation}
    For the same method for $\Psi_2$, there is 
    \begin{equation}
        D_{M,N}\log\tau_{0,n+1}(\bft)-D_{M,N}\log\tau_{1,n}(\bft)=-\sum_{i=1}^{m}q_{i,n}(\bft)r_{i,n+1}(\bft).\label{eq:Dlogn+1}
    \end{equation}
    Combining \eqref{eq:Dlogn} and \eqref{eq:Dlogn+1}, it is straightforward to get the first identity in \eqref{eq:Dlogtau}, and 
    \[
        D_{M,N}\Delta\log\tau_{1,n}(\bft)=\sum_{i=1}^{m}\Big(\Delta\big(q_{i,n}(\bft)r_{i,n}(\bft)\big)-q_{i,n}\Delta(r_{i,n})\Big).
    \]
    Notice that $\Delta(q_{i,n}r_{i,n})-q_{i,n}\Delta(r_{i,n})=\Delta(q_{i,n})r_{i,n+1}$, so we get the second identity in \eqref{eq:Dlogtau}.

    Conversely, using \eqref{eq:Dlogtau} and Proposition \ref{prop:Fayqr}, we get
    \begin{align*}
        &D_{M,N}\log\tau_{0,n}(\bft-[z^{-1}]_1)-D_{M,N}\log\tau_{1,n}(\bft)=-\sum_{i=1}^{m}q_{i,n}(\bft)r_{i,n}(\bft-[z^{-1}]_1),\\
        &D_{M,N}\log\tau_{0,n+1}(\bft-[z]_2)-D_{M,N}\log\tau_{1,n}(\bft)=-\sum_{i=1}^{m}q_{i,n}(\bft)r_{i,n+1}(\bft-[z]_2).
    \end{align*}
    Then, recalling \eqref{LPhi1}--\eqref{eq:ParPsi}, \eqref{eq:PsiPsi*} and Lemma \ref{Lem:eq}, we obtain
    \begin{align*}
        \Big((L_1^M)_{\Delta,\geq1}+(L_2^N)_{\Delta^*,\geq1}\Big)(\Psi_1)=\Big(L_1^M-\sum_{i=1}^{m}q_{i,n}(\bft)\iota_{\La^{-1}}\Delta^{-1}r_{i,n+1}(\bft)\Delta\Big)(\Psi_1),\\
        \Big((L_1^M)_{\Delta,\geq1}+(L_2^N)_{\Delta^*,\geq1}\Big)(\Psi_2)=\Big(L_2^M-\sum_{i=1}^{m}q_{i,n}(\bft)\iota_{\La}\Delta^{-1}r_{i,n+1}(\bft)\Delta\Big)(\Psi_2).
    \end{align*}
    Thus \eqref{eq:consL1L2} holds, where we have used the fact that if $A(\Psi_1)=0$ (or $B(\Psi_2)=0$) for $A\in\mathcal{A}[[\La^{-1}]][\La]$ (or $B\in\mathcal{A}[[\La]][\La^{-1}]$), then $A=0$ (or $B=0$). It means that the GBMT constraint \eqref{eq:consL1'} holds by Corollary \ref{coro:conL'}.
\end{proof}

\begin{theorem}
    The GBT constraint 
    \begin{equation}
        \LL_1(n)^M=\LL_2(n)^{N}+\sum_{j\in \mathbb Z}\sum_{i=1}^{m}\tilde{q}_{i,n}\cdot\Lambda^j\cdot\tilde{r}_{i,n}\label{eq:consLL''}
    \end{equation} 
    is equivalent to 
    \begin{equation}\label{eq:DDeltatau}
        D_{M,N}\Delta\big(\log\tau_n(\bft)\big)=\sum_{i=1}^{m}\tilde{q}_{i,n}(\bft)\tilde{r}_{i,n}(\bft).
    \end{equation}
    Here $\tau_n(\bft)$ is the Toda tau function.
\end{theorem}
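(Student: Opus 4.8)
Since the Lax equations \eqref{eq:TodaLax} are exactly the Toda hierarchy, there exist Toda wave functions $\Phi_1(n,\bft,z),\,\Phi_2(n,\bft,z)$ with
\[
    \LL_1(\Phi_1)=z\Phi_1,\qquad \LL_2(\Phi_2)=z^{-1}\Phi_2,\qquad \p_{t_p^{(a)}}\Phi_b=\mathcal{B}_p^{(a)}(\Phi_b)\quad(a,b=1,2),
\]
where $\mathcal{B}_p^{(1)}=(\LL_1^p)_{\ge0}$, $\mathcal{B}_p^{(2)}=(\LL_2^p)_{<0}$, and (see \cite{Takasaki2018,RuiCheng2024,LiuYue2024})
\[
    \Phi_1=\frac{\tau_n(\bft-[z^{-1}]_1)}{\tau_n(\bft)}\,z^n e^{\xi(\bft^{(1)},z)},\qquad
    \Phi_2=\frac{\tau_{n+1}(\bft-[z]_2)}{\tau_n(\bft)}\,z^n e^{\xi(\bft^{(2)},z^{-1})}.
\]
First I would record the Toda counterpart of Corollary \ref{coro:conL'}: splitting \eqref{eq:consLL''} into the parts of non-negative and of negative $\La$-degree and using $\sum_{j<0}\La^j=\iota_{\La^{-1}}\Delta^{-1}$, $\sum_{j\ge0}\La^j=-\iota_{\La}\Delta^{-1}$, one gets
\[
    \LL_1^M=\mathcal{B}_M^{(1)}+\mathcal{B}_N^{(2)}+\sum_{i=1}^{m}\tilde q_{i,n}\,\iota_{\La^{-1}}\Delta^{-1}\,\tilde r_{i,n},\qquad
    \LL_2^N=\mathcal{B}_M^{(1)}+\mathcal{B}_N^{(2)}+\sum_{i=1}^{m}\tilde q_{i,n}\,\iota_{\La}\Delta^{-1}\,\tilde r_{i,n};
\]
conversely, since $\iota_{\La^{-1}}\Delta^{-1}-\iota_{\La}\Delta^{-1}=\sum_{j\in\Z}\La^j$, subtracting these two identities returns \eqref{eq:consLL''}. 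Thus \eqref{eq:consLL''} is equivalent to this pair of identities.

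\textbf{From the Lax constraint to the tau identity.} I would apply $D_{M,N}=\p_{t_M^{(1)}}+\p_{t_N^{(2)}}$ to $\Phi_2$. Using $D_{M,N}\Phi_2=(\mathcal{B}_M^{(1)}+\mathcal{B}_N^{(2)})(\Phi_2)$, the second operator identity above, and $\LL_2^N(\Phi_2)=z^{-N}\Phi_2$,
\[
    D_{M,N}\Phi_2=z^{-N}\Phi_2-\sum_{i=1}^{m}\tilde q_{i,n}\,\iota_{\La}\Delta^{-1}\big(\tilde r_{i,n}\Phi_2\big).
\]
Dividing by $\Phi_2$ (the $e^{\xi(\bft^{(2)},z^{-1})}$ cancels) turns the left side into $z^{-N}+D_{M,N}\log\dfrac{\tau_{n+1}(\bft-[z]_2)}{\tau_n(\bft)}$. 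Since $\iota_{\La}\Delta^{-1}=-\sum_{k\ge0}\La^k$ and $\Phi_2(n+k,\bft,z)/\Phi_2(n,\bft,z)$ is $z^k$ times a power series in $z$ with nonzero constant term, the coefficient of $z^0$ in $\iota_{\La}\Delta^{-1}(\tilde r_{i,n}\Phi_2)/\Phi_2$ is exactly $-\tilde r_{i,n}(\bft)$, while the coefficient of $z^0$ on the left is $D_{M,N}\big(\log\tau_{n+1}(\bft)-\log\tau_n(\bft)\big)=D_{M,N}\Delta\log\tau_n(\bft)$. Comparing the two yields $D_{M,N}\Delta\log\tau_n(\bft)=\sum_{i}\tilde q_{i,n}(\bft)\tilde r_{i,n}(\bft)$, which is \eqref{eq:DDeltatau}.

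\textbf{From the tau identity to the Lax constraint.} This mirrors the converse half of Theorem \ref{Thm:DDelta}. The coefficient identity \eqref{eq:DDeltatau} must first be promoted to the two full generating identities
\[
    D_{M,N}\Phi_1=\Big(z^M-\textstyle\sum_{i}\tilde q_{i,n}\,\iota_{\La^{-1}}\Delta^{-1}(\tilde r_{i,n}\Phi_1)/\Phi_1\Big)\Phi_1,\qquad
    D_{M,N}\Phi_2=\Big(z^{-N}-\textstyle\sum_{i}\tilde q_{i,n}\,\iota_{\La}\Delta^{-1}(\tilde r_{i,n}\Phi_2)/\Phi_2\Big)\Phi_2,
\]
in which the potentials $\iota_{\La^{\pm1}}\Delta^{-1}(\tilde r_{i,n}\Phi_a)$ are evaluated in closed form in terms of $\tilde r_{i,n}$ at shifted times; this is where the Toda analogues of Lemma \ref{Lem:eq} and Proposition \ref{prop:Fayqr} are used, established in \cite{LiuYue2024} (or directly by the same residue computation as in the proof of Lemma \ref{Lem:eq}: put $(\bft',n')=(\bft-[\la^{-1}]_1,n)$, resp.\ $(\bft-[\la]_2,n+1)$, in the Toda bilinear identity and use \eqref{eq:Residue}). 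Granting these, combine $D_{M,N}\Phi_1=(\mathcal{B}_M^{(1)}+\mathcal{B}_N^{(2)})(\Phi_1)$ with $\LL_1^M(\Phi_1)=z^M\Phi_1$ to get
\[
    \Big(\LL_1^M-\mathcal{B}_M^{(1)}-\mathcal{B}_N^{(2)}-\sum_{i=1}^{m}\tilde q_{i,n}\,\iota_{\La^{-1}}\Delta^{-1}\,\tilde r_{i,n}\Big)(\Phi_1)=0.
\]
This operator lies in $\mathcal{A}[[\La^{-1}]][\La]$ (in fact has only negative powers of $\La$), and an operator in that class annihilating $\Phi_1$ for all $z$ must vanish (as used in the proof of Theorem \ref{Thm:DDelta}); hence the first operator identity of the opening paragraph holds, and $\Phi_2$ yields the second one the same way. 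Subtracting them recovers \eqref{eq:consLL''}.

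\textbf{Main obstacle.} The only non-routine step is the promotion in the converse — passing from the single coefficient identity \eqref{eq:DDeltatau} to the full generating identities in the spectral parameter $z$ — which is exactly where the Toda versions of Lemma \ref{Lem:eq} and Proposition \ref{prop:Fayqr} enter; the operator bookkeeping of the opening paragraph and the $z^0$-extraction in the first direction are routine.
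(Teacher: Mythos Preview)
Your proposal is correct and follows essentially the same route as the paper, which declares the proof ``similar to the proof of Theorem \ref{Thm:DDelta}'' and lists precisely the Toda analogues of the spectral representations, of Lemma \ref{Lem:eq}, and of Proposition \ref{prop:Fayqr} that you invoke for the converse direction. Your forward direction is slightly streamlined (you extract the $z^0$ coefficient directly from $\iota_{\La}\Delta^{-1}(\tilde r_{i,n}\Phi_2)/\Phi_2$ rather than first using the Toda analogue of Lemma \ref{Lem:eq}), but this is a harmless shortcut, and the splitting of \eqref{eq:consLL''} into the pair of identities $\LL_a^{\,\cdot}=\mathcal{B}_M^{(1)}+\mathcal{B}_N^{(2)}+\sum_i\tilde q_{i,n}\,\iota_{\La^{\mp1}}\Delta^{-1}\,\tilde r_{i,n}$ is exactly the Toda counterpart of Corollary \ref{coro:conL'} used in Theorem \ref{Thm:DDelta}.
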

\begin{proof}
    The proof is similar to the proof of Theorem \ref{Thm:DDelta}. Here we only give a few key formulas in the proof, that is,
    \begin{align*}
        &\begin{aligned}
            q_{i,n}(\bft)=&-\oint_{C_\infty}\frac{dz}{2\pi {\bf i}z}\Phi_1(n,\bft,z)\cdot\iota_{\La}\Delta^{-1}\Big(\Phi^*_1(n',\mathbf{t}',z)\cdot q_{i,n'}(\mathbf{t}')\Big)\\
            &+\oint_{C_0}\frac{dz}{2\pi {\bf i}z}\Phi_2(n,\bft,z)\cdot\iota_{\La^{-1}}\Delta^{-1}\Big(\Phi_2^*(n',\mathbf{t}',z)\cdot q_{i,n'}(\mathbf{t}')\Big),
        \end{aligned}\\
        &\begin{aligned}
            r_{i,n}(\bft)=&\oint _{C_\infty}\frac{dz}{2\pi {\bf i}z}\Phi_1^*(n,\bft,z)\cdot\iota_{\La^{-1}}\Delta^{-1}\Big(\Phi_1(n',\mathbf{t}',z)\cdot r_{i,n'}(\mathbf{t}')\Big)\\
            &-\oint_{C_0}\frac{dz}{2\pi {\bf i}z}\Phi_2^*(n,\bft,z)\cdot \iota_{\La}\Delta^{-1}\Big(\Phi_2(n',\mathbf{t}',z)\cdot r_{i,n'}(\mathbf{t}')\Big).
        \end{aligned}
    \end{align*}
    and 
    \begin{align*}
        &q_{i,n}(\mathbf{t}+[z^{-1}]_1)\Phi_1^*(n-1,\mathbf{t},z)=-z\cdot\iota_{\La}\Delta^{-1}\Big(\Phi^*_1(n,\mathbf{t},z)\cdot q_{i,n}(\mathbf{t})\Big),\\
        &q_{i,n-1}(\mathbf{t}+[z]_2)\Phi_2^*(n-1,\mathbf{t},z)=\iota_{\La^{-1}}\Delta^{-1}\Big(\Phi_2^*(n,\mathbf{t},z)\cdot q_{i,n}(\mathbf{t})\Big),\\
        &r_{i,n-1}(\mathbf{t}-[z^{-1}]_1)\Phi_1(n,\mathbf{t},z)=z\cdot\iota_{\La^{-1}}\Delta^{-1}\Big(\Phi_1(n,\mathbf{t},z)\cdot r_{i,n}(\mathbf{t})\Big),\\
        &r_{i,n}(\mathbf{t}-[z]_2)\Phi_2(n,\mathbf{t},z)=-\iota_{\La}\Delta^{-1}\Big(\Phi_2(n,\mathbf{t},z)\cdot r_{i,n}(\mathbf{t})\Big).
    \end{align*}
    Similarly,
    \begin{align*}
        &\oint_{C_\infty}\frac{dz}{2\pi {\bf i}z}\iota_{\La^{-1}}\Delta^{-1}\Big(\Phi_1(n,\mathbf{t},z)\cdot \tilde{r}_{i,n}(\mathbf{t})\Big)\iota_{\La}\Delta^{-1}\Big(\Phi^*_1(n',\mathbf{t}',z)\cdot \tilde{q}_{i,n'}(\mathbf{t}')\Big)\\
        &-\oint_{C_0}\frac{dz}{2\pi {\bf i}z}\iota_{\La}\Delta^{-1}\Big(\Phi_2(n,\mathbf{t},z)\cdot \tilde{r}_{i,n}(\mathbf{t})\Big)\iota_{\La^{-1}}\Delta^{-1}\Big(\Phi_2^*(n',\mathbf{t}',z)\cdot \tilde{q}_{i,n'}(\mathbf{t}')\Big)\\
        =&-\Delta^{-1}\Big(\tilde{q}_{i,n}(\bft)\tilde{r}_{i,n}(\bft)\Big)+\Delta'^{-1}\Big(\tilde{q}_{i,n'}(\bft')\tilde{r}_{i,n'}(\bft')\Big),
    \end{align*}
    and 
    \begin{align*}
        &\Delta^{-1}\Big(\tilde{q}_{i,n}(\bft-[z^{-1}]_1)\tilde{r}_{i,n}(\bft-[z^{-1}]_1)\Big)-\Delta^{-1}\Big(\tilde{q}_{i,n}(\bft)\tilde{r}_{i,n}(\bft)\Big)=-z^{-1}\tilde{q}_{i,n}(\bft)\tilde{r}_{i,n-1}(\bft-[z^{-1}]_1),\\
        &\Delta^{-1}\Big(\tilde{q}_{i,n+1}(\bft-[z]_2)\tilde{r}_{i,n+1}(\bft-[z]_2)\Big)-\Delta^{-1}\Big(\tilde{q}_{i,n}(\bft)\tilde{r}_{i,n}(\bft)\Big)=\tilde{q}_{i,n}(\bft)\tilde{r}_{i,n}(\bft-[z]_2).
    \end{align*}
    Here $\Phi_a$ are Toda wave functions and $\Phi_a^*$ are adjoint wave functions, defined by 
    \begin{align*}
        &\Phi_1(n,\mathbf{t},z)=\frac{\tau_{n}(\mathbf{t}-[z^{-1}]_1)}{\tau_{n}(\mathbf{t})}z^ne^{\xi(t^{(1)},z)},\quad \Phi_2(n,\mathbf{t},z)=\frac{\tau_{n+1}(\mathbf{t}-[z]_2)}{\tau_{n}(\mathbf{t})}z^ne^{\xi(t^{(2)},z^{-1})},\label{eqvfunc2}\\
        &\Phi_1^*(n,\mathbf{t},z)=\frac{\tau_{n+1}(\mathbf{t}+[z^{-1}]_1)}{\tau_{n+1}(\mathbf{t})}z^{-n}e^{-\xi(t^{(1)},z)},\quad \Phi_2^*(n,\mathbf{t},z)=\frac{\tau_{n}(\mathbf{t}+[z]_2)}{\tau_{n+1}(\mathbf{t})}z^{-n}e^{-\xi(t^{(2)},z^{-1})}.
    \end{align*} 
    Note that the definition of the Toda adjoint wave functions $\Phi^*_a$ here is slightly different from that in \cite{LiuYue2024}. The remaining proof is similar to Theorem \ref{Thm:DDelta}.
\end{proof}

\section{Conclusions and Discussions}\label{sect:Conclusions}

In this paper, the GBMT hierarchy is constructed by imposing the following constraint on the mToda Lax operators $(L_1,L_2)$,
\[
    L_1(n)^M=L_2(n)^N+\sum_{l\in\Z}\sum_{i=1}^{m}q_{i,n}\La^lr_{i,n+1}\Delta,\qquad \big(L_1^M(n)+L_2^N(n)\big)(1)=0,
\]
which is equivalent to 
\begin{align}
        &L_1(n)^M=(L_1(n)^M)_{\Delta,\geq1}+(L_2(n)^N)_{\Delta^*,\geq1}+\sum_{i=1}^{m}q_{i,n}\cdot\iota_{\La^{-1}}\Delta^{-1}\cdot r_{i,n+1}\cdot\Delta,\label{eq:consL1L2'1}\\
        &L_2(n)^N=(L_1(n)^M)_{\Delta,\geq1}+(L_2(n)^N)_{\Delta^*,\geq1}+\sum_{i=1}^{m}q_{i,n}\cdot\iota_{\La}\Delta^{-1}\cdot r_{i,n+1}\cdot\Delta.\label{eq:consL1L2'2}
\end{align}
In terms of the mToda tau pair $(\tau_{0,n},\tau_{1,n})$, the GBMT constraint \eqref{eq:consL1L2'1}--\eqref{eq:consL1L2'2} means 
\[
    D_{M,N}\Delta\log\tau_{0,n}=-\sum_{i=1}^{m}q_{i,n}\Delta(r_{i,n}),\qquad D_{M,N}\Delta\log\tau_{1,n}=\sum_{i=1}^{m}\Delta(q_{i,n})r_{i,n+1}.
\]
The corresponding GBMT tau pair $(\tau_{0,n},\tau_{1,n})$ satisfy 
\begin{align}
    &\begin{aligned}\label{eq:bieqtau1}
        &\oint_{C_\infty}\frac{dz}{2\pi\textnormal{\bf i}z}z^{M+n-n'}\tau_{0,n}(\bft-[z^{-1}]_1)\tau_{1,n'}(\bft'+[z^{-1}]_1)e^{\xi(\bft^{(1)}-\bft'^{(1)},z)}\\
        &\quad +\oint_{C_0}\frac{dz}{2\pi\textnormal{\bf i}z}z^{N+n-n'+1}\tau_{0,n+1}(\bft-[z]_2)\tau_{1,n'-1}(\bft'+[z]_2)e^{\xi(\bft^{(2)}-\bft'^{(2)},z^{-1})}=\sum_{i=1}^{m}\rho_{i,n}(\bft)\sigma_{i,n'}(\bft'),
    \end{aligned}
\end{align}
In addition, it is found that the GBMT hierarchy is related to GBT hierarchy
\begin{equation}\label{eq:consGBT}
    \LL_1^M=\LL_2^{N}+\sum_{j\in \mathbb Z}\sum_{i=1}^{m}\tilde{q}_{i,n}\Lambda^j\tilde{r}_{i,n}
\end{equation}
via Miura links $\LL_aT_b=T_bL_a~(a,b=1,2)$ with $T_1=c_0^{-1}(n),~ T_2=c_0^{-1}(n+1)\Delta$. Moreover, the GBT constraint \eqref{eq:consGBT} is equivalent to 
\[
    D_{M,N}\Delta\big(\log\tau_n(\bft)\big)=\sum_{i=1}^{m}\tilde{q}_{i,n}(\bft)\tilde{r}_{i,n}(\bft),
\]
where $\tau_n(\bft)$ is the Toda tau function.

Here we would like to point out that the GBMT hierarchy is the generalization of the following systems.
\begin{itemize}
    \item The bigraded modified Toda hierarchy\cite{Yang2024},
    \[
        L_1^M=L_2^N,\qquad (L_1^M+L_2^N)(1)=0.
    \]
    \item The constrained modified KP hierarchy\cite{Cheng2018,WuCheng2022,Oevel1998},
    \[
        \fL^k=(\fL^k)_{\geq 1}+\sum_{j=1}^m \mathfrak{q}_j\p^{-1}\mathfrak{r}_j\p.
    \]
    Notice that $k=M+N$, if we let $\bft'^{(2)}=\bft^{(2)}, N=0, n'=n$ in \eqref{eq:bieqtau1}, then we get the bilinear equation \eqref{eq:consmKPBiEq1} of the cmKP hierarchy. Thus the GBMT hierarchy can be viewed as the two-component generalization of the cmKP hierarchy.
    \item The constrained modified discrete KP hierarchy \cite{Jian2018},
    \begin{equation}\label{eq:cmdKP}
        \bar{L}^k=(\bar{L}^k)_{\Delta,\geq1}+\sum_{i=1}^{m}q_i\cdot\iota_{\La^{-1}}\Delta^{-1}\cdot r_i\cdot \Delta.
    \end{equation}
    In fact if we let $N=0$ for the first equation of \eqref{eq:consL1L2'1}, we get 
    \[
        L_1(n)^M=(L_1(n)^M)_{\Delta,\geq1}+\sum_{i=1}^{m}q_{i,n}\cdot\iota_{\La^{-1}}\Delta^{-1}\cdot r_{i,n+1}\cdot\Delta,
    \]
    which is just the constrained modified discrete KP constraint \eqref{eq:cmdKP}.
\end{itemize}
Since the above integrable systems have many important applications in mathematical physics and integrable systems, we believe that the GBMT hierarchy should also be of great importance.
\\
\\

\noindent{\bf Acknowledgements}:

This work is supported by National Natural Science Foundation of China (Grant Nos. 12171472 and 12261072).\\

\noindent{\bf Conflict of Interest}:

 The authors have no conflicts to disclose.\\

\noindent{\bf Data availability}:

Date sharing is not applicable to this article as no new data were created or analyzed in this study.


\begin{thebibliography}{99}


\bibitem{AA2015'}A Alexandrov, Enumerative geometry, tau-functions and Heisenberg-Virasoro algebra. Comm. Math. Phys. 338 (2015) 195--249.


\bibitem{ABDKS20252} A Alexandrov, B Bychkov, P Dunin-Barkowski, M Kazarian and S Shadrin, 
Degenerate and irregular topological recursion.
Comm. Math. Phys. 406 (2025) 94.



\bibitem{Adle1999}M Adler and P van Moerbeke,
Vertex operator solutions to the discrete KP-hierarchy. Comm. Math. Phys. 203 (1999) 185-210.


\bibitem{ChenCheng2019}H Z Chen, L M Geng, N Li and J P Cheng,
Bilinear identities for the constrained modified KP hierarchy.
J. Nonlinear Math. Phys. 26 (2019) 240--254.

\bibitem{Cheng2018} J P Cheng, M H Li and K L Tian,
On the modified KP hierarchy: tau functions, squared eigenfunction symmetries and additional symmetries. J. Geom. Phys. 134 (2018) 19--37.


\bibitem{ChengY1992}Y Cheng, 
Constraints of the Kadomtsev-Petviashvili hierarchy. J. Math. Phys. 33 (1992) 3774--3782.

\bibitem{Cheng&Zhang1994}Y Cheng and Y J Zhang,
Bilinear equations for the constrained KP hierarchy. Inverse Probl. 10 (1994) L11--L17.


\bibitem{DJKM1983}E Date, M Kashiwara, M Jimbo and T Miwa, 
Transformation groups for soliton equations, in {\it Nonlinear integrable systems-classical theory and quantum theory}, World Scientific, Singapore, 1983, pp. 39--119.

\bibitem{Geng2003}H H Dai and X G Geng, 
Explicit solutions of the 2+1-dimensional modified Toda lattice through straightening out of the relativistic Toda flows. J. Phys. Soc. Japan 72 (2003), 3063--3069.

\bibitem{Guan2025} W C Guan, S Wang, W J Rui and J P Cheng,
Lax structure and tau function for large BKP hierarchy. Lett. Math. Phys. 115 (2025) 1.

\bibitem{Harnad2021} J Harnad and F Balogh, 
Tau functions and their applications. Cambridge University Press, Cambridge, 2021.


\bibitem{Hirota2004}R Hirota, 
The direct method in soliton theory. Cambridge University Press, Cambridge, 2004.

\bibitem{Jian2018}S Q Jian and J P Cheng, 
The squared eigenfunction symmetries of the discrete KP and modified discrete KP hierarchies under the Miura transformations.
Modern Phys. Lett. B 32 (2018) 1850326.

\bibitem{JimboMiwa1983}M Jimbo and T Miwa, 
Solitons and infinite dimensional Lie algebras. Publ. Res. Inst. Math. Sci. 19 (1983) 943--1001.


\bibitem{Kac1998}V  Kac and J van de Leur, 
The geometry of spinors and the multicomponent BKP and DKP hierarchies, in \textit{The bispectral problem}, Amer. Math. Soc., Providence, 1998, pp. 159--202.

\bibitem{Kac2003}V  Kac and J van de Leur,
The $n$-component KP hierarchy and representation theory. J. Math. Phys. 44 (2003) 3245--3293.

\bibitem{Kac2018} V  Kac and J van de Leur,
Equivalence of formulations of the MKP hierarchy and its polynomial tau-functions. Jpn. J. Math. 13 (2018) 235--271.


\bibitem{Kac2023}V Kac  and J van de Leur, 
Multicomponent KP type hierarchies and their reductions, associated to conjugacy classes of Weyl groups of classical Lie algebras. J. Math. Phys. 64 (2023) 091702.

\bibitem{Krichever2022}I Krichever and A Zabrodin, Constrained Toda hierarchy and turning points of the Ruijsenaars--Schneider model. Lett. Math. Phys. 112 (2022) 23.

\bibitem{Krichever2023}I Krichever and A Zabrodin, Toda lattice with constraint of type B. Phys. D 453 (2023) 133827.

\bibitem{Kundu1995} A Kundu, W Strampp and W Oevel,
Gauge transformations of constrained KP flows: New integrable hierarchies. J. Math. Phy. 36 (1995) 2972--2984.


\bibitem{LiuYue2024}Y Liu, X J Yan, J B Wang and J P Cheng, 
Generalized bigraded Toda hierarchy. J. Math. Phy. 65 (2024) 103506.

\bibitem{JM2000}T Miwa, M Jimbo and E Date,
Solitons: differential equations, symmetries and infinite-dimensional algebras. Cambridge University Press, Cambridge, 2000.

\bibitem{Mulase1994}M Mulase,  
Algebraic theory of the KP equations, in \textit{Perspectives in mathematical physics}, International Press, MA, 1994, pp. 151--217.


\bibitem{Oevel1993}W Oevel and C Rogers, 
Gauge transformations and reciprocal links in 2+1 dimensions. Rev Math Phys. 5 (1993) 299--330.

\bibitem{Oevel1998}W Oevel and S Carillo, 
Squared eigenfunction symmetries for soliton equations. I, II. J Math Anal Appl. 217 (1998) 161--178, 179--199.


\bibitem{RuiCheng2024}W J Rui, W C Guan, Y Yang and J P Cheng,
The modified Toda hierarchy. SIGMA 20 (2024) 110.

\bibitem{Shaw1997} J C Shaw and M H Tu,
Miura and auto-B{\"a}cklund transformations for the cKP and cmKP hierarchies.
J. Math. Phys. 38 (1997) 5756--5773.

\bibitem{TakebeZabrodin2025} T Takebe and A Zabrodin,
Multi-component Toda lattice hierarchy. arXiv:2412.20122.

\bibitem{Takasaki2018}K Takasaki, 
Toda hierarchies and their applications. J. Phys. A 51 (2018) 203001.


\bibitem{vanMoerbeke1994}P van Moerbeke,
Integrable foundations of string theory, in {\it Lectures on Integrable systems},  World Scientific, River Edge, 1994, pp. 163--267.


\bibitem{WuCheng2022} Y Q Wu and J P Cheng, 
A new generalized constrained modified KP hierarchy. Math. Methods Appl. Sci. 46 (2023)  3510-3521.


\bibitem{Yang2024}Y Yang, W J Rui and J P Cheng,
Bigraded modified Toda hierarchy and its extensions. Phys. D 469 (2024) 134343.


\end{thebibliography}
\end{document}